\definecolor{ao}{rgb}{0.0,0.65,0.0}
\definecolor{lightred}{rgb}{0.95,0.45,0.5}
\newcommand{\tikzcircle}[2][red,fill=red]{\tikz[baseline=-0.5ex]\draw[#1,radius=#2] (0,0) circle ;}
\def\ie{\textit{i.e.}\xspace} 
\def\eg{\textit{e.g.}\xspace}
\def\etal{\textit{et~al.}\xspace}
\def\cf{\textit{cf.}\xspace}
\definecolor{grayL}{RGB}{255,242,248}
\definecolor{colora}{RGB}{147,112,219}
\definecolor{colorb}{RGB}{219,112,147}
\def\textcgnn{VIRGOS}
\def\cgnn{\textnormal{\textsc{Virgos}}\xspace}
\def\osmm{\textnormal{(SM)$^2$}\xspace}
\def\oom{\mathbin{\text{\textcircled{M}}}}
\def\adjmat{\mathsf{A}}
\def\weimat{\mathsf{W}}
\def\feamat{\mathsf{X}}
\def\adjout{\adjmat_{\mathsf{out}}}
\def\adjin{\adjmat_{\mathsf{in}}}
\def\graph{\mathcal{G}}
\def\graphout{\mathcal{G}_{\mathsf{out}}}
\def\graphin{\mathcal{G}_{\mathsf{in}}}
\def\vertex{\mathcal{V}}
\def\edge{\mathcal{E}}
\def\numnode{|\vertex|}
\def\numedge{|\edge|}
\def\numnonzero{t}
\def\ssp{\mathsf{OP}}
\def\SM{\mathsf{OSM}}
\def\promult{\Pi_{\mathsf{Mult}}}
\def\smm{\mathsf{(SM)^2}}
\def\prosmm{\Pi_{\smm}}
\def\l{\langle}
\def\r{\rangle}
\def\type{\mathsf{type}}
\def\raw{\mathsf{plain}}
\def\shared{\mathsf{shared}}
\def\bvec{\mathsf{B}}
\def\uvec{\mathsf{U}}
\def\xvec{\mathsf{X}}
\def\bitlen{L} 
\def\view{\mathsf{view}}
\def\ideal{\mathsf{Ideal}}
\def\real{\mathsf{Real}}
\def\Scal{\mathcal{S}}
\def\view{\mathsf{view}}
\def\output{\mathsf{output}}
\def\hyb{\mathsf{Hyb}}
\def\A{\mathcal{A}}
\def\F{\mathcal{F}}
\def\pp{\mathcal{P}}
\def\tt{\mathcal{T}}
\def\Rcal{\mathcal{R}}
\def\Abb{\mathbb{A}}
\def\Bbb{\mathbb{B}}
\def\Mbb{\mathbb{M}}
\def\Sbb{\mathbb{S}}
\def\Zbb{\mathbb{Z}}
\def\Asf{\mathsf{A}}
\def\Msf{\mathsf{M}}
\def\Psf{\mathsf{P}}
\def\Qsf{\mathsf{Q}}
\def\Xsf{\mathsf{X}}
\def\Ysf{\mathsf{Y}}
\def\Zsf{\mathsf{Z}}
\def\prf{\mathsf{PRF}}
\def\key{\mathsf{key}}
\def\ctr{\mathsf{ctr}}
\def\nrow{n_{\mathsf{row}}}
\def\ncol{n_{\mathsf{col}}} 
\def\gammaout{\Gamma_{\mathsf{out}}}
\def\gammain{\Gamma_{\mathsf{in}}} 
\newtheorem{theorem}{Theorem}
\newtheorem{definition}{Definition}
\newenvironment{protocol}[1][htb]{%
	\floatname{algorithm}{Protocol}
	\begin{algorithm}[#1]%
	}{\end{algorithm}}
\newenvironment{functionality}[1][htb]{%
	\floatname{algorithm}{Functionality}
	\begin{algorithm}[#1]%
	}{\end{algorithm}}
	\def\({}	
	\def\){}	
	\def\${}	
	\def\_{sub }
\begin{document}
\title{$\cgnn$: Secure Graph Convolutional Network on Vertically Split Data from Sparse Matrix Decomposition}
 
\author{Yu Zheng}
\authornote{Yu and Qizhi share the co-first authorship.}
\email{yu.zheng@uci.edu}
 \affiliation{%
 University of California, Irvine
  \country{ }
}

\author{Qizhi Zhang}
\authornotemark[1]
\email{zqz.math@gmail.com}
\affiliation{%
Morse Team, Ant Group
  \country{ }
}

\author{Lichun Li}
\email{lichun.llc@antgroup.com}
\affiliation{%
Morse Team, Ant Group
  \country{ }
}

\author{Kai Zhou}
\email{kaizhou@polyu.edu.hk}
\affiliation{%
Hong Kong Polytechnic University
  \country{ }
}

\author{Shan Yin}
\email{yinshan.ys@antgroup.com}
\affiliation{%
Morse Team, Ant Group
  \country{ }
}
 
\begin{abstract}
Securely computing graph convolutional networks (GCNs) is critical for applying their analytical capabilities to privacy-sensitive data like social/credit networks. 
Multiplying a sparse yet large adjacency matrix of a graph in GCN---a core operation in training/inference---poses a performance bottleneck in secure GCNs. 
Consider a GCN with $\numnode$ nodes and $\numedge$ edges; it incurs a large $O(\numnode^2)$ communication overhead.

Modeling bipartite graphs and leveraging the monotonicity of non-zero entry locations, we propose a co-design harmonizing secure multi-party computation (MPC) with matrix sparsity.
Our sparse matrix decomposition transforms an arbitrary sparse matrix into a product of structured matrices.
Specialized MPC protocols for oblivious permutation and selection multiplication are then tailored, enabling our secure sparse matrix multiplication (\osmm) protocol, optimized for secure multiplication of these structured matrices.
Together, these techniques take $O(\numedge)$ communication in constant rounds.
Supported by \osmm, we present \cgnn\footnote{
Vertically-split Inference \& Reasoning on GCNs Optimized by Sparsity. 
}, a secure $2$-party framework that is communication-efficient and memory-friendly on standard vertically-partitioned graph datasets. 
Performance of \cgnn has been empirically validated across diverse network conditions. 
\end{abstract}

\begin{CCSXML}
<ccs2012>
   <concept>
       <concept_id>10002978.10002979</concept_id>
       <concept_desc>Security and privacy~Cryptography</concept_desc>
       <concept_significance>500</concept_significance>
       </concept>
   <concept>
       <concept_id>10010147.10010257</concept_id>
       <concept_desc>Computing methodologies~Machine learning</concept_desc>
       <concept_significance>500</concept_significance>
       </concept>
 </ccs2012>
\end{CCSXML}

\ccsdesc[500]{Security and privacy~Cryptography}
\ccsdesc[500]{Computing methodologies~Machine learning}
\keywords{Secure Sparse Matrix Computation, Secure Graph Learning, Secure Multiparty Computation.}

\maketitle

\section{Introduction}
Graphs, representing structural data and topology, are widely used across various domains, such as social networks and merchandising transactions.
Graph convolutional networks (GCN)~\cite{iclr/KipfW17} have significantly enhanced model training on these interconnected nodes.
However, these graphs often contain sensitive information that should not be leaked to untrusted parties.
For example, companies may analyze sensitive demographic and behavioral data about users for applications ranging from targeted advertising to personalized medicine.
Given the data-centric nature and analytical power of GCN training, addressing these privacy concerns is imperative.

Secure multi-party computation (MPC)~\cite{crypto/ChaumDG87,crypto/ChenC06,eurocrypt/CiampiRSW22} is a critical tool for privacy-preserving machine learning, enabling mutually distrustful parties to collaboratively train models with privacy protection over inputs and (intermediate) computations.
While research advances (\eg,~\cite{ccs/RatheeRKCGRS20,uss/NgC21,sp21/TanKTW,uss/WatsonWP22,icml/Keller022,ccs/ABY318,folkerts2023redsec}) support secure training on convolutional neural networks (CNNs) efficiently, private GCN training with MPC over graphs remains challenging.

Graph convolutional layers in GCNs involve multiplications with a (normalized) adjacency matrix containing $\numedge$ non-zero values in a $\numnode \times \numnode$ matrix for a graph with $\numnode$ nodes and $\numedge$ edges.
The graphs are typically sparse but large.
One could use the standard Beaver-triple-based protocol to securely perform these sparse matrix multiplications by treating graph convolution as ordinary dense matrix multiplication.
However, this approach incurs $O(\numnode^2)$ communication and memory costs due to computations on irrelevant nodes.
Integrating existing cryptographic advances, the initial effort of SecGNN~\cite{tsc/WangZJ23,nips/RanXLWQW23} requires heavy communication or computational overhead.
Recently, CoGNN~\cite{ccs/ZouLSLXX24} optimizes the overhead in terms of  horizontal data partitioning, proposing a semi-honest secure framework.
Research for secure GCN over vertical data  remains nascent.

Current MPC studies, for GCN or not, have primarily targeted settings where participants own different data samples, \ie, horizontally partitioned data~\cite{ccs/ZouLSLXX24}.
MPC specialized for scenarios where parties hold different types of features~\cite{tkde/LiuKZPHYOZY24,icml/CastigliaZ0KBP23,nips/Wang0ZLWL23} is rare.
This paper studies $2$-party secure GCN training for these vertical partition cases, where one party holds private graph topology (\eg, edges) while the other owns private node features.
For instance, LinkedIn holds private social relationships between users, while banks own users' private bank statements.
Such real-world graph structures underpin the relevance of our focus.
To our knowledge, no prior work tackles secure GCN training in this context, which is crucial for cross-silo collaboration.

To realize secure GCN over vertically split data, we tailor MPC protocols for sparse graph convolution, which fundamentally involves sparse (adjacency) matrix multiplication.
Recent studies have begun exploring MPC protocols for sparse matrix multiplication (SMM).
ROOM~\cite{ccs/SchoppmannG0P19}, a seminal work on SMM, requires foreknowledge of sparsity types: whether the input matrices are row-sparse or column-sparse.
Unfortunately, GCN typically trains on graphs with arbitrary sparsity, where nodes have varying degrees and no specific sparsity constraints.
Moreover, the adjacency matrix in GCN often contains a self-loop operation represented by adding the identity matrix, which is neither row- nor column-sparse.
Araki~\etal~\cite{ccs/Araki0OPRT21} avoid this limitation in their scalable, secure graph analysis work, yet it does not cover vertical partition.

To bridge this gap, we propose a secure sparse matrix multiplication protocol, \osmm, achieving \emph{accurate, efficient, and secure GCN training over vertical data} for the first time.

\subsection{New Techniques for Sparse Matrices}
The cost of evaluating a GCN layer is dominated by SMM in the form of $\adjmat\feamat$, where $\adjmat$ is a sparse adjacency matrix of a (directed) graph $\graph$ and $\feamat$ is a dense matrix of node features.
For unrelated nodes, which often constitute a substantial portion, the element-wise products $0\cdot x$ are always zero.
Our efficient MPC design 
avoids unnecessary secure computation over unrelated nodes by focusing on computing non-zero results while concealing the sparse topology.
We achieve this~by:
1) decomposing the sparse matrix $\adjmat$ into a product of matrices (\S\ref{sec::sgc}), including permutation and binary diagonal matrices, that can \emph{faithfully} represent the original graph topology;
2) devising specialized protocols (\S\ref{sec::smm_protocol}) for efficiently multiplying the structured matrices while hiding sparsity topology.

\subsubsection{Sparse Matrix Decomposition}
We decompose adjacency matrix $\adjmat$ of $\graph$ into two bipartite graphs: one represented by sparse matrix $\adjout$, linking the out-degree nodes to edges, the other 
by sparse matrix $\adjin$,
linking edges to in-degree nodes.


We then permute the columns of $\adjout$ and the rows of $\adjin$ so that the permuted matrices $\adjout'$ and $\adjin'$ have non-zero positions with \emph{monotonically non-decreasing} row and column indices.
A permutation $\sigma$ is used to preserve the edge topology, leading to an initial decomposition of $\adjmat = \adjout'\sigma \adjin'$.
This is further refined into a sequence of \emph{linear transformations}, 
which can be efficiently computed by our MPC protocols for 
\emph{oblivious permutation}
and \emph{oblivious selection-multiplication}.
Our decomposition approach is not limited to GCNs but also general~SMM 
by 
treating them 
as adjacency matrices.


\subsubsection{New Protocols for Linear Transformations}
\emph{Oblivious permutation} (OP) is a two-party protocol taking a private permutation $\sigma$ and a private vector $\xvec$ from the two parties, respectively, and generating a secret share $\l\sigma \xvec\r$ between them.
Our OP protocol employs correlated randomnesses generated in an input-independent offline phase to mask $\sigma$ and $\xvec$ for secure computations on intermediate results, requiring only $1$ round in the online phase (\cf, $\ge 2$ in previous works~\cite{ccs/AsharovHIKNPTT22, ccs/Araki0OPRT21}).

Another crucial two-party protocol in our work is \emph{oblivious selection-multiplication} (OSM).
It takes a private bit~$s$ from a party and secret share $\l x\r$ of an arithmetic number~$x$ owned by the two parties as input and generates secret share $\l sx\r$.
Our $1$-round OSM protocol also uses pre-computed randomnesses to mask $s$ and $x$.
Compared to the Beaver-triple-based~\cite{crypto/Beaver91a} and oblivious-transfer (OT)-based approaches~\cite{pkc/Tzeng02}, our protocol saves ${\sim}50\%$ of online communication while having the same offline communication and round complexities.

By decomposing the sparse matrix into linear transformations and applying our specialized protocols, our \osmm protocol
reduces the complexity of evaluating $\numnode \times \numnode$ sparse matrices with $\numedge$ non-zero values from $O(\numnode^2)$ to $O(\numedge)$.

\subsection{\cgnn: Secure GCN made Efficient}
Supported by our new sparsity techniques, we build \cgnn, 
a two-party computation (2PC) framework for GCN inference and training over vertical
data.
Our contributions include:

1) We are the first to explore sparsity over vertically split, secret-shared data in MPC, enabling decompositions of sparse matrices with arbitrary sparsity and isolating computations that can be performed in plaintext without sacrificing privacy.

2) We propose two efficient $2$PC primitives for OP and OSM, both optimally single-round.
Combined with our sparse matrix decomposition approach, our \osmm protocol ($\prosmm$) achieves constant-round communication costs of $O(\numedge)$, reducing memory requirements and avoiding out-of-memory errors for large matrices.
In practice, it saves $99\%+$ communication
and reduces ${\sim}72\%$ memory usage over large $(5000\times5000)$ matrices compared with using Beaver triples.

3) We build an end-to-end secure GCN framework for inference and training over vertically split data, maintaining accuracy on par with plaintext computations.
We will open-source our evaluation code for research and deployment.

To evaluate the performance of $\cgnn$, we conducted extensive experiments over three standard graph datasets (Cora~\cite{aim/SenNBGGE08}, Citeseer~\cite{dl/GilesBL98}, and Pubmed~\cite{ijcnlp/DernoncourtL17}),
reporting communication, memory usage, accuracy, and running time under varying network conditions, along with an ablation study with or without \osmm.
Below, we highlight our key achievements.

\textit{Communication (\S\ref{sec::comm_compare_gcn}).}
$\cgnn$ saves communication by $50$-$80\%$.
(\cf,~CoGNN~\cite{ccs/KotiKPG24}, OblivGNN~\cite{uss/XuL0AYY24}).

\textit{Memory usage (\S\ref{sec::smmmemory}).}
\cgnn alleviates out-of-memory problems of using 
Beaver-triples~\cite{crypto/Beaver91a} for large datasets.

\textit{Accuracy (\S\ref{sec::acc_compare_gcn}).}
$\cgnn$ achieves inference and training accuracy comparable to plaintext counterparts.

{\textit{Computational efficiency (\S\ref{sec::time_net}).}} 
$\cgnn$ is faster by $6$-$45\%$ in inference and $28$-$95\%$ in training across various networks and excels in narrow-bandwidth and low-latency~ones.

{\textit{Impact of \osmm (\S\ref{sec:ablation}).}}
Our \osmm protocol shows a $10$-$42\times$ speed-up for $5000\times 5000$ matrices and saves $10$-2$1\%$ memory for ``small'' datasets and up to $90\%$+ for larger ones.

\section{Preliminary}

\paragraph{Notations.}
Table~\ref{tab:notation} summarizes the main notations.
$\Abb$ denotes an Abelian group.
$\Sbb_n$ denotes a permutation group of $n$ elements.
$\Mbb_{m,n}(\Rcal)$ denotes a matrix ring, which defines a set of $m\times n$ matrices with entries in a ring $\Rcal$, forming a ring under matrix addition and
multiplication.
$\Msf_{m \times n}=(\Msf[i,j])_{i,j=1}^{m,n}$ denotes an $m\times n$ matrix\footnote{
For simplicity, we omit the subscript of 
$\Msf_{m \times n}$ when the values of $m$ and $n$ are clear from the context.
Also, we write 
$\Msf = (\Msf[i, j])_{i, j = 1}^{n}$ if $m = n$.
} 
where row indices are $\{1, 2, \ldots, m\}$ and column indices are $\{1, 2, \ldots, n\}$, and $\Msf[i,j]$ is the value at the $i$-th row and $j$-th column.
$\Pi(\ ;\ )$ denotes a protocol execution between two parties, $\pp_0$ and $\pp_1$, where $\pp_0$'s inputs are the left part of `;' and $\pp_1$'s inputs are the right part of `;'.

\paragraph{Secret Sharing.} 
We use $2$-out-of-$2$ additive secret sharing over a ring, where
the floating-point
values are encoded to be fixed-point numbers $x\in \Zbb/2^f\Zbb$, with $L = 64$ bits representing decimals and $f = 18$ bits representing the fractional part~\cite{sp/MohasselZ17}.
Specifically, one party $\pp_0$ holds the share $\l x\r _0\in\Zbb $, while the other party $\pp_1$ holds the share $\l x\r_1\in\Zbb$ such that $x\cdot 2^f = \l x\r_0 + \l x\r_1$.
The shares can be arithmetic or binary.


\paragraph{Graph Convolutional Networks.}
GCN~\cite{iclr/KipfW17} has been proposed for training over graph data, using graph structure and node features as input.
Like most neural networks, GCN consists of multiple linear and non-linear layers.
Compared to CNN, GCN replaces convolutional layers with graph convolution layers
(more details on GCN architecture and its training/inference are in Section~\ref{sec::secgcn}).
Graph convolution can be computed by SMM, often yielding many $0$-value results.

Let $\adjmat \in \Mbb_{\numnode,\numnode}(\Rcal)$ be a (normalized) adjacency matrix of a graph with $\numnode$ nodes and $\Xsf \in \Mbb_{\numnode, d}(\Rcal)$ be the feature matrix (with dimensionality $d$) of the nodes.
The graph convolution layer (with output dimensionality $k$) is defined as $\mathsf{Y} = \adjmat \feamat \weimat$,
where $\mathsf{Y} \in \Mbb_{\numnode, k}(\Rcal)$ is the output and $\weimat \in \Mbb_{d, k}(\Rcal)$ is a trainable parameter.
As matrix multiplication costs increase linearly with input size and $k \ll \numnode$ in practice, the challenge of secure GCN
lies in the
SMM of $\adjmat \feamat$.
Multiplying (dense) $\weimat$ can be done using Beaver's approach.

\begin{table}[!t]
\centering
\caption{Notation and Definition}
\label{tab:notation}
\setlength\tabcolsep{2pt}
\begin{tabular}{l|l}
\hline
$\pp_i, \l \ \r_i$& Party $i$ and its share ($i \in \{0, 1\})$
\\\hline
$\bitlen$ & The bit-length of data
\\\hline
$\pi,b,\uvec,\l u \r$ & Pre-computed randomnesses
\\\hline
$\delta_x$ & Masked version of value/vector $x$
\\\hline
$\Mbb_{m, n}{(\Rcal)}$ & A set of ${m\times n}$ matrices with entries in a ring $\Rcal$
\\\hline
$\Msf_{m,n}$ & A matrix $\Msf$ of size $m\times n$
\\\hline
$\Msf[i, j]$ & The value of $\Msf$ at the $i$-th row and $j$-th column
\\\hline
$\sigma\xvec$& Permutation operation $\sigma$ over a matrix/vector $\xvec$
\\\hline
$\Sbb_n$ & Permutation group of $n$ elements
\\\hline
\end{tabular}
\end{table}

\section{System Overview and Security Model}
\label{sec::securitymodel}
\subsection{Workflow of \texorpdfstring{$\cgnn$}{\textcgnn}}
Figure~\ref{graph_imple} outlines $\cgnn$'s function.
A graph owner $\pp_0$, 
with an adjacency matrix $\adjmat$ corresponding to a private graph $\graph$,
and a feature owner $\pp_1$ with private node features $\feamat$, aim to jointly train a GCN without revealing their private inputs.
This involves computing a parameterized function $\mathsf{GCN}(\adjmat, \feamat; \weimat)$, where the weights $\weimat$ are secret-shared over the two parties.

The $\cgnn$ framework includes a sparse matrix decomposition method (Section~\ref{sec::sgc}) and secure $2$PC protocols for permutation ($\Pi_{\ssp}$, Section~\ref{sec::op_pro}),
selection-multiplication ($\Pi_{\SM}$, Section~\ref{sec::osm_pro}), and SMM ($\prosmm$, Section~\ref{subsec:prosmm}).
The sparse matrix decomposition is performed solely by the graph owner, while all $2$PC protocols are executed by both parties without disclosing any intermediate computation results.

In practical cross-institution collaboration, graph owners can be social networking platforms (\eg, Facebook) holding social relationships as a graph, and feature owners can be banks holding users' bank statements as node features.
As a motivating example, they may want to build a credit-investigation model for predicting the credit of a loaner for future repayment
while keeping their data confidential.
Our setting can be extended to multi-party, where different types of node features are learned from different parties (\eg, bank statements from banks and transactions from online-shopping companies).
Usually, the graph structure is  fixed to represent a specific relationship, such as a social circle, in real-world scenarios.
Thus, we focus on single-party graph ownership without limiting feature ownerships.
A general case of arbitrary partitioning has been discussed in Section~\ref{sec:future}.
 
\begin{figure}[!t]
	\centering
	\includegraphics[width = 0.42\textwidth]{./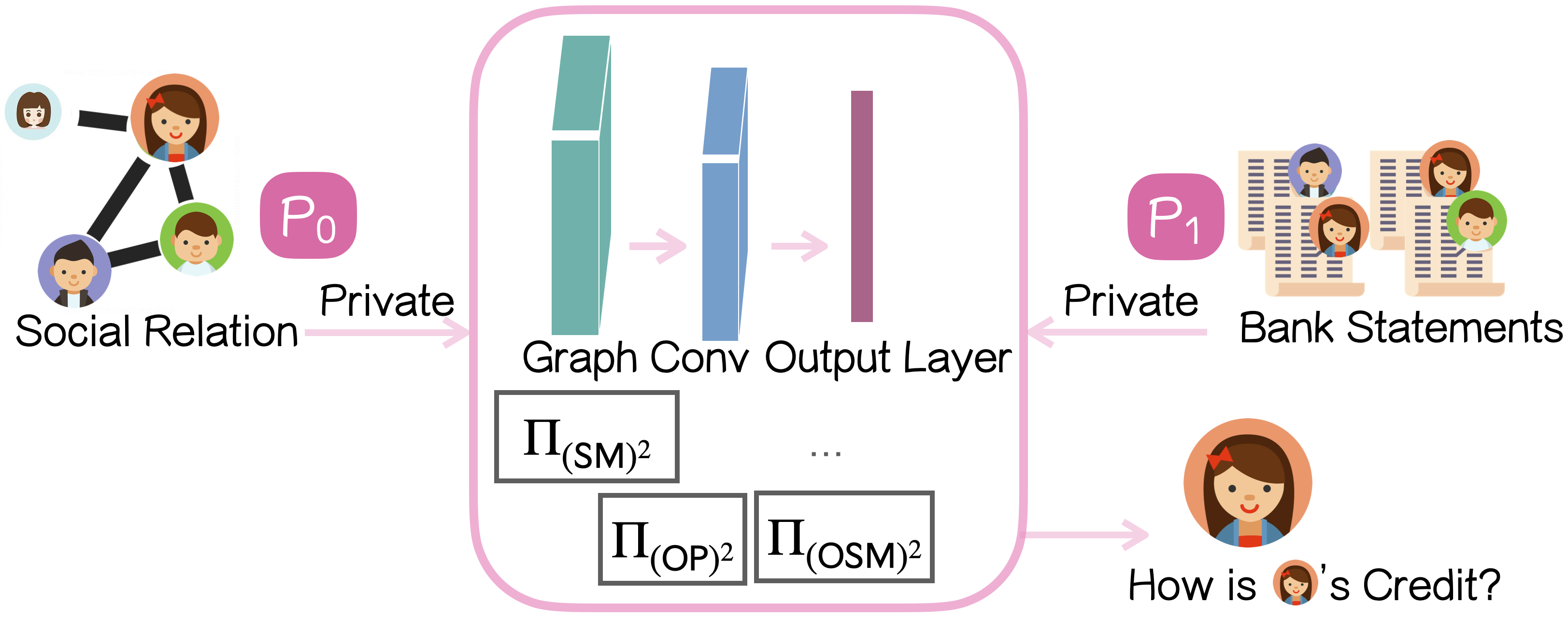}
	\caption{Ideal Functionality of $\cgnn$}
	\label{graph_imple}
\end{figure}

\subsection{Security Model}
$\cgnn$ can be instantiated with any type of security models offered by the corresponding MPC protocols.
Following advances~\cite{ccs/AttrapadungH0MO21,neurips/crypten2020,sp21/TanKTW,uss/WatsonWP22,ccs/ZouLSLXX24,ccs/KotiKPG24}, $\cgnn$ focuses on $2$PC security against the static semi-honest probabilistic polynomial time (PPT) adversary $\A$ regarding the real/ideal-world simulation paradigm~\cite{sp/17/Lindell17}.
Specifically, two parties, $\pp_0$ and $\pp_1$, with inputs $\l x\r_0$ and $\l x\r_1$, want to compute a function $y = f(\l x\r_0, \l x\r_1)$ without revealing anything other than $y$.
$\A$ corrupts either $\pp_0$ or $\pp_1$ at the start, following the protocol, but tries to learn the other's private inputs.
$\A$ can only learn data from the corrupted party but nothing from  honest ones.

Many protocols utilize pre-computations for improving efficiency, \eg, Beaver triples~\cite{crypto/Beaver91a} for multiplication.
They can be realized by a data-independent offline phase run by a  semi-honest dealer $\tt$ or 2PC protocols from homomorphic encryption~\cite{eurocrypt/LyubashevskyPR10} or oblivious transfer~\cite{pkc/Tzeng02,ccs/KellerOS16} or oblivious shuffle~\cite{asiacrypt/ChaseGP20,ndss/SongYBDC23}.
We adopt the first common approach (also called client-aided setting~\cite{ccs/AttrapadungH0MO21}) for simplicity.
The $\tt$ does not interact with any party (particularly, receives nothing) online.
It only generates pseudo-randomnesses in an input-independent offline phase by counter-indexed computations of pseudorandom function (PRF), where $\tt$ and $\pp_i$ share a PRF key (denoted by $\key_i$) for $i \in \{0,1\}$ and a counter $\ctr$ are synchronized among all parties.
We defer the explicit functionality definitions and security proofs of our protocols to Appendix~\ref{app::fullproof}.

\subsection{Scope of Graph Protection}

Like existing MPC works, $\cgnn$ protects the entry values stored in the graph and (intermediate) computations.
For metadata, most secure matrix multiplication protocols (without sparse structure) reveal input
dimensionality (\eg, $\numnode$ in GCN) that is typically considered public knowledge.
When sparsity is explored, it is normal to leak reasonable knowledge, such as $\numnode+\numedge$ in GraphSC~\cite{sp/NayakWIWTS15}.
In $\cgnn$, the only additional metadata revealed is $\numedge$ that has been comprehensively explained in \S\ref{subsec:prosmm}.
This leakage is tolerable (and unavoidable) since the efficiency gain is correlated to $\numedge$.
Corresponding to $\cgnn$'s GCN training, the \textit{dimension} of adjacency matrix $\adjmat$ (\ie, equal to $\numnode$) and the \textit{dimension} of feature matrix $\Xsf$ are assumed to be public. 

Privacy leakages from training/inference results, \eg, embedding inversion and sensitive attribute inference, also appear in plaintext computations and are beyond our scope.
These can be protected via orthogonal techniques like (local) differential privacy and robustness training, which are compatible with our work.
In the semi-honest settings, the attacker can only view the well-formed secret shares and not actively perform the malicious attacks like model inversion.
\section{Sparse Matrix Decomposition}
\label{sec::sgc}
Graph convolution layers $\adjmat\feamat\weimat$ encode the graph structures in $\adjmat$ into GCNs.
Graph convolution is then computed by SMM $\adjmat\feamat$.
By bridging computations of matrices and graphs, we detail how to decompose a sparse matrix $\adjmat$ into a product of special matrices for more efficient SMM.
In essence, we \emph{revisit linear algebra} relations to \emph{faithfully} capture the graph.

\subsection{Bipartite Graph Representation}\label{subsec::sme}
We represent graph $\graph$ corresponding to $\adjmat$ as bipartite graphs, and decompose $\adjmat$ into matrices.
This bipartite representation enables the identification of structured patterns that facilitate efficient SMM aligning with our $2$PC protocols.

\paragraph{Graph Decomposition via Edges.}
Non-zero entries in $\adjmat$ correspond to edges between nodes in $\graph$.
By representing $\graph$ as two bipartite graphs---$\graphout$ (the out-degree node-to-edge relation) and $\graphin$ (the in-degree edge-to-node relation)---we can decompose $\adjmat$ into the product $\adjout \adjin$, where $\adjout$ and $\adjin$ reflect the respective bipartite structures 
are sparse matrices correspond to $\graphout$ and $\graphin$, respectively.
Consider graph~$\graph$ (with arbitrary-sparse $\adjmat$) in Figure~\ref{fig::nn_relation_diff}.
We label each edge and treat them as imaginary nodes (`$\diamond$' drawn by dotted lines) to
construct $\graphout$ and $\graphin$ as in Figure~\ref{fig::nen_relation_diff}.
This representation decomposes $\adjmat$ into $\adjout \adjin$ as in Figure~\ref{fig::adjmat_decom_init}.


\begin{figure}[!t]
 \subfloat[Node-Node Graph $\graph$]{
 \includegraphics[width = 0.21\textwidth]{./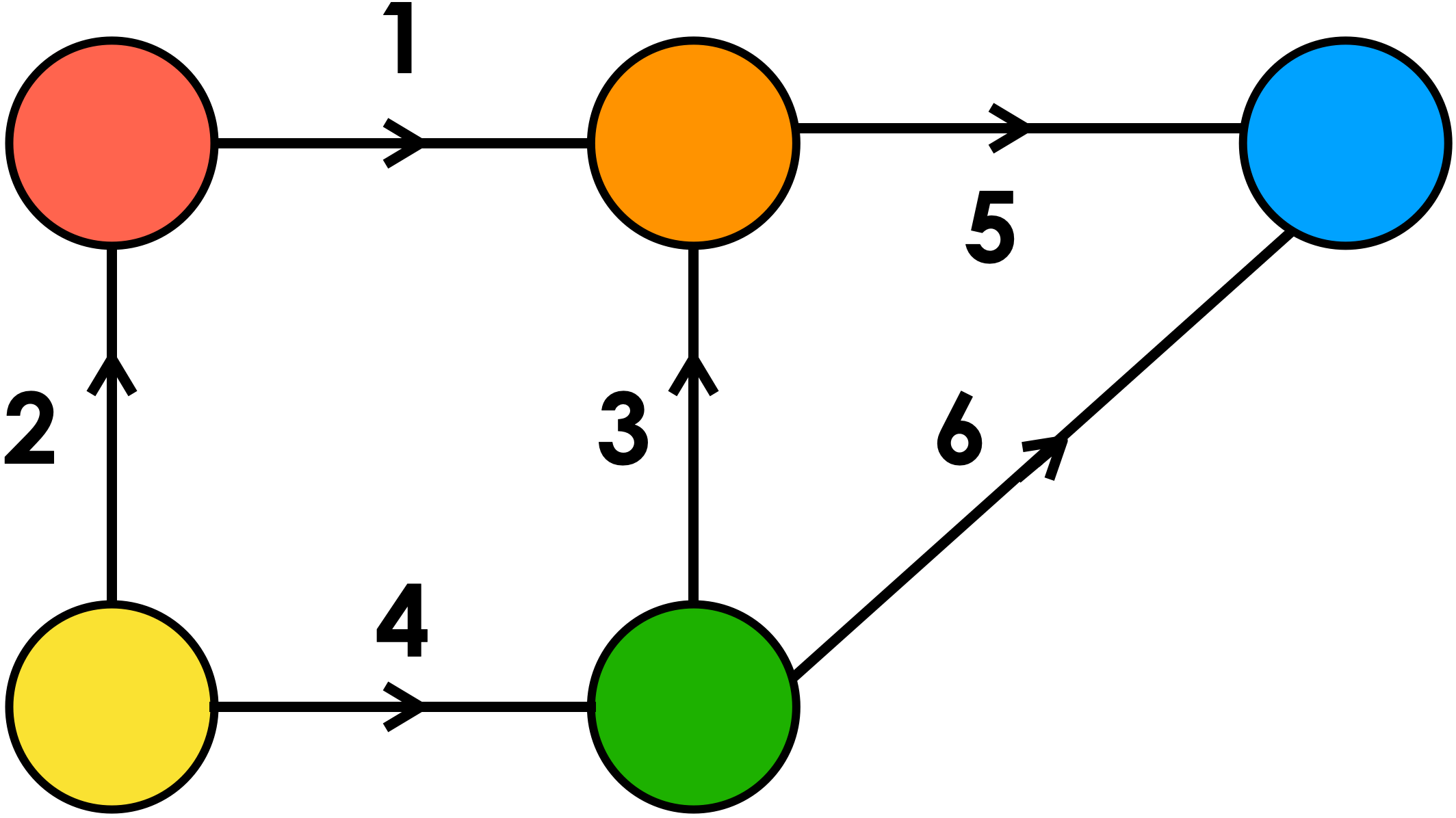}
 \label{fig::nn_relation_diff}
 }
 \hspace{2mm}
 \subfloat[Node-Edge-Node Graph $\graphout, \graphin$]{
 \includegraphics[width = 0.225\textwidth]{./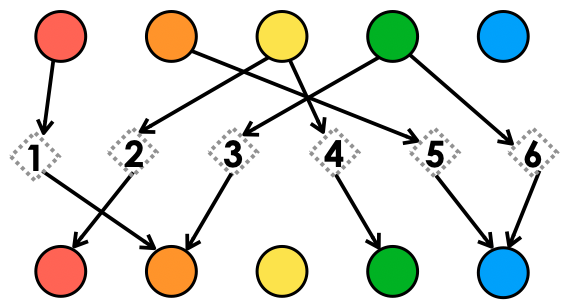}
 \label{fig::nen_relation_diff}
 }
	\caption{Graph Decomposition through Edges}
	\label{fig::graph_decom_diff}
\end{figure}

\begin{figure}[!t]

\centering
	\includegraphics[width = 0.47\textwidth]{./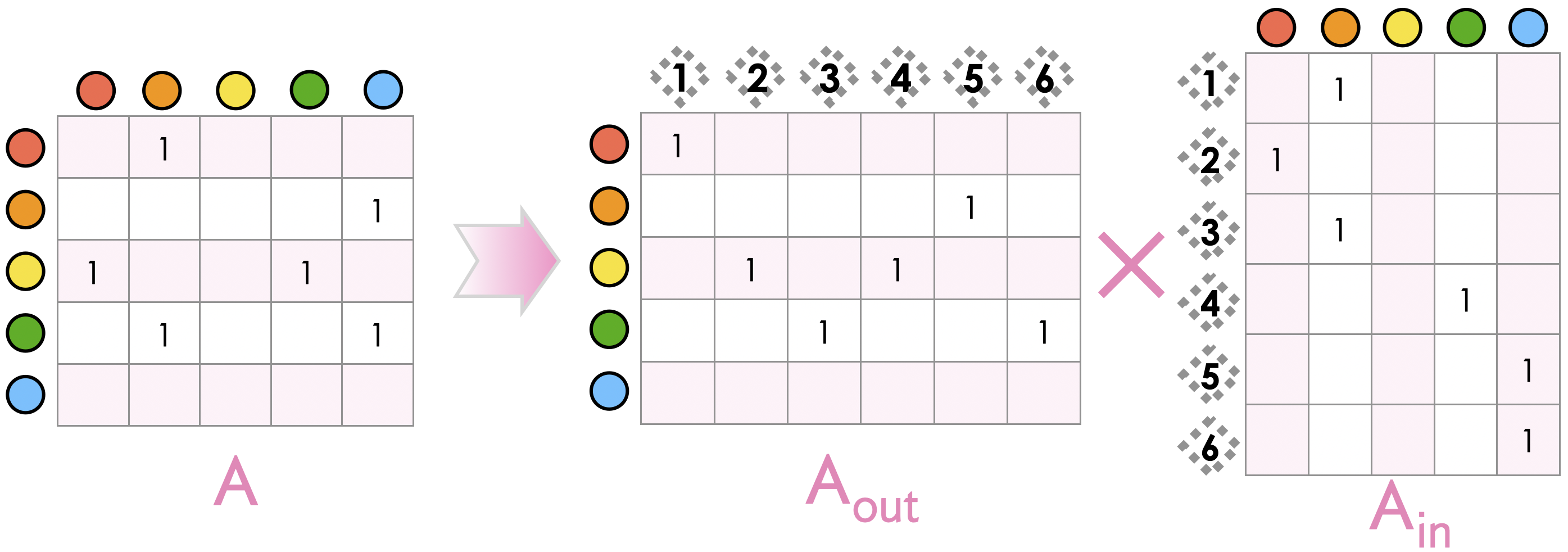}
	\caption{Matrix Decomposition Equivalent to Figure~\ref{fig::graph_decom_diff}}
	\label{fig::adjmat_decom_init}
\end{figure}
 
\begin{figure*}[!t]
	\centering
	\includegraphics[width = 0.98\textwidth]{./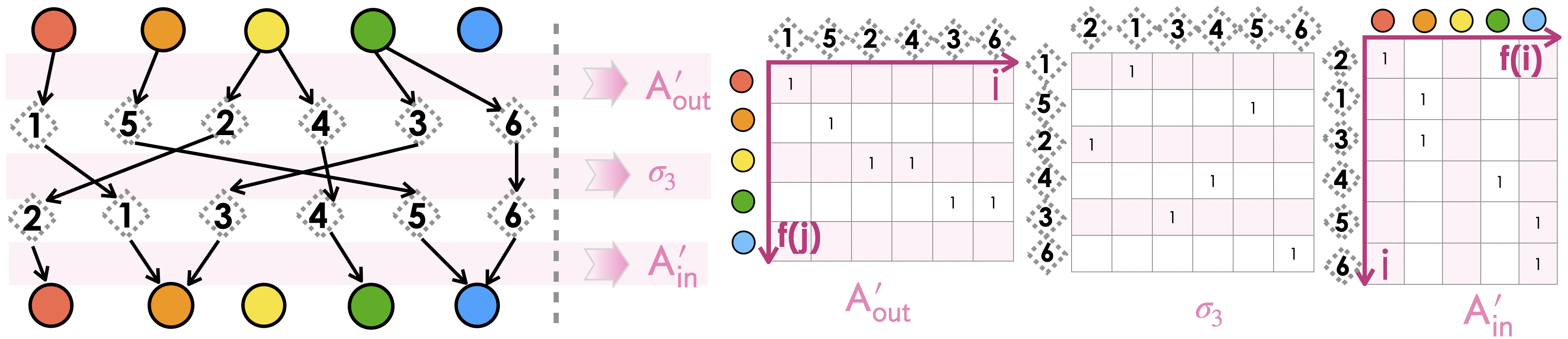}
	\caption{Graph/Matrix Decomposition with Monotonicity}
	\label{graph_psigq}
\end{figure*}

\subsection{Permutation for Monotonicity}\label{subsec::initdecom}
$\adjout$ and $\adjin$ are still unstructured sparse matrices, challenging further decomposition.
We then permute the columns of $\adjout$ and the rows of $\adjin$ to yield permuted matrices $\adjout'$ and $\adjin'$ with monotonically non-decreasing (row-index, column-index) coordinates for non-zero positions as shown in Figure~\ref{graph_psigq}.
Definitions~\ref{q_type} and~\ref{p_type} formulate $\Psf$-type and $\Qsf$-type
sparse matrices to capture these monotonic relations,
where $\Psf$-type matrices have exactly one non-zero value in each column, and
$\Qsf$-type matrices have exactly one non-zero value in each row.
Note that $\adjmat \neq \adjout'\adjin'$ as the imaginary nodes
in $\adjout'$ and $\adjin'$ are ordered differently.
We use a permutation $\sigma_3$ 
(before defining $\sigma_1,\sigma_2$)
to map these nodes
for preserving the topology among edges and 
decomposing 
$\adjmat$, given by $\adjout'\sigma_3\adjin'$.

Recall that $\adjmat$ is a normalized adjacency matrix, \ie, its non-zero values may not be $1$.
To account for this, we introduce a diagonal matrix $\Lambda$ in the decomposition to store the non-zero edge weights.
Theorem~\ref{the::p_sig_q} (proven in $\S$\ref{sec::proof_p_sig_q})
shows that any sparse matrix $\adjmat$ can be decomposed to a $\Psf$-type matrix, a diagonal matrix,
a permutation, and a $\Qsf$-type matrix.



\begin{restatable}{theorem}{initdecom}
\label{the::p_sig_q}
Let $\adjmat\in \Mbb_{m,n}(\Rcal)$ be an $m\times n$ matrix, where each entry is an element from ring $\Rcal$.
The elements of $\adjmat$ are $0$'s except $\numnonzero$ of them.
There exists a matrix decomposition $\adjmat = \adjout' \Lambda \sigma_3 \adjin'$, where $\adjout' \in \Mbb_{m, \numnonzero}(\Rcal)$ is a $\Psf$-type matrix, $\Lambda \in \Mbb_{\numnonzero,\numnonzero}(\Rcal)$ is a diagonal matrix, $\sigma_3 \in \Sbb_{\numnonzero}$ is a permutation,
and $\adjin' \in \Mbb_{\numnonzero, n}(\Rcal)$ is a $\Qsf$-type matrix.
\end{restatable}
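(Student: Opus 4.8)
The plan is to prove the statement constructively. First I would enumerate the $\numnonzero$ non-zero entries of $\adjmat$ as triples $(r_k, c_k, v_k)$, $k \in \{1, \dots, \numnonzero\}$, where $v_k = \adjmat[r_k, c_k] \neq 0$; each such entry is the unique ``edge'' that will occupy one column of $\adjout'$ and one row of $\adjin'$. The crux is that the monotonicity clauses in Definitions~\ref{q_type} and~\ref{p_type} force two \emph{different} orderings of these $\numnonzero$ entries: $\adjout'$ must list them so that source-row indices are non-decreasing, while $\adjin'$ must list them so that target-column indices are non-decreasing. These orderings disagree in general, and the middle permutation $\sigma_3$ is precisely what reconciles them; the diagonal $\Lambda$ absorbs the weights so that $\adjout'$ and $\adjin'$ can be $0/1$ indicator matrices.

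Concretely, I would let $\alpha \in \Sbb_{\numnonzero}$ list the entries in lexicographically non-decreasing order of $(r_k, c_k)$, and $\beta \in \Sbb_{\numnonzero}$ in non-decreasing order of $(c_k, r_k)$. Define $\adjout' \in \Mbb_{m, \numnonzero}(\Rcal)$ to have a single $1$ in column $l$ at row $r_{\alpha(l)}$: it has exactly one non-zero per column and, since $\alpha$ sorts by row, non-decreasing row positions across columns, hence is $\Psf$-type. Symmetrically, $\adjin' \in \Mbb_{\numnonzero, n}(\Rcal)$ has a single $1$ in row $l$ at column $c_{\beta(l)}$, giving a $\Qsf$-type matrix. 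Finally set $\Lambda$ diagonal with $\Lambda[l,l] = v_{\alpha(l)}$, and let $\sigma_3 = \beta^{-1}\alpha$ (with the exact composition direction pinned down by the paper's permutation-action convention).

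It then remains to verify $\adjout' \Lambda \sigma_3 \adjin' = \adjmat$, which I would do by reading the product as a composition of maps on a generic $x \in \Rcal^n$: $\adjin'$ sends $x$ to the vector with $l$-th coordinate $x_{c_{\beta(l)}}$; applying $\sigma_3 = \beta^{-1}\alpha$ reindexes this into out-order so the $l$-th coordinate becomes $x_{c_{\alpha(l)}}$; $\Lambda$ scales the $l$-th coordinate by $v_{\alpha(l)}$; and $\adjout'$ accumulates into output row $i$ exactly the coordinates $l$ with $r_{\alpha(l)} = i$. The $i$-th output is thus $\sum_{k : r_k = i} v_k x_{c_k} = (\adjmat x)_i$, and since $x$ is arbitrary the matrix identity follows (one may equally check entrywise that the sole contribution to position $(i,j)$ is the non-zero with $(r_k, c_k) = (i,j)$). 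The main obstacle is purely the index bookkeeping: confirming that $\sigma_3$ carries column $l$ of $\adjout'$, whose edge is $\alpha(l)$, to the row $\beta^{-1}(\alpha(l))$ of $\adjin'$ holding the same edge, so that weights and target columns line up per edge. Everything else is immediate from sorting---ties from nodes of degree $>1$ are harmless because the definitions require only non-decreasing indices, and zero-degree rows/columns cause no issue since $\Psf$-/$\Qsf$-type constrain only the columns/rows that correspond to edges.
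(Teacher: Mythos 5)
Your construction is the same as the paper's: enumerate the non-zeros as triples, let a row-sorted order define the $\Psf$-type matrix $\adjout'$ and the diagonal $\Lambda$, let a column-sorted order define the $\Qsf$-type matrix $\adjin'$, and let $\sigma_3$ reconcile the two orderings (the paper simply fixes the initial enumeration to be row-sorted, i.e.\ takes your $\alpha$ to be the identity, so its $\sigma_3$ is your $\beta^{-1}\alpha$ in that special case). Your verification by tracking a generic vector through the factors is a slightly more explicit justification of the identity the paper asserts directly from the sparse representations; the argument is correct.
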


\subsection{Re-decomposition to Basic Operations}
\label{sec::the_supp_smm}
Given the permuted matrices $\adjout'$ and $\adjin'$ with the 
monotonicity properties, we can re-decompose them into a product of permutation, diagonal, and constant matrices.
Due to the page limit, we focus on the intuition of re-decomposing $\adjin'$ (Q-type matrix)\footnote{
We can view a P-type matrix (\eg, $\adjout'$) as a transpose of a Q-type matrix and perform re-decomposition similarly.
} and the general theorem.
Implementation details and proofs can be found in Appendices~\ref{sec:matrix_found_sparse} and~\ref{sec::alg}.

We consider two constant lower triangular matrices:
\\
1) a ``summation matrix'' $\Sigma \in \Mbb_{\numnonzero,\numnonzero}(\Rcal)$ with $\Sigma[i, j]=1$ if $i \geq j$ or $0$ otherwise;
\\
2) a ``difference matrix'' $\delta_k \in \Mbb_{k,k}(\Rcal)$ with $\delta_k[i,j]=1$ for $i=j$ or $-1$ for $j=i-1$, or $0$ otherwise.

Intuitively, when multiplying with a (column) vector, $\Sigma$ sums values on or above each element, while $\delta_k$ computes a difference between each element and its previous one.

Based on the above intuition, it is not hard to decompose $\adjin'$ into a product of $\Sigma$ and another matrix $\delta'$ (Figure~\ref{fig::Q-decom_1}).
Interestingly,
we observe that the resulting matrix $\delta'$ ``contains'' a difference matrix (with size equals the number of non-zero columns in $\adjin'$) on its left-top corner (after permuting its rows and columns).
This relation can be characterized by expressing $\delta'$ into a product of permutation ($\sigma_1$, $\sigma_2$), diagonal ($\gammain$), and difference ($\delta$) matrices, as in Figure~\ref{fig::q-type_decom}.
 
\begin{figure}[!t]
	\centering
	\includegraphics[width = 0.47\textwidth]{./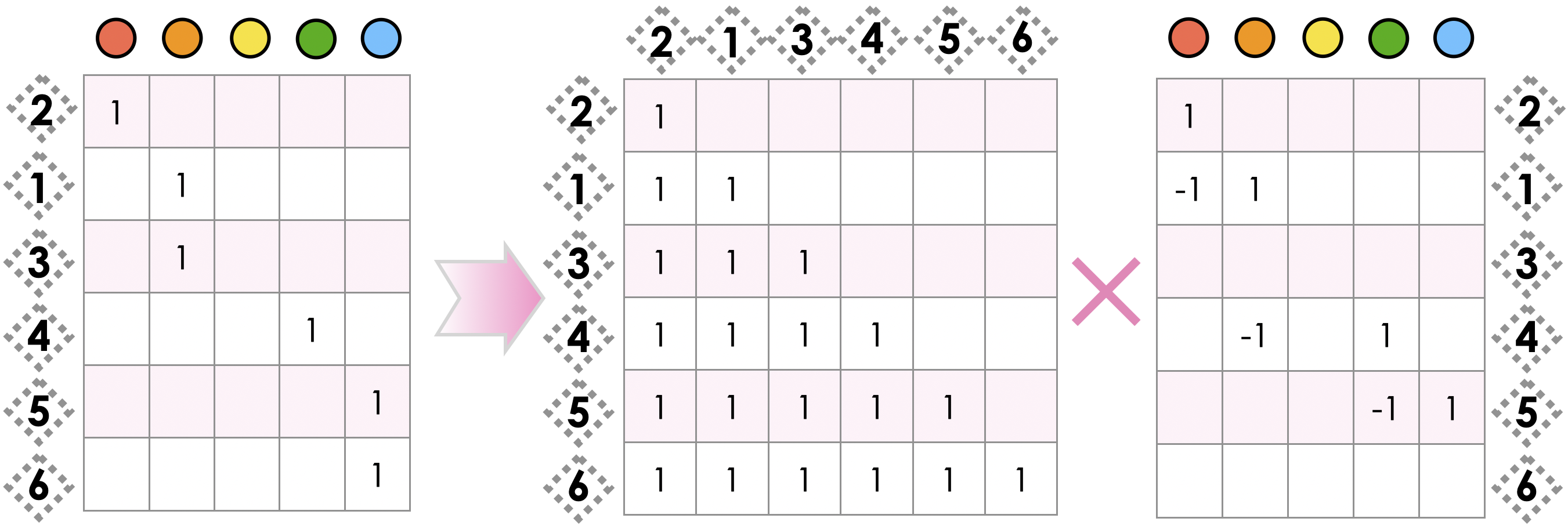}\caption{Decomposition of $\adjin' = \Sigma \delta'$}
	\label{fig::Q-decom_1}
\end{figure}

\begin{figure*}[!t]
\centering
\includegraphics[width = 0.99\textwidth]{./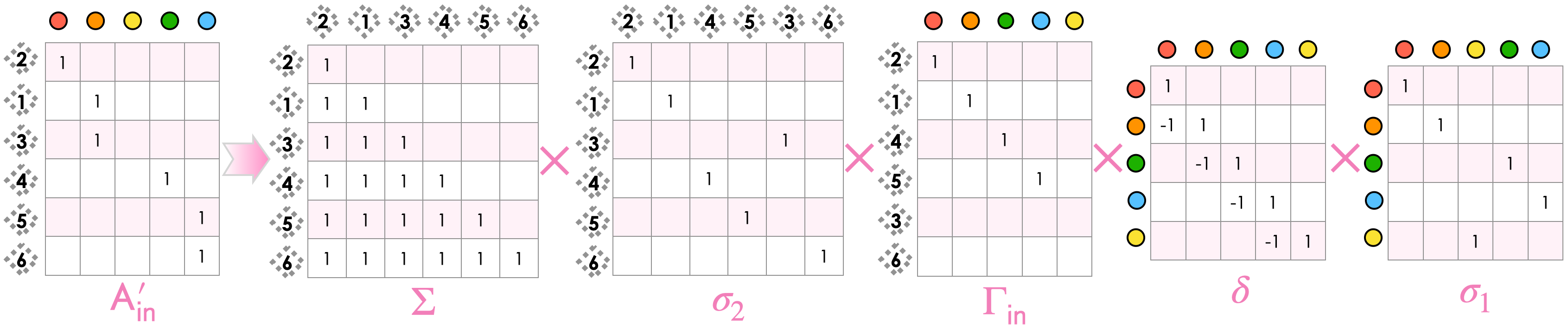}	\caption{Re-decomposition of $\adjin'$ (Q-type Matrix)}
	\label{fig::q-type_decom}
\end{figure*}

\paragraph{General Theorem.}
Combining Theorem~\ref{the::p_sig_q} and matrix decomposition of $\Qsf$-type (Theorem~\ref{the::q_decom} proved in $\S$\ref{sec::proof_q}) and $\Psf$-type (Theorem~\ref{the::p_decom} proved in $\S$\ref{sec::proof_p}) matrices, Theorem~\ref{the::general_dec_main} concludes the general matrix decomposition (proof in $\S$\ref{sec::proof_general}).
Essentially, an arbitrary-sparse matrix can be transformed into a 
sequence of permutation and matrix multiplication.

\begin{restatable}[Sparse Matrix Decomposition]{theorem}{finaldecom}
\label{the::general_dec_main}
Let an $m\times n$ sparse matrix $\adjmat\in \Mbb_{m,n}(\Rcal)$ contain $\nrow$ non-zero rows, $\ncol$ non-zero columns, and $\numnonzero$ non-zero elements.
Then, there exists a matrix decomposition
$\adjmat = \sigma_5 \delta_m ^{\top} \gammaout \sigma_4 \Sigma ^{\top} \Lambda \sigma_3 \Sigma \sigma_2 \gammain \delta _n \sigma_1$,
where $\sigma _5 \in \Sbb_m$, $\sigma_4 \in \Sbb_{\numnonzero}, \sigma_3 \in \Sbb_{\numnonzero}, \sigma_2 \in \Sbb_{\numnonzero}, \sigma_1 \in \Sbb_n$, and,
\\
1) $\Sigma=(\Sigma[i, j])_{i,j=1}^{\numnonzero}$ is the left-down triangle matrix such that $\Sigma[i, j]=1$ if $i \geq j$ or $0$ otherwise,
\\
2) $\delta_k=(\delta_k[i,j])_{i,j=1}^{k}$ is the left-down triangle matrix such that $\delta_k[i,j]=1$ for $i=j$ or $-1$ for $j=i-1$, or $0$ otherwise,
\\
3) $\gammain =(\gammain[i,j])_{i=1,j=1}^{\numnonzero,n}$ is a matrix such that $\gammain[i,j]=1$ for $1\leq i=j\leq \ncol$ or $0$ otherwise,
\\
4) $\gammaout =(\gammaout[i,j])_{i=1,j=1}^{m,\numnonzero}$ is a matrix such that $\gammaout[i,j]=1$ for $1\leq i=j\leq {\nrow}$ or $0$ otherwise.
\end{restatable}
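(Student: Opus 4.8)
The plan is to obtain the claimed factorization purely by composing three decompositions that are already in hand, so that no new structural insight is needed beyond careful bookkeeping. First I would invoke Theorem~\ref{the::p_sig_q} to write $\adjmat = \adjout' \Lambda \sigma_3 \adjin'$, where $\adjout' \in \Mbb_{m,\numnonzero}(\Rcal)$ is $\Psf$-type, $\adjin' \in \Mbb_{\numnonzero,n}(\Rcal)$ is $\Qsf$-type, $\Lambda$ is diagonal, and $\sigma_3 \in \Sbb_\numnonzero$. This already isolates the diagonal $\Lambda$ and the central permutation $\sigma_3$, which appear unchanged in the target product; what remains is to expand the two sparse factors flanking them.

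Next I would apply the $\Qsf$-type decomposition (Theorem~\ref{the::q_decom}) to the right factor, yielding $\adjin' = \Sigma \sigma_2 \gammain \delta_n \sigma_1$ with $\sigma_2 \in \Sbb_\numnonzero$, $\sigma_1 \in \Sbb_n$, and $\gammain \in \Mbb_{\numnonzero,n}(\Rcal)$ carrying its $1$'s on the first $\ncol$ diagonal positions. For the left factor I would use the $\Psf$-type decomposition (Theorem~\ref{the::p_decom}); concretely, since a $\Psf$-type matrix is the transpose of a $\Qsf$-type one, I would apply Theorem~\ref{the::q_decom} to $(\adjout')^\top \in \Mbb_{\numnonzero,m}(\Rcal)$ (a $\Qsf$-type matrix with $\nrow$ non-zero columns), obtaining $(\adjout')^\top = \Sigma\, \tau_2\, \Gamma\, \delta_m\, \tau_1$, and then transpose. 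Transposition sends $\Sigma \mapsto \Sigma^\top$ and $\delta_m \mapsto \delta_m^\top$, turns $\Gamma$ into the $m \times \numnonzero$ matrix $\gammaout$ with $1$'s on the first $\nrow$ diagonal positions, and replaces each permutation by its inverse (still a permutation), giving $\adjout' = \sigma_5 \delta_m^\top \gammaout \sigma_4 \Sigma^\top$ with $\sigma_5 = \tau_1^\top \in \Sbb_m$ and $\sigma_4 = \tau_2^\top \in \Sbb_\numnonzero$.

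Finally I would substitute both expansions into $\adjmat = \adjout' \Lambda \sigma_3 \adjin'$ and concatenate the factors, which reproduces exactly $\adjmat = \sigma_5 \delta_m^\top \gammaout \sigma_4 \Sigma^\top \Lambda \sigma_3 \Sigma \sigma_2 \gammain \delta_n \sigma_1$. Along the way I would check conformability of every adjacent pair: the inner dimension is $m$ across the first two products, drops to $\numnonzero$ at $\gammaout$, stays $\numnonzero$ through the central block $\sigma_4 \Sigma^\top \Lambda \sigma_3 \Sigma \sigma_2$, and rises to $n$ at $\gammain$, so the product is a well-defined $m \times n$ matrix; I would also confirm that the non-zero block sizes $\nrow$ and $\ncol$ carried by $\gammaout$ and $\gammain$ match the non-zero-row and non-zero-column counts of $\adjmat$. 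The only genuinely delicate point is the transpose accounting in the $\Psf$-type step: because $\delta_m$ and $\Sigma$ are not symmetric, I must verify that the transposes land precisely on those two factors while the permutation factors are silently re-indexed to their inverses, so that the resulting factor order coincides with the target pattern rather than its mirror image.
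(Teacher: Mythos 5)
Your proposal is correct and follows essentially the same route as the paper: the paper's proof of Theorem~\ref{the::general_dec_main} likewise just composes $\adjmat = \adjout' \Lambda \sigma_3 \adjin'$ from Theorem~\ref{the::p_sig_q} with $\adjin'=\Sigma \sigma_2 \gammain \delta_n \sigma_1$ from Theorem~\ref{the::q_decom} and $\adjout'=\sigma_5 \delta_m^{\top} \gammaout \sigma_4 \Sigma^{\top}$ from Theorem~\ref{the::p_decom}. Even your ``delicate point'' about the transpose bookkeeping is exactly how the paper proves Theorem~\ref{the::p_decom}, namely by applying the $\Qsf$-type decomposition to $(\adjout')^{\top}$ and transposing back with $\sigma_5=\sigma_1^{\top}$, $\sigma_4=\sigma_2^{\top}$, $\gammaout=\gammain^{\top}$.
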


\subsection{Reasoning from Graph Perspective}
\begin{figure}[!t]
	\centering
	\includegraphics[width = 0.477\textwidth]{./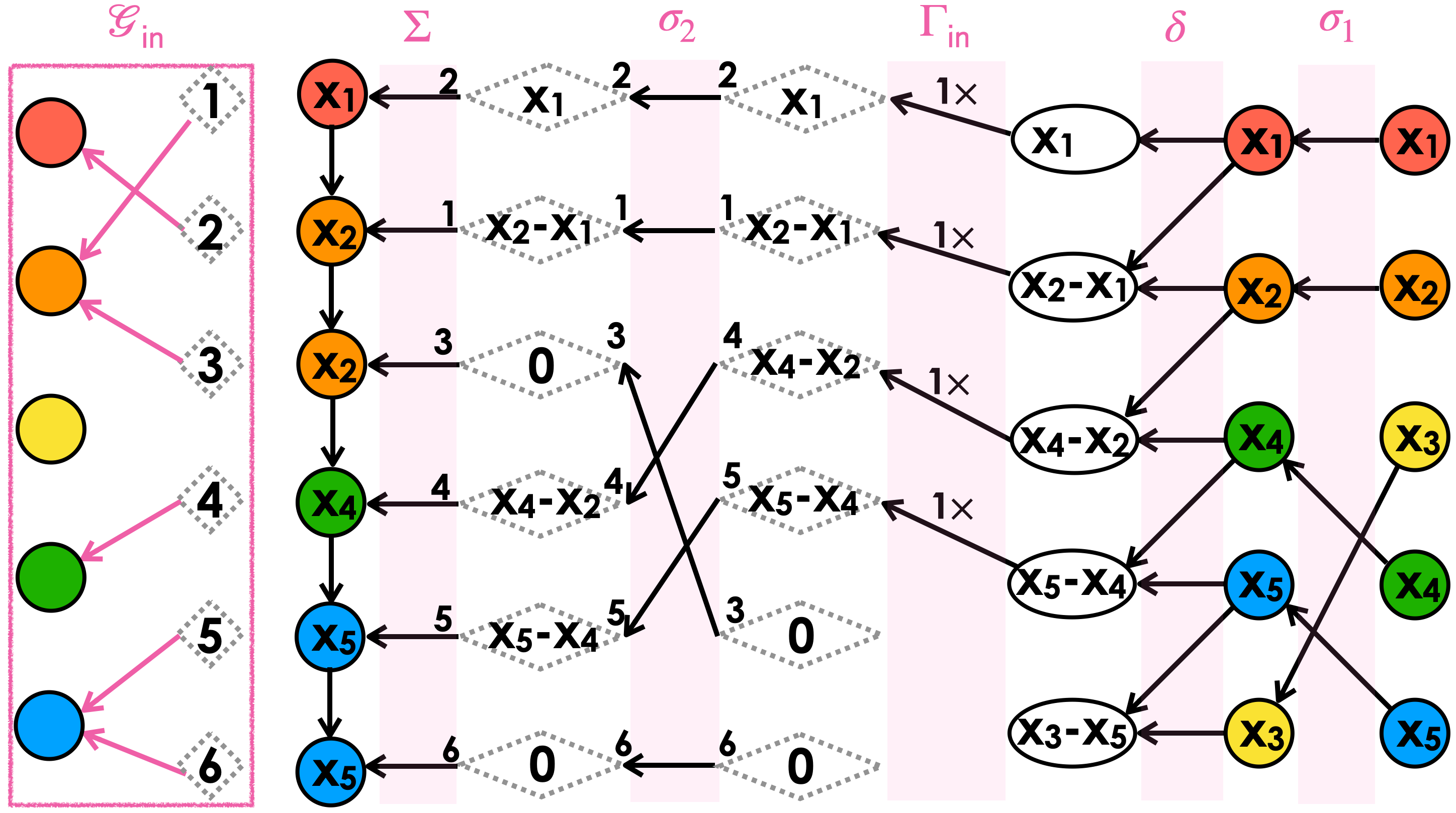}
	\caption{Recover In-degrees in $\graphin$ through $\adjin'\Xsf$}
	\label{graph_psigqx_diff}
\end{figure}

To illustrate the sparse matrix decomposition underlying Theorem~\ref{the::general_dec_main} for arbitrary topology,
Figure~\ref{graph_psigqx_diff} shows the directed edges \textit{in a reversed direction} represented by the decomposed matrices in Figure~\ref{fig::q-type_decom} and recovers the original $\graphin$.
Consider a vector
$\Xsf = [x_1, x_2, x_3, x_4, x_5]^\top$, 
which passes $5$ values through graph $\graphin$ (equivalent to SMM $\adjin'\Xsf$).
After $\sigma_1$ operation, 
$\Xsf$ passes from $\{\tikzcircle[fill=red]{3pt},\tikzcircle[fill=orange]{3pt},\tikzcircle[fill=yellow]{3pt},\tikzcircle[fill=ao]{3pt},\tikzcircle[fill=cyan]{3pt}\}$ to the re-ordered $\{\tikzcircle[fill=red]{3pt},\tikzcircle[fill=orange]{3pt},\tikzcircle[fill=ao]{3pt},\tikzcircle[fill=cyan]{3pt},\tikzcircle[fill=yellow]{3pt}\}$.
Then, the $\delta$ operation computes the difference of values stored in the neighboring source nodes to obtain the target nodes.

The $\gammain$ operation extracts the effective message passing to the subsequent graph computation by classifying the nodes with or without in-degree edges in $\graphin$.
Thus, all interdependent nodes
$\{x_1 (\tikzcircle[fill=red]{3pt}), 
x_2 (\tikzcircle[fill=orange]{3pt}), 
x_4 (\tikzcircle[fill=ao]{3pt}), 
x_5 (\tikzcircle[fill=cyan]{3pt})\}$ 
in $\graphin$ are recovered, \ie, those nodes containing one or multiple in-degree edges.

After the $\sigma_2$ operation, nodes
are rearranged in order
$\{x_1 (\tikzcircle[fill=red]{3pt}),\allowbreak 
x_2 (\tikzcircle[fill=orange]{3pt}), 
x_3 (\mathrm{None}), 
x_4 (\tikzcircle[fill=ao]{3pt}), 
x_5 (\tikzcircle[fill=cyan]{3pt})\}$.
Interestingly, the imaginary nodes (`$\diamond$' drawn by dotted lines) reflect the $\delta'$ matrix in Figure~\ref{fig::Q-decom_1}.
Next, the $\Sigma$ operation takes the sum of source nodes to get target nodes.
Finally, we get the results
$\{\tikzcircle[fill=red]{3pt},
\tikzcircle[fill=orange]{3pt},
\tikzcircle[fill=orange]{3pt},
\tikzcircle[fill=ao]{3pt},
\tikzcircle[fill=cyan]{3pt},
\tikzcircle[fill=cyan]{3pt}\}$, 
which recover the (permuted) in-degree edges (represented by $\adjin'$) matching the correct nodes in~$\graphin$.

\section{Secure Sparse Matrix Multiplication}
\label{sec::smm_protocol}
Given the sparse matrix decomposition from Theorem~\ref{the::general_dec_main}, SMM can be 
transformed into an ordered sequence of basic operations from right to left as Theorem~\ref{the::smm_main} (proof in $\S$\ref{sec::proof_smm}).
If we expect to compute $\feamat\adjmat$, the linear transformations should be performed sequentially from left to right.
For a sparse matrix that is multiplied by another sparse matrix, we can combine the sequential computation of $\adjmat\feamat$ and $\feamat\adjmat$.

\begin{restatable}[Sparse Matrix Multiplication]{theorem}{thmsmm}
\label{the::smm_main}
Consider a sparse matrix $\adjmat$ and a dense matrix $\feamat$.
Computing $\adjmat\feamat =
\allowbreak \sigma_5 \delta_m ^{\top} \gammaout \sigma_4 \Sigma ^{\top} \Lambda \sigma_3\allowbreak \Sigma \sigma_2 \gammain \delta_n \sigma_1 \feamat$ requires an ordered sequence of permutation group action, element-wise multiplication, 
and constant matrix multiplication from right to left.
\end{restatable}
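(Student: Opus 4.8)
The plan is to read Theorem~\ref{the::smm_main} as an operational corollary of Theorem~\ref{the::general_dec_main}: the factorization of $\adjmat$ is already established there, so all that remains is to substitute it into the product $\adjmat\feamat$ and evaluate one factor at a time. First I would invoke associativity of matrix multiplication to parenthesize the product from the right, so that $\feamat$ is acted upon by the innermost (rightmost) factor first. Concretely, setting $\feamat_0 := \feamat$ and applying the twelve factors one by one from the right yields intermediate matrices
\[
\feamat_1 := \sigma_1\feamat_0,\quad \feamat_2 := \delta_n\feamat_1,\quad \feamat_3 := \gammain\feamat_2,\quad \ldots,\quad \feamat_{12} := \sigma_5\feamat_{11},
\]
and Theorem~\ref{the::general_dec_main} guarantees $\feamat_{12} = \adjmat\feamat$. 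This already exhibits the computation as an ordered right-to-left sequence of single-factor applications acting on the row space of $\feamat$, leaving its feature columns untouched.

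Second, I would check dimensional compatibility along the chain so that each intermediate product is well-defined. The row dimension begins at $n$ and is preserved by $\sigma_1$ and $\delta_n$; it is lifted to $\numnonzero$ by $\gammain$ and preserved by the middle block $\sigma_2,\Sigma,\sigma_3,\Lambda,\Sigma^{\top},\sigma_4$; and it is finally mapped to $m$ by $\gammaout$ and preserved by $\delta_m^{\top}$ and $\sigma_5$, matching the $m\times n$ shape of $\adjmat$. Since no factor mixes the feature columns, each step is equivalently a linear map applied column by column.

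Third, I would classify every factor into exactly one of the three declared operation types. The five permutations $\sigma_1,\dots,\sigma_5$ are permutation group actions (reorderings of rows). The weight matrix $\Lambda$ is diagonal, and the selection matrices $\gammain,\gammaout$ are rectangular (0/1) diagonal embeddings; all three therefore act as element-wise multiplications---by the diagonal entries, respectively by a 0/1 mask composed with a fixed reshaping. The remaining factors $\Sigma,\Sigma^{\top},\delta_n,\delta_m^{\top}$ are fixed matrices depending only on the dimensions $m,n,\numnonzero$, so their application is constant matrix multiplication. Composing these classifications along the chain of the first step yields the claimed ordered sequence.

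The statement is a short corollary, so the only genuine care needed is bookkeeping: tracking the three distinct row dimensions $n,\numnonzero,m$ so that all twelve products type-check, and justifying the membership of the non-obvious factors. The one place where I expect the argument to require an explicit word is the treatment of $\gammain$ and $\gammaout$: because they are rectangular and change the ambient dimension, I would argue that each factors as a fixed embedding composed with an element-wise 0/1 selection, which is precisely what lets them be realized by our oblivious selection-multiplication primitive rather than by a general (and costly) matrix product. Everything else reduces to associativity together with the already-proven decomposition.
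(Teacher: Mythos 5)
Your proposal is correct and follows essentially the same route as the paper: substitute the decomposition from Theorem~\ref{the::general_dec_main}, evaluate right to left, and classify each factor as a permutation group action, an element-wise multiplication ($\Lambda$, $\gammain$, $\gammaout$, the latter two with padding/cut-off, exactly your ``embedding composed with a 0/1 selection''), or a constant matrix multiplication ($\Sigma$, $\delta$). Your version merely adds explicit dimension bookkeeping that the paper leaves implicit.
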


For secure MPC, the graph owner $\pp_0$ first decomposes its graph to obtain matrices $\sigma_1, \sigma_2, \sigma_3, \sigma_4, \sigma_5, \gammaout, \gammain, \Lambda$.
These matrices are privacy-sensitive and should not be learned by the feature owner $\pp_1$.
The summation matrix $\Sigma$ and difference matrices $\delta_m, \delta_n$ are constants (given dimensionality of~$\adjmat$) and thus are public to both $\pp_0$ and $\pp_1$.
Next, $\pp_0$ and $\pp_1$ jointly execute the MPC protocols of SMM, which multiplies the above matrices described in Theorem~\ref{the::smm_main}.

We first present an oblivious permutation protocol (for secure permutation operations based on $\sigma_1, \ldots, \sigma_5$) in Section~\ref{sec::op_pro}
and then an oblivious selection-multiplication protocol (for privately multiplying $\gammaout$ and $\gammain$) in Section~\ref{sec::osm_pro}.
Finally, we describe how to realize our \osmm protocol using our OP and OSM protocols in Section~\ref{subsec:prosmm}.

\subsection{Oblivious Permutation}
\label{sec::op_pro}
Protocol~$\Pi_{\ssp}$
is our oblivious permutation protocol.
Given $\pp_0$'s private permutation $\sigma\in\Sbb_k$ and $\pp_1$'s private $k$-dimensional vector $\xvec\in\Zbb^k_{2^L}$, $\Pi_{\ssp}$ generates a secret share $\l \sigma\xvec \r_i$ for $\pp_i \in \{\pp_0, \pp_1\}$ without revealing $\sigma$ or $\xvec$.
The protocol parameter $\type \in \{\raw, \shared\}$ specifies the type of input vector $\xvec$.
If $\type$ is $\raw$, $\xvec$ is initially owned by $\pp_1$; otherwise, it is secret-shared among $\pp_0$ and $\pp_1$.

\begin{protocol}[!t]
\caption{$\Pi_{\ssp}$: Oblivious Permutation}\label{fig::sspcons:main}
\begin{algorithmic}[1]
	\item[\textbf{Parameter:} $\pp_0$ and $\pp_1$ know $\type \in \{\raw, \shared\}$.]
	\REQUIRE $\pp_0$ inputs $\sigma$ and $\pp_1$ inputs $\xvec$ if $\type == \raw$;
	\\~~~~~otherwise, $\pp_0$ inputs $(\sigma, \l\xvec\r_0)$ and $\pp_1$ inputs $\l\xvec\r_1$.
	\ENSURE $\pp_0$ gets $\l \sigma\xvec \r_0$ and $\pp_1$ gets $\l \sigma\xvec \r_1$.
	\STATE {\color{gray}\COMMENT{Offline Phase: Generate Correlated Randomness}}
	\STATE $\tt, \pp_0$: Get $\pi, \l \pi\uvec \r_0 \leftarrow \prf(\key_0, \ctr)$
	\STATE $\tt, \pp_1$: Get $\uvec \leftarrow \prf(\key_1, \ctr)$
	\STATE $\tt$: Send $\l \pi\uvec \r_1 = \pi\uvec - \l \pi\uvec \r_0$ to $\pp_1$
	\STATE {\color{gray}\COMMENT{Online Phase: Compute $\l \sigma\xvec \r$ in 1 Round}}
	\STATE $\pp_0$: Send $\delta_{\sigma} = \sigma \cdot \pi^{-1}$ to $\pp_1$\label{pro:op:masksigma}
	\IF{ $\type == \raw $}
		\STATE $\pp_1$: Send $\delta_{\xvec} = \xvec-\uvec$ to $\pp_0$\label{pro:op:maskx}
		\STATE $\pp_0$: Compute $\l \sigma\xvec\r_0={\color{colora}\sigma\delta_{\xvec} + \delta_{\sigma} \l \pi\uvec \r_0}$\label{pro:op:reconst0raw}
	\ELSE
		\STATE $\pp_1$: Send $\delta_{\l \xvec \r_1} = \l \xvec \r_1-\uvec$ to $\pp_0$\label{pro:op:sendmaskedx1}
		\STATE $\pp_0$: Compute $\l \sigma\xvec\r_0 = {\color{colora} \sigma\delta_{\l \xvec\r_1} + \delta_{\sigma} \l \pi\uvec \r_0 + \sigma \l \xvec \r_0}$\label{pro:op:reconst0shared}
	\ENDIF
	\STATE $\pp_1$: Compute $\l \sigma\xvec\r_1= {\color{colorb}\delta_{\sigma} \l \pi\uvec \r_1}$\label{pro:op:reconst1}
	\RETURN $\l \sigma\xvec \r$
\end{algorithmic}
\end{protocol}
\begin{table}[!t]
\centering
		\caption{Communication for Oblivious Permutation}
				\setlength\tabcolsep{6pt}
			\begin{tabular}{l|c|c|c}
			 \hline
\textbf{Protocol} & \textbf{Offline} & \textbf{Online} & \textbf{Round} \\
\hline
Asharov~\etal~\cite{ccs/AsharovHIKNPTT22} &$0$&$6k\bitlen$ &$3$\\\hline
OLGA~\cite{ccs/AttrapadungH0MO21} & $2k(\bitlen + 32)$ &$2k\bitlen $ & $1$\\\hline
Araki~\etal~\cite{ccs/Araki0OPRT21} & $0$ & $4k\bitlen$& 2 \\ \hline 
\rowcolor{grayL}$\Pi_\ssp$ & $k\bitlen$& $ k\bitlen + k\log k$ & $1$
\\\hline
			\end{tabular}\\
$\bitlen$: bit-length of data,
$k$: degree of the permutation group 
\label{table:op_comm}
\end{table} 

\paragraph{Offline Phase.}
The commodity server $\tt$ assists $\pp_0$ and $\pp_1$ to generate a random permutation $\pi\in\Sbb_k$, a random vector $\uvec\in\Zbb^k_{2^L}$, and correlated randomnesses $\l \pi \uvec \r_0\in\Zbb^k_{2^L}, \l \pi \uvec \r_1\in\Zbb^k_{2^L}$.

\paragraph{Online Phase.}
$\pp_0$ masks $\sigma$ using random $\pi^{-1}$ (\ie, inverse permutation of $\pi$) to get random {$\delta_\sigma$ ({Line~\ref{pro:op:masksigma}})}.
If $\type$ is $\raw$, $\pp_1$ masks $\xvec$ using random $\uvec$ to get $\delta_\xvec$ (Line~\ref{pro:op:maskx}).
If $\type$ is $\shared$, 
$\pp_1$ needs not mask $\xvec$ since $\l \xvec \r_0$ is kept by $\pp_0$ 
as a part of computing $\l \sigma\xvec\r_0$ (Line~\ref{pro:op:reconst0shared}).
In this case, $\pp_1$ masks $\l \xvec\r_1$ using random $\uvec$ to get random {$\delta_{\l\xvec\r_1}$ (Line~\ref{pro:op:sendmaskedx1})}.
$\pp_0$ and $\pp_1$ can then obtain the respective secret shares $\l \sigma\xvec\r_0, \l \sigma\xvec\r_1$.

\paragraph{Correctness.}
Here, we verify that 
${\color{colora}\l\sigma\xvec\r_0} + {\color{colorb}\l\sigma\xvec\r_1} = \sigma\xvec$.
If $\type$ is $\raw$, it holds that $\sigma\Xsf = \sigma (\Xsf-\uvec + \uvec) = \sigma(\delta_{\Xsf} + \uvec) = \sigma\delta_{\Xsf} + \sigma\uvec = \sigma\delta_{\Xsf} + \sigma \pi^{-1}\pi\uvec = \sigma\delta_{\xvec} + \delta_{\sigma} \pi \uvec={\color{colora}\sigma\delta_{\xvec} + \delta_{\sigma} \l\pi \uvec\r_0} + {\color{colorb}\delta_{\sigma} \l \pi \uvec\r_1}$.

If $\Xsf$'s $\type$ is $\shared$,
$\sigma\Xsf = \sigma (\l \Xsf \r_0 + \l \Xsf \r_1-\uvec + \uvec) = \sigma(\l \Xsf \r_0 + \delta_{\l\Xsf\r_1} + \uvec) = \sigma \l \Xsf \r_0 + \sigma\delta_{\l\Xsf\r_1} + \sigma \pi^{-1} \pi\uvec = \sigma \l \Xsf \r_0 + \sigma\delta_{\l\Xsf\r_1} + \delta_{\sigma}\pi \uvec={\color{colora}\sigma \l \Xsf \r_0 + \sigma\delta_{\l\Xsf\r_1} + \delta_{\sigma}\l\pi \uvec\r_0} + {\color{colorb}\delta_{\sigma}\l\pi \uvec\r_1}$.

\paragraph{Communication.}
Since $\sigma \in \Sbb_k$, $\log k$ bits are enough to represent $k$ elements.
The online phase of $\Pi_\ssp$ requires communication of $k\log k + kL$ bits (\ie, sending $\delta_\sigma, \delta_\Xsf$ in the $\raw$ case or sending $\delta_\sigma, \delta_{\l\Xsf\r_1}$ in the $\shared$ case) in $1$ round.

\paragraph{Comparison to Existing Works.}
Asharov~\etal~\cite{ccs/AsharovHIKNPTT22} spend $6kL$~bits online in three rounds.
Araki~\etal~\cite{ccs/Araki0OPRT21}'s oblivious shuffle requires $4kL$~bits in two rounds for $k$-element permutation.
The OLGA protocol~\cite{ccs/AttrapadungH0MO21} is $1$-round but communicates $2k(L + 32)$~bits offline and $2kL$~bits online.
Our $\Pi_{\ssp}$ protocol is also $1$-round, communicates $kL$~bits offline and $kL + k\log k$~bits online.
Particularly, $k$ equals to $\numedge$ or $\numnode$ for GCN.
In practice, $\log k$ is much smaller than $L$, \eg, for a $10^6$-node graph, $\log k= 20< L=64$.

\subsection{Oblivious Selection-Multiplication}
\label{sec::osm_pro}

We design the oblivious selection-multiplication 
protocol $\Pi_{\SM}$ 
in Protocol~\ref{fig:osm-main}.
It takes a private bit (called ``selector'') 
$s\in\Zbb_{2}$ from $\pp_0$ and a secret share $\l x \r$ of an arithmetic number $x\in\Zbb_{2^L}$ owned by $\pp_0$ and $\pp_1$.
$\Pi_{\SM}$ generates a secret share of $0$ if $s=0$ or share of $x$ otherwise without disclosing $s$ or $x$.

\begin{protocol}[!t]
\caption{$\Pi_{\SM}$: Oblivious Selection-Multiplication}\label{fig:osm-main}
\begin{algorithmic}[1]
	\REQUIRE $\pp_0$ inputs $(s, \l x \r_0)$ and $\pp_1$ inputs $\l x \r_1$.
	\ENSURE $\pp_0$ gets $\l sx \r_0$ and $\pp_1$ gets $\l sx \r_1$.
	\STATE {\color{gray}\COMMENT{Offline Phase: Generate Correlated Randomness}}
	\STATE $\tt, \pp_0$: Get $(b, \l u \r_0, \l bu \r_0) \leftarrow \prf(\key_0, \ctr)$
	\STATE $\tt, \pp_1$: Get $\l u \r_1 \leftarrow \prf(\key_1, \ctr)$
	\STATE $\tt$: Send $\l bu \r_1 = bu - \l bu \r_0$ to $\pp_1$
	\STATE {\color{gray}\COMMENT{Online Phase: Compute $\l sx \r$ in 1 Round}}
	\STATE $\pp_0$: Send $\delta_s= s-b$ to $\pp_1$\label{osm::masks}
	\STATE $\pp_1$: Send $\delta_{\l x\r_1} = \l x\r_1 - \l u\r_1$ to $\pp_0$\label{osm::maskx1}
	\STATE $\pp_0$: Compute $\delta_x =\l x \r_0 - \l u\r_0 + \delta_{\l x\r_1}$
	\STATE $\pp_0$: Compute $\langle sx \rangle_0 = {\color{colora}s \delta_x + \delta_s \l u\r_0 + (-1)^{\delta_s} \l bu\r_0}$\label{osm::l18}
	\STATE $\pp_1$: Compute $\langle sx \rangle_1 = {\color{colorb}\delta_s\l u\r_1 + (-1)^{\delta_s}\l bu\r_1}$\label{osm::l19}
	\RETURN $\l sx \r$
\end{algorithmic}
\end{protocol}
\begin{table}[!t]
\centering
		\caption{Communication for Oblivious Selection-Mult.}
			\setlength\tabcolsep{12pt}
			\begin{tabular}{l|c|c|c}
 		\hline
\textbf{Protocol} & \textbf{Offline} & \textbf{Online} & \textbf{Round} \\
\hline
$\promult$~\cite{crypto/Beaver91a} & $\bitlen$&$2\bitlen$& $1$\\\hline
OT~\cite{pkc/Tzeng02} &$\bitlen$&$2\bitlen + 1$&$1$\\\hline
\rowcolor{grayL}$\Pi_\SM$ & $\bitlen$&$\bitlen + 1$& $1$\\
 	 \hline 
			\end{tabular}

 $\bitlen$: bit-length of data
\label{table:osm_comm}
\end{table} 



\paragraph{Offline Phase.}
The commodity server $\tt$ assists $\pp_0, \pp_1$ to generate a random bit $b\in\Zbb_{2}$, a secret share of a random number $u\in\Zbb_{2^L}$, and correlated randomness $\l bu \r_0 \in \Zbb_{2^L}, \l bu \r_1 \in \Zbb_{2^L}$

\paragraph{Online Phase.}
$\pp_0$ masks $s$ using random $b$ to generate random $\delta_s$ (Line~\ref{osm::masks}).
$\pp_1$ masks $\l x\r_1$ using random $\l u\r_1$ to generate random $\delta_{\l x\r_1}$ (Line~\ref{osm::maskx1}).
After receiving the masked $\l x\r_1$ and $s$, $\pp_0$ and $\pp_1$ can respectively compute the shares $\l sx\r_0, \l sx\r_1$.

\paragraph{Correctness.}
Here, we 
verify that 
${\color{colora}\l sx\r_0} + {\color{colorb}\l sx\r_1}=sx$ 
by using Lemma~\ref{lem::sxb}
(proven in Appendix~\ref{sec:proof_lemma}).

\begin{restatable}{lemma}{sxb}
\label{lem::sxb}
Let $\Abb$ be an Abelian group and $\Bbb =\{0, 1\}$ be the binary group.
Let map $f: \Bbb \times \Abb \rightarrow \Abb$ be defined as $f(s, x) = x \mbox{ if } s = 1 \mbox{ else } 0$.
Then, for any $s,b\in \Bbb$ and $x,u\in\Abb$:
\begin{itemize}
	\setlength{\itemsep}{0pt}
	\setlength{\parskip}{0pt}
	\setlength{\parsep}{0pt}
\item[(i)] $f(s, x + u) = f(s, x) + f(s, u)$.
\item[(ii)] $f(s + b, x) = f(s, x) + (-1)^s f(b, x)$.
\end{itemize}
\end{restatable}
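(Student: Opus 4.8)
The plan is to prove both identities by direct verification over the finitely many values of the binary arguments, after first recording the correct reading of the two distinct additions that appear. Observe that $f(s,x)$ is simply the selection map: it returns the identity $0\in\Abb$ when $s=0$ and returns $x$ when $s=1$; equivalently, regarding $s\in\{0,1\}$ as an integer, $f(s,x)$ agrees with the scalar action $s\cdot x$ defined by repeated addition in $\Abb$. The only genuine subtlety is that in part (ii) the sum $s+b$ is taken inside the binary group $\Bbb$ (i.e.\ modulo $2$), whereas the coefficient $(-1)^s$ must be read as an action on $\Abb$, namely $(+1)\cdot a=a$ and $(-1)\cdot a=-a$ (the additive inverse in $\Abb$). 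Pinning down these two conventions up front is what makes the remainder mechanical.

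For (i) I would fix $s$ and split into two cases. When $s=0$, both sides collapse to $0$, since $f(0,\cdot)$ is the constant map to the identity $0\in\Abb$; when $s=1$, both sides equal $x+u$. Equivalently, one may observe that for each fixed $s$ the map $x\mapsto f(s,x)$ is either the zero endomorphism or the identity endomorphism of $\Abb$, and both are group homomorphisms, hence additive, which gives a one-line case-free argument.

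For (ii) I would run the $2\times2$ truth table over $(s,b)\in\{0,1\}^2$, evaluating $s+b$ modulo $2$ on the left and applying the sign convention on the right:
\begin{itemize}
\item $(s,b)=(0,0)$: $s+b=0$, so $\mathrm{LHS}=f(0,x)=0$ and $\mathrm{RHS}=f(0,x)+(+1)f(0,x)=0$.
\item $(s,b)=(0,1)$: $s+b=1$, so $\mathrm{LHS}=f(1,x)=x$ and $\mathrm{RHS}=f(0,x)+(+1)f(1,x)=x$.
\item $(s,b)=(1,0)$: $s+b=1$, so $\mathrm{LHS}=f(1,x)=x$ and $\mathrm{RHS}=f(1,x)+(-1)f(0,x)=x$.
\item $(s,b)=(1,1)$: $s+b=0$, so $\mathrm{LHS}=f(0,x)=0$ and $\mathrm{RHS}=f(1,x)+(-1)f(1,x)=x-x=0$.
\end{itemize}
All four rows agree, establishing (ii).

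The hard part is not any calculation but the bookkeeping: making it visible that $s+b$ lives in $\Bbb$ while $(-1)^s$ is interpreted in $\Abb$, and that these are precisely the two operations $\Pi_\SM$ exploits. Property (i) is what lets $\pp_0$ and $\pp_1$ split the masked value $x=\delta_x+u$ additively across their shares, while property (ii), with its $(-1)^{\delta_s}$ correction term, is exactly what allows the masked selector $\delta_s=s-b$ to be undone. I would therefore flag this correspondence explicitly so the lemma's role in the correctness argument for $\Pi_\SM$ is transparent rather than left to the reader.
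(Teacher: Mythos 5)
Your proof is correct and follows essentially the same route as the paper's: both establish (i) by splitting on $s\in\{0,1\}$ and (ii) by exhaustively checking the cases of $s$ and $b$, with the sum $s+b$ read modulo $2$ and $(-1)^s$ acting as negation in $\Abb$. The only differences are cosmetic --- your endomorphism remark for (i) and the explicit $2\times 2$ table for (ii) are slight repackagings of the identical case analysis.
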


Let $f: \Bbb \times \Abb \rightarrow \Abb$ be the same $f$ as above.
Using Lemma~\ref{lem::sxb}, we have
$f(s,x) =f(s, x- u) + f(s, u) 
=f(s, \delta_x) + f(s - b + b, u ) 
=f(s, \delta_x) + f(\delta_s, u) + (-1) ^{\delta_s }f(b, u)=s\delta_x + \delta_s u + (-1)^{\delta_s} (bu)=s\delta_x + \delta_s \l u\r_0 + \delta_s \l u\r_1 + (-1)^{\delta_s} (bu)={\color{colora}s\delta_x + \delta_s \l u\r_0} + {\color{colorb}\delta_s \l u\r_1} + {\color{colora}(-1)^{\delta_s} \l bu\r_0} + {\color{colorb}(-1)^{\delta_s} \l bu\r_1}$.

\paragraph{Communication.}
$\Pi_\SM$ requires communicating $L + 1$ bits (\ie, $65$ bits for sending $ \delta_{\l x\r_1}, \delta_s$) online in $1$ round.
Except for $\Pi_\SM$, OT based~\cite{pkc/Tzeng02} protocol (by using two OT instances to select $\l x\r$ and $0$) and standard arithmetic multiplication (by transforming binary $s\in\Zbb_2$ into arithmetic $s\in\Zbb_{2^\bitlen}$) can also realize the functionality of section-multiplication.
As compared in Table~\ref{table:osm_comm}, our protocol saves about $50\%$ of communication while having the same round complexity compared to the OT-based~\cite{pkc/Tzeng02} protocol and standard Beaver-triple-based~\cite{crypto/Beaver91a} ($\promult$).

\subsection{Construction of \texorpdfstring{\osmm}{OSMM}}
\label{subsec:prosmm}
Based on our sparse matrix decomposition and protocols for OP and OSM, we present our \osmm protocol $\prosmm$ in Protocol~\ref{fig:osmm}.
It takes a sparse matrix $\adjmat \in \Mbb_{m,n}(\Rcal)$ from $\pp_0$ and a dense matrix $\feamat \in \Mbb_{n,d}(\Rcal)$ from $\pp_1$.
$\prosmm$ generates a secret share $\l \adjmat \feamat \r_i$ for $\pp_i \in \{\pp_0, \pp_1\}$ without leaking $\adjmat$ or $\feamat$.

\paragraph{\osmm Realization.}
Following Theorem~\ref{the::smm_main}, $\prosmm$ essentially performs an ordered sequence of linear transformations ($\sigma_5 \delta_m ^{\top} \gammaout \sigma_4 \Sigma ^{\top} \allowbreak\Lambda \sigma_3 \Sigma \sigma_2 \gammain \delta_n \sigma_1$) from right to left over $\pp_1$'s private input $\feamat$.
Multiplying public matrices $\delta_n, \Sigma, \Sigma^{\top}, \delta_m^T$ can be done non-interactively on secret shares 
(Lines~\ref{osmm::deltan},~\ref{osmm::Sigma},~\ref{osmm::Sigmat},~\ref{osmm::deltam}).

Permuting the rows of input $\feamat$ or intermediate output $\Ysf$ based on $\sigma_1, \ldots, \sigma_5$ are performed by invoking $d$ parallel ${\Pi}_{\ssp}$ instances as ${\Pi}_{\ssp}$ takes a column vector as input, 
but $\feamat$ and $\Ysf$ are matrices with $d$ columns (Lines~\ref{osmm::sigma1},~\ref{osmm::sigma2},~\ref{osmm::sigma3},~\ref{osmm::sigma4},~\ref{osmm::sigma5}).

Since $\gammain$ and $\gammaout$ are diagonal matrices with binary values, multiplications of them (Lines~\ref{osmm::gammain} and~\ref{osmm::gammaout}) can be done by $nd$ and $md$ parallel $\Pi_\SM$ instances, respectively.\footnote{
In practice, the parties need to pad zero values (non-interactively) before invoking the first $\Pi_\SM$ and cutting off zero values after invoking the last $\Pi_\SM$ to ensure consistent matrix dimensionality.
For simplicity, we omit this step in our $\prosmm$ protocol presentation.
}
Multiplication of $\Lambda$, a diagonal matrix with arithmetic values, is performed similarly,
but we can use the standard Beaver-triple-based multiplication protocol $\promult$ (Line~\ref{osmm::lambda}) instead of $\Pi_\SM$.


\begin{protocol}[!t]
\caption{$\prosmm$: Secure Sparse Matrix Multiplication}\label{fig:osmm}
\begin{algorithmic}[1]
	\REQUIRE $\pp_0$ inputs $\adjmat$ and $\pp_1$ inputs $\feamat$.
	\ENSURE $\pp_0$ gets $\l \Ysf \r_0$ and $\pp_1$ gets $\l \Ysf \r_1$ where $\Ysf = \adjmat\feamat$.
	\STATE $\pp_0$: Decomposes $\adjmat= \sigma_5 \delta_m ^{\top} \gammaout \sigma_4 \Sigma ^{\top} \Lambda \sigma_3 \Sigma \sigma_2 \gammain \delta_n \sigma_1$\label{osmm::decom}
	\STATE $\pp_0, \pp_1$: Invoke $\l \Ysf \r \leftarrow {\Pi}_{\ssp}(\sigma_1;\Xsf)$ \COMMENT{$\type == \raw$}\label{osmm::sigma1}
	\STATE $\pp_0, \pp_1$: Locally compute $\l \Ysf \r = \delta_n \l \Ysf \r$\label{osmm::deltan}
	\STATE $\pp_0, \pp_1$: Invoke $\l \Ysf \r \leftarrow \Pi_\SM(\gammain, \l \Ysf \r_0; \l \Ysf \r_1)$\label{osmm::gammain}
	\STATE $\pp_0, \pp_1$: Invoke $\l \Ysf \r \leftarrow {\Pi}_{\ssp}(\sigma_2, \l \Ysf \r_0; \l \Ysf \r_1)$\label{osmm::sigma2}
	\STATE $\pp_0, \pp_1$: Locally compute $\l \Ysf \r = \Sigma \l \Ysf \r$\label{osmm::Sigma}
	\STATE $\pp_0, \pp_1$: Invoke $\l \Ysf \r \leftarrow {\Pi}_{\ssp}(\sigma_3, \l \Ysf \r_0; \l \Ysf \r_1)$\label{osmm::sigma3}
	\STATE $\pp_0, \pp_1$: Invoke $\l \Ysf \r \leftarrow \promult(\Lambda, \l \Ysf \r_0; \l \Ysf \r_1)$\label{osmm::lambda}
	\STATE $\pp_0, \pp_1$: Locally compute $\l \Ysf \r = \Sigma^\top \l \Ysf \r$\label{osmm::Sigmat}
	\STATE $\pp_0, \pp_1$: Invoke $\l \Ysf \r \leftarrow {\Pi}_{\ssp}(\sigma_4, \l \Ysf \r_0; \l \Ysf \r_1)$\label{osmm::sigma4}
	\STATE $\pp_0, \pp_1$: Invoke $\l \Ysf \r \leftarrow \Pi_\SM(\gammaout, \l \Ysf \r_0; \l \Ysf \r_1)$\label{osmm::gammaout}
	\STATE $\pp_0, \pp_1$: Locally compute $\l \Ysf \r = \delta_m^\top \l \Ysf \r$\label{osmm::deltam}
	\STATE $\pp_0, \pp_1$: Invoke $\l \Ysf \r \leftarrow {\Pi}_{\ssp}(\sigma_5, \l \Ysf \r_0; \l \Ysf \r_1)$\label{osmm::sigma5}
	\RETURN $\l \Ysf \r$
\end{algorithmic}
\end{protocol}

\paragraph{Graph Protection and  Dimensions.}
All entries of $\sigma_5, \sigma_4, \sigma_3, \sigma_2, \sigma_1, \allowbreak\gammaout, \gammain,$ and $\Lambda$ are protected in $\prosmm$.
The dimensions of $\sigma_1$ and $\sigma_5$ are $\numnode$, corresponding to the number of rows in $\Asf$ and $\Xsf$. Since $\Asf$ and $\Xsf$ are held by $\pp_0$ and $\pp_1$, respectively, the dimensions of $\sigma_1$ and $\sigma_5$ are considered reasonable public knowledge.
The dimensions of $\sigma_4, \sigma_3, \sigma_2,$ and $\Lambda$ are $\numedge$, representing tolerable leakage useful for sparsity exploration.
$\gammain$ has dimensions of $\numedge \times \numnode$, while $\gammaout$ is $\numnode \times \numedge$, both of which do not incur extra graph leakage beyond $\numedge$ or $\numnode$.
Importantly, such general statistical information  about the graph does not compromise the privacy of specific nodes or edges or incur identifiable risks.

\paragraph{Correctness.}
$\prosmm$ follows the same sequence of transformations as in Theorem~\ref{the::smm_main}, which shows the correctness of our sparse matrix decomposition.
Since the underlying ${\Pi}_{\ssp}$ and $\Pi_\SM$ protocols are correct, so does
our $\prosmm$ protocol.


\paragraph{Communication.}
$\prosmm$ invokes $5$ ${\Pi}_{\ssp}$, $2$ $\Pi_\SM$, and $1$ $\promult$, communicating $2(2\numnonzero + m + n)dL$ bits offline and $((5\numnonzero + 2m + 2n)\bitlen + 3\numnonzero \log \numnonzero + m\log m + n\log n + m + n)d$ bits online in $8$ rounds (see Table~\ref{table:SMM_comm} for a breakdown).
The Beaver-triple-based protocol $\promult$~\cite{crypto/Beaver91a} for (dense) matrix multiplication communicates $mndL$ bits offline and $2mndL$ bits online.

In practical GCN usages, we have $m = n = \numnode < t = \numedge \ll \numnode^2 = mn$ and $\log m, \log n, \log t < L = 64$.
Also, $d$ is a relatively small constant.
So, the communication cost of $\prosmm$ is simplified to $O(\numedge)$, rather than $O(\numnode^2)$, by directly using $\promult$ for each entry in SMM.

\begin{table}[!t]
	\centering
	\caption{Cost for \osmm on $\adjmat \in \Mbb_{m,n}(\Rcal)$ and $\feamat \in \Mbb_{n,d}(\Rcal)$}
	\setlength\extrarowheight{2pt}
	\setlength\tabcolsep{1pt}
	\begin{tabular}{l|c|c|c}
	\hline
	\textbf{Protocol} & \textbf{Offline} & \textbf{Online} & \textbf{Rd} \\
	\hline
	\multirow{2}{*}{$5\ {\Pi}_{\ssp}$} & \multirow{2}{*}{$(3t + m + n)d\bitlen$} & $((3t + m + n)\bitlen + 3t \log t$ & \multirow{2}{*}{$5$} \\
	& & $ + m \log m + n \log n)d$ & \\
	\hline
	$2\ \Pi_\SM$ & $(m + n)d\bitlen$ & $((m + n)\bitlen + m + n)d$ & $2$ \\
	\hline
	$1\ \promult$ & $td\bitlen$ & $2td\bitlen$ & $1$ \\
	\hline
	\rowcolor{grayL} & & $((5\numnonzero + 2m + 2n)\bitlen + 3\numnonzero\log \numnonzero$ & \\
	\rowcolor{grayL}\multirow{-2}{*}{$\prosmm$} & \multirow{-2}{*}{$2(2\numnonzero + m + n)dL$} & $ + m\log m + n\log n + m + n)d$ & \multirow{-2}{*}{$8$} \\\hline
	\end{tabular}\\
		$t$: number of non-zero elements in $\adjmat$,
	$\bitlen$: bit-length of data,\\
	$d$: node feature dimensionality, Rd: round
	\label{table:SMM_comm}
\end{table}

\section{End-to-End GCN Inference and Training}
\label{sec::secgcn}

\paragraph{Implementation.}
$\cgnn$ adopts classical GCN~\cite{iclr/KipfW17} in the transductive setting with
two graph convolution layers ($\mathsf{GConv}$) 
followed by ReLU and softmax function.
We implement $\cgnn$ using Python
in the TensorFlow framework.
\begin{equation*}
\begin{aligned}
\Xsf,\Asf
	\rightarrow 
\boxed{
	\stackrel{\textstyle{\mathsf{GConv}}}{\boxed{\weimat_{1}}}
}
	\rightarrow 
 	\boxed{ {\mathsf{{ReLU}}}}
	\rightarrow 
\boxed{
	\stackrel{\textstyle{\mathsf{GConv}}}{\boxed{\weimat_{2}}}
}
	\rightarrow
	\Ysf
	\rightarrow 
 	 	\boxed{ {\mathsf{{Softmax}}}} 
 	 	\end{aligned}
 	\end{equation*}
We implemented all the above protocols (detailed in Appendix~\ref{sec::alg}) in $\cgnn$
upon secure computation with secret shares.
Since GCN inherits conventional neural networks, we still rely on similar functions and layers.
Following the ideas of \cite{sp/MohasselZ17,ccs/RatheeRKCGRS20,uss/WatsonWP22,acsac/0021ZCPTLY23,popets/AttrapadungHIKM22},
we re-implemented the relevant protocols under the 2PC setting in $\cgnn$ for ReLU,
softmax, Adam optimization, and more.
For non-sparse multiplication, $\cgnn$ still uses Beaver triples~\cite{crypto/Beaver91a}.

\paragraph{Forward Propagation.}
Recall that $\pp_0$ holds the (normalized) adjacency matrix $\Asf$ and $\pp_1$ holds the features $\Xsf$.
The first GCN layer is defined as $\Zsf =\mathsf{ReLU}(\Asf\Xsf \weimat)$, thus $\pp_0$ and $\pp_1$ jointly execute $\Pi_\smm$, $\promult$, and ReLU protocols (combining PPA~\cite{arith/Beaumont-SmithL01}, GMW~\cite{acssc/Harris03}, and $\SM$).
$\pp_0$ and $\pp_1$ will get $\l \Zsf\r_0$ and $\l \Zsf\r_1$, respectively.
Then, $\pp_0$ and $\pp_1$ securely compute $\Ysf= \Asf \Zsf \weimat$ and $\mathsf{Softmax}(\Ysf)$ in the second layer.
In the output layer, $\pp_0$ and $\pp_1$ jointly execute secure softmax protocol~\cite{acsac/0021ZCPTLY23}.

\paragraph{Backward Propagation.}
$\pp_0$ and $\pp_1$ securely compute $\mathsf{softmax} (\Ysf)-\Ysf'$ using cross-entropy loss to get $\frac{\partial loss}{\partial \Ysf}$, where $\Ysf'$ is the label matrix.
Then, we compute the gradient of a graph layer $\frac{\partial loss}{\partial \weimat}=\Zsf^{\top}\Asf^{\top} \frac{\partial loss}{\partial \Ysf}$, where the multiplication of $\Zsf^{\top}\Asf^{\top}$ is \osmm.
If we use the SGD optimizer, $\weimat$ is updated to be $\weimat \leftarrow \weimat - \eta \frac{\partial loss}{\partial W}$,
where $\eta$ is the learning rate.
If we use the Adam optimizer~\cite{iclr/KingmaB14}, $\weimat$ is updated by following the computation of Adam given $\frac{\partial loss}{\partial W}$.
The last step is to securely compute the gradient of ReLU and graph layer similarly.

\paragraph{GCN Inference and Training.} 
As for end-to-end secure GCN computations, $\pp_0$ and $\pp_1$ collaboratively execute a sequence of protocols to run a single forward propagation (for inference) or forward and backward propagation iteratively (for training).
$\cgnn$ supports both single-server simulation for multiple hosts and multiple-server execution in a distributed setting.
Using $\cgnn$, researchers and practitioners can realize various GCNs
using the template of $\texttt{class SGCN}$ (Appendix~\ref{sec::alg}),
similar to using the TensorFlow framework except that \emph{all computations are over secret shares}.

\section{Experiments and Evaluative Results}

We evaluate the performance of our \osmm protocol and $\cgnn$'s private GCN inference/training on three Ubuntu servers with $16$-core Intel(R) Xeon(R) Platinum 8163 2.50GHz CPUs of $62$GB
RAM and NVIDIA-T4 GPU of $16$GB RAM.
We aim to answer the three questions below.

\noindent\textbf{Q1.} \emph{How much communication/memory-efficient and accurate for \cgnn?} (\S\ref{sec::comm_compare_gcn}, \S\ref{sec::smmmemory}, \S\ref{sec::acc_compare_gcn})

\noindent\textbf{Q2.} \emph{How do different network conditions impact the running time of \cgnn's inference and training?} (\S\ref{sec::time_net})

\noindent\textbf{Q3.} \emph{How much efficiency has been improved by \osmm?} (\S\ref{sec:ablation})


{Graph Datasets.}
We consider three publication datasets widely adopted in GCN training: Citeseer~\cite{dl/GilesBL98}, Cora~\cite{aim/SenNBGGE08}, and Pubmed~\cite{ijcnlp/DernoncourtL17}.
Their statistics are summarized in Table~\ref{tab:datasets}.
\begin{table}[!t]
\centering
\caption{Dataset Statistics}
\setlength\tabcolsep{2pt}
\begin{tabular}{l|rrrr|rr}
\hline
\multicolumn{1}{c|}{\textbf{Dataset}} & \textbf{Node} & \textbf{Edge} & \textbf{Feature} & \textbf{Class} &\textbf{\# Train} &\textbf{\# Test}
\\ \hline
Cora 	 & $2,708$ 	 & $5,429$ 	 & $1,433$ & $7$ & $140$ & $1,000$ \\
Citeseer 	 	 & $3,312$ 	 & $4,732$ 	 & $3,703$ 	 & $6$ & $120$ & $1,000$ \\
Pubmed 	 	 & $19,717$ 	 	 & $44,338$ 	 	 & $500$ 	 & $3$ 	 & $60$ & $1,000$ \\
\hline
\end{tabular}
\begin{tablenotes}
\item \# Train/Test: number of samples in training/test dataset
\end{tablenotes}
\label{tab:datasets}
\end{table}

\subsection{Communication of GCN}
\label{sec::comm_compare_gcn}
To evaluate communication costs in $\cgnn$, we record the transmitting data, including frame and MPC-related data in both online and offline phases, across the servers or ports.
The inference refers to
a forward propagation, while the training involves an epoch of training.
Unlike classical CNN training over independent data points, GCN training feeds up the whole graph (\ie, $1$ batch) in each training epoch, thus no benchmarking for batch sizes.

\paragraph{Secure Training.} 
SecGNN~\cite{tsc/WangZJ23} and  CoGNN~\cite{ccs/ZouLSLXX24} are the only two open-sourced works for secure training with MPC.
SecGNN~\cite{tsc/WangZJ23} is the first work, meanwhile
 CoGNN~\cite{ccs/ZouLSLXX24} and its optimized version CoGNN-Opt are the most recent advances.
Thus, we choose them as \cgnn's counterparts for comparison.
Table~\ref{table:comm_on_off_gcn} shows their comparison results.
In general, $\cgnn$ uses ${\leq}1.3$GB in all cases.
Using SGD, $\cgnn$ uses $0.3075$GB, $0.5400$GB, and $1.2567$GB for training over Cora, Citeseer, and Pubmed.
With Adam, $\cgnn$ costs slightly higher communication due to SGD not needing $1/\sqrt{x}.$
The additional
costs are $6.2\%$, $3.7\%$, and $0.8\%$ for
training.
These differences are related to the sparsity of
data and the times of gradient update.
All the cases above require less communication costs than CoGNN and SecGNN.

\paragraph{Secure Inference.}
Except for CoGNN and SecGNN, we additionally compare \cgnn with the most recent secure inference  work -- OblivGNN~\cite{uss/XuL0AYY24} for comprehensiveness.
Table~\ref{table:comm_on_off_inf} compares the communication costs.
$\cgnn$ requires the lowest communication costs in all cases, reducing by  $\sim50\%$ of OblivGNN and $\sim 80\%$ of CoGNN-Opt.

\begin{table}[!t]
 	\centering
 	\caption{Communication (GB/epoch) for Training}
 	\label{table:comm_on_off_gcn}
 \setlength\tabcolsep{9pt}
 	\begin{tabular}{l|ccc}
 	\hline
 	\multirow{2}{*}{\textbf{Framework}} & \multicolumn{3}{c}{\textbf{Dataset}}
	\\\cline{2-4}
 & \textbf{Cora} 	 & \textbf{Citeseer} & \textbf{Pubmed}\\\hline
 	SecGNN & 	$18.99$ & $48.21$ & $31.74$\\
    CoGNN & $86.99$ & $202.81$ &$273.25$ \\
    CoGNN-Opt & $0.82$ & $1.4$ & $4.33$\\\hline
 \rowcolor{grayL}$\cgnn$ (SGD) & $0.3075$ & $0.5400$ & $1.2567$ \\
 \rowcolor{grayL}$\cgnn$ (Adam) & $0.3265$ & $0.5600$ & $1.2667$ \\
 	 	\hline
 	 	\end{tabular}
 	\end{table}

\begin{table}[!t]
 	\centering
 	\caption{Communication for Inference}
 	\label{table:comm_on_off_inf}
\setlength\tabcolsep{11pt}
 	\begin{tabular}{l|ccc}
 	\hline
 	\multirow{2}{*}{\textbf{Framework}} & \multicolumn{3}{c}{\textbf{Dataset}}
	\\\cline{2-4}
 & \textbf{Cora} 	 & \textbf{Citeseer} & \textbf{Pubmed}\\\hline
	SecGNN & 	$1$GB & $1.7$GB & $2.5$GB\\
    CoGNN & $85.63$GB & $201.29$GB &$263.59$GB\\
 CoGNN-Opt & $0.5$GB &$0.91$GB & $2.02$GB\\\hline
 OblivGNN-B & $34.32$GB & $61.81$GB &$16.33$GB\\
 OblivGNN &$0.29$GB&$0.41$GB& $1.65$GB\\\hline
    \rowcolor{grayL}$\cgnn$ & $114$MB & $274$MB & $602$MB \\
 	 	\hline
 	 	\end{tabular}
\end{table}

\subsection{Memory Usage}
\label{sec::smmmemory}
To avoid extra irrelevancy (\eg, communication), we tested the memory usage
on a single server, recording the largest observed value.
Table~\ref{table:mem_smm} reports memory usage for training with $\prosmm$ and the standard $\promult$ using Beaver triple.
Both protocols show acceptable results for
smaller Cora and Citeseer datasets.
Yet, $\prosmm$ saves $14.5\%, 20.8\%$ memory with secure SGD,
and $10.5\%, 18.2\%$ memory with secure Adam.

The maximum memory SGD training uses is slightly lower than with Adam, as Adam's optimization requires more memory.
When training over the larger Pubmed dataset,
an out-of-memory (OOM) error occurs (marked by $^{\oom}$) when using Beaver triple, whereas the \cgnn with $\prosmm$ supports the stable use (${<}2.7$GB) of memory for all datasets.

\begin{table}[!t]
 	\caption{Maximum Memory Usage (GB) for Training}
 	\label{table:mem_smm}
 	\setlength\tabcolsep{3pt}
 	\centering
 	\begin{tabular}{c|c|c|c|c}
 	\hline
	\textbf{Optimizer} & 	\textbf{Dataset} & \textbf{Protocol} & \textbf{Memory} & \textbf{Reduction} 
	\\
 	\hline 
	\multirow{6}{*}{\textbf{SGD}} &\multirow{2}{*}{\textbf{Cora}} & Beaver & $1.31$ &
	\\
 	 & & \cellcolor{grayL}$\prosmm$ & 	\cellcolor{grayL}$1.12$ & \cellcolor{grayL}$14.5\%$
 	\\\cline{2-5}
 	 & \multirow{2}{*}{\textbf{Citeseer}} & Beaver & $2.07$ &
	\\
 	 & & \cellcolor{grayL}$\prosmm$ & \cellcolor{grayL}$1.64$ & \cellcolor{grayL}$20.8\%$ 
	\\\cline{2-5}
	 & \multirow{2}{*}{\textbf{Pubmed}} & Beaver & ${>}28.82^{\oom}$ & 
	\\
	 & & \cellcolor{grayL}$\prosmm$ & \cellcolor{grayL}$1.94$ & \cellcolor{grayL}${>}93.3\%$
	\\\hline
 	\multirow{6}{*}{\textbf{Adam}} & \multirow{2}{*}{\textbf{Cora}} & Beaver & $1.91$ &
	\\
 	 & & \cellcolor{grayL}$\prosmm$ & 	\cellcolor{grayL}$1.71$ & \cellcolor{grayL}$10.5\%$
	\\\cline{2-5}
 	 & \multirow{2}{*}{\textbf{Citeseer}} & Beaver & $2.75$ &
	\\
 	 & & \cellcolor{grayL}$\prosmm$ & 	\cellcolor{grayL}$2.25$ & \cellcolor{grayL}$18.2\%$ 
	\\\cline{2-5}
	 & \multirow{2}{*}{\textbf{Pubmed}} & Beaver & ${>}28.02^{\oom}$ & 
	\\
	 & & \cellcolor{grayL}$\prosmm$ & \cellcolor{grayL}$2.69$ & \cellcolor{grayL}${>}90.4\%$ 
	\\\hline
 \end{tabular}
 \begin{tablenotes}
 \item  ${\oom}$:  out-of-memory (OOM) error occurs.
 \end{tablenotes}
\end{table}

\subsection{Model Accuracy}
\label{sec::acc_compare_gcn}
We trained the GCN over different datasets from random initialization for $300$ epochs using Adam~\cite{iclr/KingmaB14} with a $0.001$ learning rate.
Our configuration of model parameters (\ie, the dimensionality of hidden layers and the number of samples) follow the original setting~\cite{iclr/KipfW17}.
Since model accuracy is meaningful only for identical partitioning strategy, we compare the accuracy of secure training with plaintext in the same contexts in Table~\ref{tab:acc_inf_tra}.
Our results show that $\cgnn$'s accuracy is comparable to that of plaintext~training.
Specifically, $\cgnn$ achieves $\{73.5\%, 64.4\%, 75.4\%\}$ after $100$ epochs and $\{76.0\%, 65.1\%, 75.2\%\}$ after $300$ epochs.
Due to fluctuated training convergence, fixed-point representation, and non-linear approximation, model accuracy is slightly different.

 
\begin{table}[!t]
 	\centering
 	\caption{Model Accuracy}
 	\label{tab:acc_inf_tra}
 	\setlength\tabcolsep{7pt}
 	\begin{tabular}{l|ccc}
	\hline
 	\multirow{2}{*}{\textbf{Framework}} & \multicolumn{3}{c}{\textbf{Dataset}}
	\\\cline{2-4}
 & \textbf{Cora} &\textbf{Citeseer} & \textbf{Pubmed}\! \\
 	 	\hline
	{Plaintext}
	& $75.7\%$ 	 & $65.4\%$ & $74.5\%$\! \\
 	\rowcolor{grayL}$\cgnn$ ($100$ Epochs)\! & $73.5\%$ & $64.4\%$ & \textbf{75.4\%}\! \\
 	\rowcolor{grayL}$\cgnn$ ($300$ Epochs)\! & \textbf{76.0\%} & \textbf{65.1\%} & $75.2\%$\! \\
 	 	\hline
 	 	\end{tabular}
\end{table}

\subsection{Running Time in Different Networks}
\label{sec::time_net}
We simulate real-world deployment under different network conditions for \osmm, private inference, and private training.
In particular, we consider a normal network condition ($800$Mbps, $0.022$ms) and two poor network conditions, including a narrow-bandwidth (N.B.) network ($200$Mbps, $0.022$ms) and high-latency (H.L.) network ($800$Mbps, $50$ms).
Additionally,  TCP transmission involves the process of three-step handshake, data transmission, congestion control, and connection termination, thus practical time delay of $\osmm$ is varied below under different network conditions.
\begin{table*}[!t]
	\centering
		\caption{$1$-Epoch Training Time (seconds) in Normal, Narrow-Bandwidth, or High-Latency Networks}
		\label{table:tra_net}
  \setlength\tabcolsep{8pt}
			\begin{tabular}{l|l|r|r|r|r|r|r}
				\hline
     \multirow{2}{*}{\textbf{Dataset}} & \multirow{2}{*}{\textbf{{Protocol}}} & \multicolumn{2}{c|}{\textbf{Normal ($800$Mbps, $0.022$ms)}}& \multicolumn{2}{c|}{\textbf{N.B. ($200$Mbps, $0.022$ms)}}& \multicolumn{2}{c}{\textbf{H.L. ($800$Mbps, $50$ms)}}
     \\\cline{3-8}
     & & \phantom{sherman}{\textbf{SGD}} 
     & {\textbf{Adam}}
     & \phantom{sherman}{\textbf{SGD}}
     & \phantom{123}{\textbf{Adam}}
     & \phantom{sherman}{\textbf{SGD}}
     & \phantom{123}{\textbf{Adam}}\\
				\hline
    \multirow{3}{*}{\textbf{Cora}} & Beaver &$6.55 $ &$7.89 $  &$25.98 $   &$27.55 $   &$11.70 $  &$19.57 $  \\
	  &$\prosmm$&$4.20 $  &  $ 5.55$  &  $13.29 $  &  $14.88 $  
   & \raggedleft $9.11$
   &  $16.72 $   \\ 
     & (Saving) 
     & \cellcolor{grayL}{$35.9\%$} 
     & \cellcolor{grayL}{$29.7\%$} 
     & \cellcolor{grayL}{\textbf{48.8\%}} 
     & \cellcolor{grayL}{\textbf{46.0\%}} 
     & \cellcolor{grayL}{$22.1\%$}
     & \cellcolor{grayL}{$14.6\%$}
     \\\hline

   \multirow{3}{*}{\textbf{Citeseer}} & Beaver &$11.66 $   &$ 13.20$  &$46.31 $  &$48.75 $  &$ 18.53$  &$27.93 $  \\
   
    & $\prosmm$
    & \raggedleft $6.77$ 
    & \raggedleft $8.35$  & $24.00 $  & $26.44 $  & $ 13.47$  & $ 21.26$  \\
      & (Saving) & \cellcolor{grayL}{$41.9\%$} &\cellcolor{grayL}{$36.7\%$} & \cellcolor{grayL}{\textbf{48.2\%}} & \cellcolor{grayL}{\textbf{45.8\%}}& \cellcolor{grayL}{$27.3\%$}&\cellcolor{grayL}{$23.9\%$} \\\hline
    
     \multirow{2}{*}{\textbf{Pubmed}} & Beaver &OOM  &OOM& OOM& OOM& OOM& OOM  \\    
 & $\prosmm$  &$ 22.87 $   &$24.45 $ & $63.69 $   &$63.58 $   &$32.00 $  &$39.86 $ \\
	\hline
	\end{tabular} 
\end{table*}

\paragraph{Inference Time.}
Table~\ref{table:inf_time_net} compares the private inference time,
including TensorFlow-Graph construction and forward-propagation computation of GCN in varying network conditions
over multiple
datasets.
Compared to adopting Beaver triples, $\cgnn$ via $\prosmm$ is ${\sim}7\%$-$19\%$ faster in the normal network, ${\sim}35\%$-$45\%$ quicker in the narrow-bandwidth one, and saves ${\sim}6\%$-$17\%$ time in the high-latency setting.
The OOM problem prevents us from evaluating inference over Pubmed using Beaver triples, 
while $\prosmm$ takes ${\sim}30$-$50$s.

\begin{table}[!t]
    \centering
        \caption{Inference Time (seconds) in Varying Networks}
        \label{table:inf_time_net}
  \setlength\tabcolsep{5pt}
            \begin{tabular}{c|c|r|r|r}
                \hline
     \textbf{Dataset} & \textbf{Protocol} & \textbf{Normal} & \textbf{N.B.} & \textbf{H.L.}\\
                \hline
    \multirow{3}{*}{\textbf{Cora}} & Beaver &$17.48$  &  $28.34$ &   $24.22$    \\
       &$\prosmm$&$16.27 $  &  $ 21.06$  &  $22.76 $   \\ 
       & (Saving) &{$6.9\%$}& \cellcolor{grayL}{$25.7\%$} & {$ 6.0\%$}  \\\hline
   \multirow{3}{*}{\textbf{Citeseer}}& Beaver & $24.57 $    &$ 44.39$  &$33.32$    \\
     &$\prosmm$&$ 20.58 $    &$30.57 $    &$28.49 $     \\ 
     & (Saving) & \cellcolor{grayL}{$16.2\%$} & \cellcolor{grayL}{$31.3\%$} & \cellcolor{grayL}{$14.5\%$}   \\\hline
     \multirow{2}{*}{\textbf{Pubmed}} & Beaver & OOM  &  OOM    & OOM  \\
 & $\prosmm$ &$ 29.38$     &$49.93 $  &$38.40 $  \\
    \hline
            \end{tabular}

N.B.: narrow bandwidth,
H.L.: high latency
\end{table}


\paragraph{Training Time.}
Table~\ref{table:tra_net} compares the private training time
with SGD/Adam in varying network conditions over different datasets.
We tested $10$ epochs and got the average.
In the normal network, $\cgnn$ via $\prosmm$ is ${\sim}56\%$-$73\%$ faster with SGD and ${\sim}42\%$-$58\%$ faster with Adam.
In the narrow-bandwidth network, $\cgnn$ via $\prosmm$ is ${\sim}93\%$-$95\%$ quicker with SGD and ${\sim}84\%$-$85\%$ quicker with Adam.
Besides, $\cgnn$ via $\prosmm$ is ${\sim}28\%$-$38\%$ faster with SGD and ${\sim}17\%$-$32\%$ faster with Adam in the high-latency setting.

\begin{table}[!t]
 	\centering
 	\caption{Communication Costs (MB) for SMM}
 	\label{table:comm_smm}
 	\setlength\tabcolsep{7pt}
 	 	\begin{tabular}{c|c|r|r|r}
 	 	\hline
 	\textbf{$\#$E/N} & \textbf{$\#$Node} & \textbf{Beaver} & ${\prosmm}$ & \textbf{Saving} \\
 	 	\hline
 	$1$ & $1000$ & $25.1$ & $0.8$ & \cellcolor{grayL}{$ 96.8\%$} \\
 	$1$ & $2000$ & $ 100.3$ & $1.3$ & \cellcolor{grayL}{$ 98.7\%$}\\
 	$1$ & $5000$ & $626.1$ & $2.8$ & \cellcolor{grayL}{\textbf{99.6\%}} 	\\
 	\hline
 	$2$ & $1000$ & $25.1$ & $1.0$ & \cellcolor{grayL}{$ 95.9\%$} \\
 	$2$ & $2000$ & $100.3$ & $1.8$ & \cellcolor{grayL}{$ 98.2\%$} \\
 	$2$ & $5000$ & $626.0$ & $3.9$ & 	\cellcolor{grayL}{\textbf{99.4\%}} \\
 	\hline
 	$3$ & $1000$ & $25.1$ & $1.3$ & \cellcolor{grayL}{$ 95.0\%$} \\
 	$3$ & $2000$ & $100.3$ & $2.2$ & \cellcolor{grayL}{$ 97.8\%$} \\
 	$3$ & $5000$ & $626.1$ & $5.1$ & \cellcolor{grayL}{\textbf{99.2\%}} 	\\
 	\hline
 	 	\end{tabular} 
 	 	
 	$\#$E/N: ratio of edges per node,
	\\``Beaver'': using Beaver triples for SMM
\end{table}

\begin{table}[!t]
 	\centering
 	\caption{Memory Usage (MB) Given Varying Sparsity}
 	\label{table:mem_smm_sparse}
 \setlength\tabcolsep{7pt}
 	 	\begin{tabular}{c|c|r|r|r}
 	 	\hline \textbf{$\#$E/N} & \textbf{$\#$Node} & \textbf{Beaver} & ${\prosmm}$ & \textbf{Saving} \\
 	 	\hline
 	$1$ & $1000$ & $688.6$ & $572.7$ & \cellcolor{grayL}{$ 16.83\%$} \\
 	$1$ & $2000$ & $1236.5$ & $575.9 $ & \cellcolor{grayL}{$ 53.42\%$}\\
 	$1$ & $5000$ & $2136.0$ & $ 583.7$ & \cellcolor{grayL}{\textbf{72.67\%}} \\
 	\hline
 	$2$ & $1000$ & $680.6$ & $575.1 $ & \cellcolor{grayL}{$ 15.50\%$} 	\\
 	$2$ & $2000$ & $1173.4$ & $ 579.8$ & \cellcolor{grayL}{$ 50.59\%$} \\
 	$2$ & $5000$ & $2135.8$ & $596.1 $ & 	\cellcolor{grayL}{\textbf{72.09\%}} \\
 	\hline
 	$3$ & $1000$ & $719.2$ & $ 578.5$ & \cellcolor{grayL}{$ 19.56\%$} 	\\
 	$3$ & $2000$ & $1142.0$ & $582.2 $ & \cellcolor{grayL}{$ 49.02\%$} \\
 	$3$ & $5000$ & $2136.8$ & $ 605.3$ & \cellcolor{grayL}{\textbf{71.67\%}} \\
 	\hline
 	 	\end{tabular}
\end{table}
\subsection{Ablation Study for \texorpdfstring{\osmm}{OSMM}}
We perform extensive experiments 
\label{sec:ablation}
to study the computational, communication, and memory costs saved by \osmm.
Due to saving space, we defer some experimental results to Appendix~\ref{sec::more_exp}.

\paragraph{Communication.} 
\label{sec::smmcomm}
Table~\ref{table:comm_smm} reports communication comparison given varying sparsity of matrices with ${\sim}1000$-$5000$ nodes, each with $\{1, 2, 3\}$ edges on average.
In a training epoch, Beaver triples cost ${\sim}25$-$626$MB for sparse MM, whereas $\prosmm$ spends relatively stable costs of roughly
\mbox{${\sim}1$-$5$MB.}
$\prosmm$ reduces $95\%+$ communication compared with standard MM in all cases.
At best, $\prosmm$ costs only $0.4\%$ communication of standard one when $\#$Node is $5000$ with $1$ edge on average (also the sparsest case in Table~\ref{table:comm_smm}).


\paragraph{Memory Usage.}
Table~\ref{table:mem_smm_sparse} shows how sparsity affects memory usage.
$\#$Node is the total number of nodes and $\#$Edge/Node is the average number of edges connected per node.
Memory usage via Beaver triples scales with $\#$Node, whereas $\prosmm$ maintains relatively stable use.
In detail, $\prosmm$ reduces ${\sim}16\%$-$73\%$ memory for ${\sim}1000$-$5000$ nodes.

\paragraph{Running Time under Varying Network Conditions.}
Table~\ref{table:time_smm_net_10} reports the running time of $10$ epochs of \osmm in the normal, narrow-bandwidth, and high-latency networks.
In the normal network, $\prosmm$ achieves a ${\sim}1.1$-$26.3\times$ speed-up compared with Beaver triples.
In the narrow-bandwidth network, $\prosmm$ is ${\sim}1.53$-$41.96\times$ faster, showing a higher speed-up than the normal network.
In the high-latency network, $\prosmm$ shows a slightly lower speed-up than the normal network.
 The reason is that $\prosmm$ uses more rounds of communication than Beaver triples.
 It would be interesting to explore reducing round complexity for \osmm in the future.


\paragraph{Running Time with Varying Dimensionality.}
In practice, feature dimensionality (\eg, salary, life cost) is not very high.
We vary it across $\{10, 20, 50\}$ over the Citeseer dataset in Figures~\ref{fig:time_fea_dim_inference_cite},~\ref{fig:time_fea_dim_train_cite} (results for other datasets are in Appendix~\ref{sec::more_exp}).
We test both inference and training times.
The fewer feature dimensions, the higher the percentage of costs is from SMM.
Roughly, the time costs have been reduced by ${\sim}50$-$75\%$.

\begin{figure}[!t]
	\centering
	\includegraphics[width = 0.42\textwidth]{./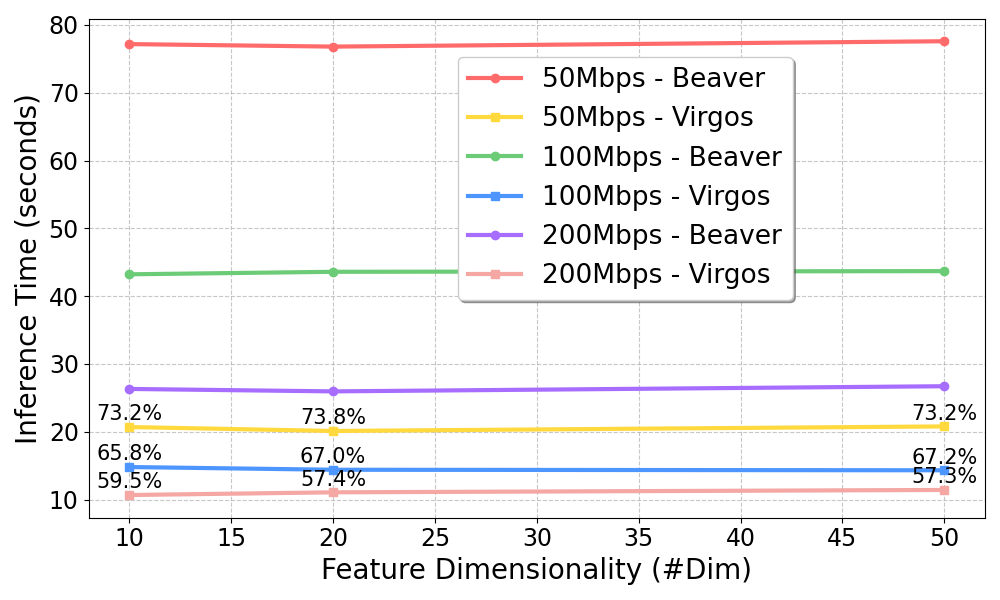}
		\caption{Inference Time with Feature Dimensionality}
 	\label{fig:time_fea_dim_inference_cite}
\end{figure}
\begin{figure}[!t]
	\centering
	\includegraphics[width = 0.42\textwidth]{./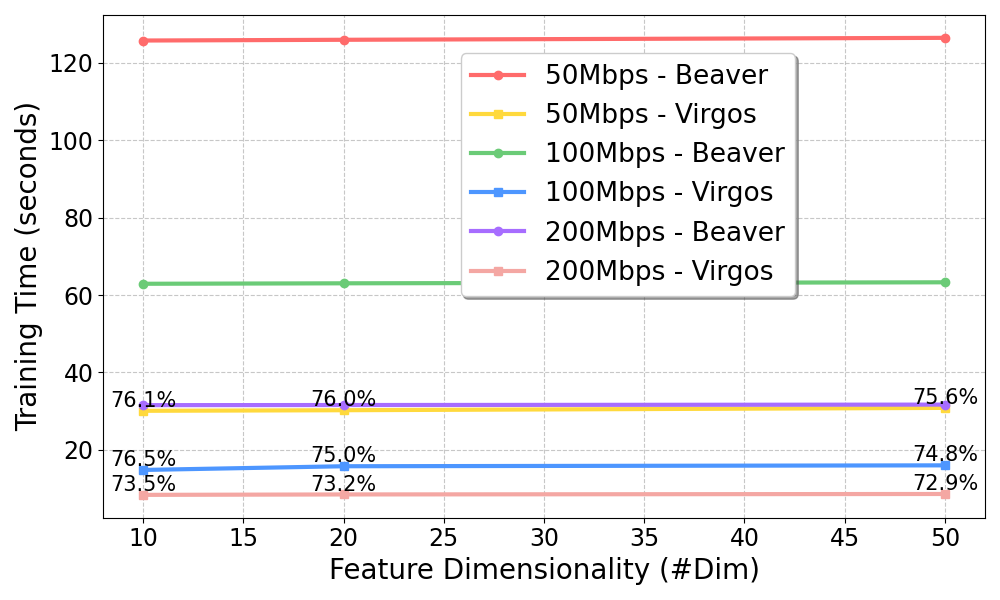}
		\caption{Training Time with  Feature Dimensionality}
 	\label{fig:time_fea_dim_train_cite}
\end{figure}


\section{Related Works}
\label{supp::work}
 
\paragraph{Privacy-Preserving Machine Learning (PPML) with MPC}
In the past decade, PPML has gained great attention as establishing a well-performing neural network often requires massive sensitive data, \eg, human faces, medical records.
Cryptography, especially MPC,
provides a handy tool to 
hide all inputs and intermediate results from adversaries.
Secure computation of various operations~\cite{ndss/ABY15,ccs/ABY318,neurips/crypten2020,ccs/RatheeRKCGRS20,sp21/TanKTW,uss/WatsonWP22,acsac/0021ZCPTLY23,adma/ZhengSDCLZW23} like softmax
and ReLU has been realized efficiently.
GPU-friendly frameworks/libraries like 
CryptGPU~\cite{sp21/TanKTW} and Piranha~\cite{uss/WatsonWP22} have been also proposed.
They have shown good computational performance in training CNNs.
However, most works are not tailored for 
GCNs, especially those computation over large and sparse structures.

\paragraph{Secure (Dense) Matrix Multiplication}
Classical secret-sharing schemes produce secret-shares to dense matrix multiplication.
Recent works~\cite{sp/MohasselZ17,asiaccs/MonoG23,asiacrypt/0030KRRSW20,ccs/JiangKLS18} designed more efficient protocols to reduce communication costs.
Yet, directly adapting these to sparse structures still results in high memory/communication costs asymptotically growing with the matrix size.
Large communication overhead persists as a major concern in PPML, \eg, consuming $94\%$ of the training time of Piranha~\cite{uss/WatsonWP22}.
Even worse, large matrix computations are not supported due to memory overflow.
So, minimizing communication costs and memory usage of \osmm is crucial.

Standard matrix decomposition methods, such as singular value decomposition (SVD)~\cite{banerjee2014linear} and LU decomposition (LUD)~\cite{bunch1974triangular}, are designed for faster plaintext computations rather than reducing communication in secure MPC.
Thus, employing these decompositions in secure GCN does not significantly lower communication costs.
Specifically, SVD decomposes a matrix into dense and diagonal matrices, while LUD decomposes it into triangular matrices. 
They both require $O(\numnode^2)$ communication for secure multiplications in GCNs.
Our decomposition approach instead adapts the graph topology into a sequence of linear transformations to exploit the sparsity, finally achieving $O(\numedge)$ communication.

\paragraph{Sparsity Exploration in MPC} 
Exploiting the sparsity in plaintext can speed up the computation.
Directly encoding the input sparse matrices into random matrices for acquiring privacy destroys the 
sparsity~\cite{isit/XhemrishiBW22}.
Several recent works~\cite{isit/XhemrishiBW22,tifs/BitarEWX24} studied the secret-shared sparse matrices and their multiplication by bridging the trade-offs between sparsity and privacy.
Specifically, they relax the privacy constraint by focusing on multiplying a secret-shared matrix with a public matrix.
Unlike their works, we work in the classical MPC settings, where all inputs are secretly shared. We also represent the sparsity through algebra relations (without destroying the sparsity).

ROOM~\cite{ccs/SchoppmannG0P19} presents three instantiations of sparse matrix-vector multiplications optimized for different sparsity settings, such as row- and column-sparse
matrices.
Our decomposition works on arbitrary-sparse matrices, in contrast to either row- or column-sparse as in ROOM~\cite{ccs/SchoppmannG0P19}, hence eliminating the need to know the sparsity types for the input matrices.
Chen~\etal~\cite{kdd/0001ZWWFTWLWH21} realize \osmm 
by sending a homomorphically encrypted dense matrix to the party holding the sparse matrix to perform ciphertext multiplication and split the result into secret shares.
The limited support of homomorphic multiplication curbs 
this approach.

\paragraph{Relevant Primitives for \osmm}
Using oblivious shuffle to realize \osmm demands $O(\numnode)$ rounds of $\ssp$.
We aim to hide the corresponding permutations directly in $O(1)$ rounds 
without using oblivious shuffle~\cite{ndss/SongYBDC23,uss/JiaSZDG22} or sorting~\cite{ccs/AsharovHIKNPTT22}.
OLGA~\cite{ccs/AttrapadungH0MO21} achieves $\ssp$ as a special case of linear group action.
Its subsequent work~\cite{ccs/AsharovHIKNPTT22} also uses 
replicated secret sharing.
In $\cgnn$, private permutation operation is owned by the graph holder, leading to better efficiency as in Table~\ref{table:op_comm}.
Zou~\etal \cite{ccs/ZouLSLXX24} design a permutation protocol by packing the permutation-relied computation into offline to optimize the online communication.
Differently, we make the offline phase independent of derived permutation, thus promisingly enabling varying graph setting (a.k.a inductive training~\cite{nips/WuYY21}).

Another primitive of Oblivious Selection-Multiplication ($\SM$) is to obliviously indicate whether message passing exists in an edge.
Previous works like Multiplexer~\cite{ccs/RatheeRKCGRS20} and binary-arithmetic multiplication~\cite{ndss/ABY15} can be adopted to realize the $\SM$'s functionality.
Multiplexer~\cite{ccs/RatheeRKCGRS20}, realized by two instances of $1$-out-of-$2$ OT, requiring $2(2\bitlen + 1)$ bits.
Binary-arithmetic conversion communicates $\bitlen(\bitlen+1)/2$ bits for a $64$-bit data. 
However, $\cgnn$'s $\SM$ protocol requires $\bitlen+1$ bits online in $1$ round using secret sharing (free of logarithm rounds of combining OT).


\begin{table}[!t]
\centering
\caption{MPC Frameworks for Secure Graph Learning}
\setlength\tabcolsep{2pt}
\begin{tabular}{|l|r|c|c|c|}
\hline
\multicolumn{1}{|c|}{\textbf{Framework}} & \textbf{Scenario} & \textbf{Inference} & \textbf{Training} & \textbf{Security} 
\\ \hline\hline
 OblivGNN~\cite{uss/XuL0AYY24}	 	 & MLaaS	 & $\checkmark$ 	 & $\times$	 & Semi-honest  \\\hline
LinGCN~\cite{nips/PengRLZHTGWXWD23} 	 & MLaaS	 & $\checkmark$ 	 & $\times$	 & Semi-honest  \\\hline
Penguin~\cite{nips/RanXLWQW23} 	 & MLaaS	 & $\checkmark$ 	 & $\times$	 & Semi-honest  \\\hline
 CoGNN~\cite{ccs/ZouLSLXX24} 	 & Horizontal 	 & $\checkmark$ 	 & $\checkmark$ & Semi-honest\\\hline
\cgnn 	 	 & Vertical	 	 & $\checkmark$ 	 	 & $\checkmark$ 	 & Semi-honest 	 \\
\hline
\end{tabular}
\label{tab:compare_sec_graph}
\end{table}

\paragraph{Secure Graph Analysis.} 
Secure graph analysis~\cite{sp/NayakWIWTS15,ccs/Araki0OPRT21} can be adopted for \osmm by reversing the graph analysis process (depicted by arrows in Figure~\ref{graph_psigqx_diff}).
Building on  garbled circuits, GraphSC~\cite{sp/NayakWIWTS15} uses an oblivious sorting to enable secure graph computation for message-passing algorithms.
Garbled circuits, while providing constant round complexity, are known for their communication and computation-intensive costs.
To address this, Araki~\etal~\cite{ccs/Araki0OPRT21} improve it by replacing sorting with  shuffling in the message-passing phase 
and use secret-sharing-based techniques to reduce the costs of communication and computation.
Recently,  Graphiti~\cite{ccs/KotiKPG24}, an advancement over GraphSC, optimizes the round complexity independently of graph size, enhancing scalability for large graphs.

In contrast to aforementioned MPC-based partitioning settings, $\cgnn$ adopts a practical vertical partitioning approach.
$\cgnn$ is designed to optimize MPC protocols specifically for vertically partitioned data, incorporating novel sparsity decomposition techniques, as well as new permutation and selection multiplication protocols. 
These innovations allow \cgnn to yield the optimal round complexity  for \osmm and be independent of the number of graph size, thus highlighting the potential  to scale and train on massive graphs under real-world network conditions.

\paragraph{Cryptographic Learning over Graphs.}
Table~\ref{tab:compare_sec_graph} summarizes recent advances for cryptographic graph learning. 
SecGNN~\cite{tsc/WangZJ23} is the first try to realize secure GCN training 
by integrating existing PPML advances.
Efficient GCN training still remains barren nowadays.
Without the customized MPC protocol for \osmm, 
SecGNN suffers from high communication costs in practice.
As for federated training, both vertical and horizontal partitions of distributed data are vital for practical usage.
Very recently, CoGNN~\cite{ccs/ZouLSLXX24} considers a collaborative training setting where each pair of computing parties knows the sub-graphs for secure training.
$\cgnn$ considers that one party who knows the graph but not the associated data, which is vertically partitioned.

Another branch of works~\cite{nips/PengRLZHTGWXWD23,nips/RanXLWQW23,nips/RanWGYXW22,uss/XuL0AYY24} adopt machine learning as a service (MLaaS) to realize secure GCN inference.
Penguin~\cite[Table 3]{nips/RanXLWQW23}, as the-state-of-art work, largely reduces the inference latency~\cite{uss/JuvekarVC18} by $5.9\times$ over the Cora dataset~\cite{aim/SenNBGGE08}, finally reaching $10$ minutes for the inference.
OblivGNN~\cite{uss/XuL0AYY24} recently reduces it to about $2$ minutes.
Unlike secure inference for MLaaS, $\cgnn$ made a noticeable step of secure training by reducing communication costs to $114$MB in roughly $20$s over the same dataset.

Many works focus on different privacy (DP) guarantees
~\cite{tnn/WuPCLZY21,ccs/SajadmaneshG21,pvldb/PatwaSGMR23},
or 
applying HE and private information retrieval for secure social recommendation~\cite{nips/CuiCLYW21}.
Like some prior works, sparsity is also exploited.
Their technical contributions differ vastly from ours, for we consider 
the MPC settings.

\section{Future Works}
\paragraph{Practical Scenarios and Graph Partitions}
\label{sec:future}
Our work can be extended to other federated learning settings, where a set of parties hold different types of features and sub-graphs in a general case.
We observe that the general case can be formulated as partitioning node features \( \Xsf_i \) and sub-matrices \( \Asf_{ij} \) across multiple parties \( \pp_i,\pp_j \).
To optimize the practical efficiency, each pair of  parties can parallel execute $\cgnn$ to compute $\Asf_{ij}\Xsf_{i}$ securely.
In future practical applications, researchers can streamline the hybrid MPC protocols by integrating plaintext handcrafts, 
leveraging secure computation as a pragmatic alternative for cross-organizational collaboration.

\paragraph{Modular Design and Security Models}
 New protocols in \cgnn are designed with a modular sense, allowing it to be instantiated with different MPC protocols. 
 Accordingly, the security model offered by the choice of MPC protocols will be carried forward, enabling \cgnn adapting to different settings.
Besides, $\ssp$ and $\SM$ are modular protocols, which may serve as building blocks of future MPC construction.
New block designs can be extended to other realizations, \eg, 2PC protocols from homomorphic encryption~\cite{eurocrypt/LyubashevskyPR10} or oblivious transfer~\cite{pkc/Tzeng02,ccs/KellerOS16} or oblivious shuffle~\cite{asiacrypt/ChaseGP20,ndss/SongYBDC23}.

\paragraph{Different GNN Models}
Besides GCN applications, our general theorem of sparse matrix decomposition 
holds potential for broader adoption to further graph-structured protection in the cryptographic domain.
Implementing new graph models may necessitate customized protocols to accommodate the unique operations and computations introduced by these models.
Beyond GCNs, \cgnn could could facilitate the future exploration of instantiating secure graph models, \eg, GraphSage~\cite{nips/HamiltonYL17}, GAT~\cite{iclr/VelickovicCCRLB18}.

\section{Conclusion}

We propose $\cgnn$, a secure $2$PC framework for GCN inference and training over vertically partitioned data, a neglected MPC scenario motivated by cross-institutional business collaboration.
It is supported by our \osmm protocol using a sparse matrix decomposition method for converting an arbitrary-sparse matrix into a sequence of linear transformations and employing $1$-round MPC protocols of oblivious permutation and selection-multiplication for efficient secure evaluation of these linear transformations.

Our work provides an open-source baseline and extensive benchmarks for practical usage.
Theoretical and empirical analysis demonstrate $\cgnn$'s superior communication and memory efficiency in private GCN computations.
Hopefully, our insight could motivate further research on private graph learning.

\section*{Acknowledgments}

Yu Zheng sincerely appreciates the valuable discussion, editorial helps or comments from  Andes Y.L. Kei, Zhou Li, Sherman S.M. Chow, Sze Yiu Chau, and Yupeng Zhang. 

\bibliography{reference}
\bibliographystyle{ACM-Reference-Format}

\appendix



\section{More Experimental Results}
\label{sec::more_exp}
Tables~\ref{table:time_fea_dim_infer_cora},\ref{table:time_fea_dim_infer_cite},\ref{table:time_fea_dim_infer_pubmed},\ref{table:time_fea_dim_train_cora},\ref{table:time_fea_dim_train_cite},\ref{table:time_fea_dim_train_pubmed} present the inference and training time with varying feature dimensionality over Cora and Pubmed datasets.
The results align with our conclusion in Section~\ref{sec:ablation}.

\begin{table*}[!t]
 	\centering
 	\caption{Inference Time (seconds) with Varying Feature Dimensionality over Cora}
 	\label{table:time_fea_dim_infer_cora}
 \setlength\tabcolsep{8pt}
 	 	\begin{tabular}{c|c|c|c|c|c|c|c|c|c}
 	 	\hline
 \multirow{2}{*}{\textbf{$\#$Dim}} &\multicolumn{3}{c|}{\textbf{$50$Mbps}} &\multicolumn{3}{c|}{\textbf{$100$Mbps}}&\multicolumn{3}{c}{\textbf{$200$Mbps}}
 \\\cline{2-10}
 	& \textbf{Beaver} & ${\prosmm}$ & \textbf{Saving} & \textbf{Beaver} & ${\prosmm}$ 	& \textbf{Saving} & \textbf{Beaver} & ${\prosmm}$ & \textbf{Saving} \\
 	 	\hline
 	 $1433$&$64.18$&$33.53$&$47.8\%$&$35.99$&$20.74$&$ 42.4\%$&$22.03$& $14.62$& $33.6\%$\\
 	 $10$&$51.96$&$19.55$&$62.4\%$&$29.56$&$14.33$&$51.5\%$&$18.89$& $10.80$& $42.8\%$\\
 	 $20$&$52.59$&$19.06$&$63.8\%$&$30.25$&$13.86$&$54.2\%$&$18.48$& $10.62$& $42.5\%$\\
 	 $50$&$52.04$&$19.42$&$62.7\%$&$29.71$&$13.96$&$53.0\%$&$18.69$& $11.38$& $39.1\%$\\\hline
 	 \end{tabular}
 	
 	$\#$E/N: ratio of edges per node,
    ``Beaver'': using Beaver triples for MM.
\end{table*}

\begin{table*}[!t]
 	\centering
 	\caption{Inference Time (seconds) with Varying Feature Dimensionality over Citeseer}
 	\label{table:time_fea_dim_infer_cite}
 \setlength\tabcolsep{8pt}
 	 	\begin{tabular}{c|c|c|c|c|c|c|c|c|c}
 	 	\hline
 \multirow{2}{*}{\textbf{$\#$Dim}} &\multicolumn{3}{c|}{\textbf{$50$Mbps}} &\multicolumn{3}{c|}{\textbf{$100$Mbps}} &\multicolumn{3}{c}{\textbf{$200$Mbps}}
 \\\cline{2-10}
 	 & \textbf{Beaver} & ${\prosmm}$ & \textbf{Saving} & \textbf{Beaver} & ${\prosmm}$ 	 & \textbf{Saving} & \textbf{Beaver} & ${\prosmm}$ & \textbf{Saving}
 	\\\hline
 	$3703$ &$117.83$ &$62.36$ &$47.1\%$ &$64.04$ &$35.78$ &$44.1\%$ &$37.36$ & $23.29$ & $37.7\%$\\
	$10$ &$77.18$ &$20.69$ &$73.2\%$ &$43.22$ &$14.79$ &$65.8\%$ &$26.31$ & $10.66$ & \textbf{59.5\%}\\
 	$20$ &$76.81$ &$20.11$ &\textbf{73.8\%} &$43.59$ &$14.39$ &$67.0\%$ &$25.95$ & $11.06$ & $57.4\%$\\
 	 $50$ &$77.60$ &$20.78$ &$73.2\%$ &$43.69$ &$ 14.31$ &\textbf{67.2\%} &$26.71$ & $11.41$ & $57.3\%$\\\hline
 	 \end{tabular}
\end{table*}

 \begin{table}[!t]
 	\centering
 	\caption{Inference Time (seconds)
  over Pubmed}
 	\label{table:time_fea_dim_infer_pubmed}
 \setlength\tabcolsep{2pt}
 	 	\begin{tabular}{c|c|c|c|c|c|c}
 	 	\hline
 \multirow{2}{*}{\textbf{$\#$Dim}} &\multicolumn{2}{c|}{\textbf{$50$Mbps}} &\multicolumn{2}{c|}{\textbf{$100$Mbps}}&\multicolumn{2}{c}{\textbf{$200$Mbps}}
 \\\cline{2-7}
 	& \textbf{Beaver} & ${\prosmm}$ & \textbf{Beaver} & ${\prosmm}$ 	& \textbf{Beaver} & ${\prosmm}$ \\
 	 	\hline
 	 	 $500$&OOM&$132.06$ &OOM&$69.66$&OOM& $42.37$\\
 	 $10$&OOM&$92.70$&OOM &$53.35$&OOM& $32.77$\\
 	 $20$&OOM&$92.94$&OOM &$53.88$&OOM& $34.22$\\
 	 $50$&OOM&$93.87$&OOM &$54.67$&OOM& $33.49$\\\hline
 	 \end{tabular}
\end{table}

\begin{table*}[!t]
 	\centering
 	\caption{Training Time (seconds) with Varying Feature Dimensionality over Cora}
 	\label{table:time_fea_dim_train_cora}
\setlength\tabcolsep{8pt}
 	 	\begin{tabular}{c|c|c|c|c|c|c|c|c|c}
 	 	\hline
 \multirow{2}{*}{\textbf{$\#$Dim}} &\multicolumn{3}{c|}{\textbf{$50$Mbps}} &\multicolumn{3}{c|}{\textbf{$100$Mbps}}&\multicolumn{3}{c}{\textbf{$200$Mbps}}
 \\\cline{2-10}
 	& \textbf{Beaver} & ${\prosmm}$ & \textbf{Saving} & \textbf{Beaver} & ${\prosmm}$ 	& \textbf{Saving} & \textbf{Beaver} & ${\prosmm}$ & \textbf{Saving} \\
 	 	\hline
 	 	 $1433$&$106.31$&$50.50$&$52.5\%$&$53.19$&$25.79$&$51.5\%$&$22.03$& $13.58$& $38.4\%$\\
 	 $10$&$86.05$&$29.75$&$65.4\%$&$43.05$&$15.62$&$63.7\%$&$18.89$& $8.42$& $55.4\%$\\
 	 $20$&$86.18$&$29.89$&$65.3\%$&$43.13$&$15.47$&$64.1\%$&$18.48$& $8.48$& $54.1\%$\\
 	 $50$&$86.62$&$30.30$&$65.0\%$&$43.34$&$15.77$&$63.6\%$&$18.69$& $8.65$& $53.7\%$\\\hline
 	 \end{tabular}
\end{table*}

\begin{table*}[!t]
 	\centering
 	\caption{Training Time (seconds) with Varying Feature Dimensionality over Citeseer}
 	\label{table:time_fea_dim_train_cite}
 \setlength\tabcolsep{8pt}
 	 	\begin{tabular}{c|c|c|c|c|c|c|c|c|c}
 	 	\hline
 \multirow{2}{*}{\textbf{$\#$Dim}} &\multicolumn{3}{c|}{\textbf{$50$Mbps}} &\multicolumn{3}{c|}{\textbf{$100$Mbps}} &\multicolumn{3}{c}{\textbf{$200$Mbps}}
 \\\cline{2-10}
 	 & \textbf{Beaver} & ${\prosmm}$ & \textbf{Saving} & \textbf{Beaver} & ${\prosmm}$ 	 & \textbf{Saving} & \textbf{Beaver} & ${\prosmm}$ & \textbf{Saving} \\
 	 	\hline
 	 	 $3703$ &$190.04$ &$94.60$ &$50.2\%$ &$95.06$ &$47.86$ &$49.7\%$ &$47.65$ & $24.58$ & $48.4\%$\\
 	 $10$ &$125.78$ &$30.08$ &\textbf{76.1\%} &$62.91$ &$14.79$ &\textbf{76.5\%} &$31.54$ & $8.35$ & \textbf{73.5}\%\\
 	 $20$ &$125.97$ &$30.23$ &$76.0\%$ &$63.03$ &$15.75$ &$75.0\%$ &$31.57$ & $8.46$ & $73.2\%$\\
 	 $50$ &$126.47$ &$30.83$ &$75.6\%$ &$63.29$ &$15.98$ &$74.8\%$ &$31.68$ & $8.58$ & $72.9\%$\\\hline
 	 \end{tabular}
\end{table*}

\begin{table}[!t]
 	\centering
 	\caption{Training Time (seconds) over Pubmed}
 	\label{table:time_fea_dim_train_pubmed}
 \setlength\tabcolsep{3pt}
 	 	\begin{tabular}{c|c|c|c|c|c|c}
 	 	\hline
 \multirow{2}{*}{\textbf{$\#$Dim}} &\multicolumn{2}{c|}{\textbf{$50$Mbps}} &\multicolumn{2}{c|}{\textbf{$100$Mbps}}&\multicolumn{2}{c}{\textbf{$200$Mbps}}
 \\\cline{2-7}
 	& \textbf{Beaver} & ${\prosmm}$ & \textbf{Beaver} & ${\prosmm}$ & \textbf{Beaver} & ${\prosmm}$ \\
 	 	\hline
 	 	 $500$&OOM&$233.53$ &OOM&$118.23$&OOM& $64.16$\\
 	 $10$&OOM&$175.14$&OOM &$92.90$&OOM& $51.27$\\
 	 $20$&OOM&$176.75$&OOM &$92.52$&OOM& $51.72$\\
 	 $50$&OOM&$180.08$&OOM &$93.53$&OOM& $52.71$\\\hline
 	 \end{tabular}
$\#$E/N: edges per node ratio,
``Beaver'': Beaver triples for MM
\end{table}
\begin{table*}[!t]
 	\centering
 	\caption{$10$-Epoch Running Time (seconds) of \osmm in Varying Networks}
 	\label{table:time_smm_net_10}
 \setlength\tabcolsep{7pt}
 	 	\begin{tabular}{c|c|r|r|r|r|r|r|r|r|r}
 	 	\hline
 	\multirow{2}{*}{\textbf{$\#$E/N}} & 
 	\multirow{2}{*}{\textbf{$\#$Node}} & 
 	\multicolumn{3}{c|}{\textbf{Normal ($800$Mbps, $0.022$ms)}} & 
 	\multicolumn{3}{c|}{
 	\textbf{N.B.~($200$Mbps, $0.022$ms)}} & 
 	\multicolumn{3}{c}{
 	\textbf{H.L.~($800$Mbps, $50$ms)}}
 	\\\cline{3-11}
 	 & & \textbf{Beaver} & \osmm & \textbf{Saving} &
 	\textbf{Beaver} & \osmm & \textbf{Saving} &
 	\textbf{Beaver} & \osmm & \textbf{Saving} \\
 	 	\hline
 	$1$ & $1000$ & $3.79$ & $2.92 $ & {$23.0\%$} & $8.89$ &$3.07$ & {$65.5\%$} & $7.55$ &$8.57$ & {$11.9\% $} \\
 	$1$ & $3000$ & $60.67$ & $4.70$ & {$92.3\%$} & $106.39$ & $5.15$ & {$95.2\%$} & $74.54$ & $10.00$ & {$86.6\%$} \\
 	$1$ & $5000$ & $165.35$ & $6.30$ & {$\mathbf{96.2\%}$} & $294.11$ & $7.01$ & {$\mathbf{97.6\%}$} & $203.77$ & $11.53$ & {$\mathbf{94.3\%}$} \\
 	\hline
 	$2$ & $1000$ & $5.95$ & $5.54$ & {$6.9\%$} & $10.92$ & $5.72$ & {$47.6\%$} & $9.17$ & $10.88$ & {$-$} \\
 	$2$ & $3000$ & $62.48$ & $7.99$ & {$87.2\%$} & $108.42$ & $8.57$ & {$92.1\%$} & $76.64$ & $13.24$ & {$82.7\%$} \\
 	$2$ & $5000$ & $168.06$ & $10.57$ & {$\mathbf{93.7\%}$} & $296.50$ & $11.71$ & {$\mathbf{96.1\%}$} & $206.38$ & $15.82$ & {$\mathbf{92.3\%}$} \\
 	\hline
 	$3$ & $1000$ & $8.27$ & $8.62$ & {$4.2\%$} & $13.61$ & $8.89$ & {$34.7\%$} & $11.83$ & $13.93$ & {$-$} \\
 	$3$ & $3000$ & $64.45$ & $11.92$ & {$81.5\%$} & $111.71$ & $12.74$ & {$88.6\%$} & $79.58$ & $17.19$ & {$78.4\%$} \\
 	$3$ & $5000$ & $170.01$ & $15.28$ & {$\mathbf{91.0\%}$} & $299.57$ & $16.77$ & {$\mathbf{94.4\%}$} & $209.03$ & $20.64$ & {$\mathbf{90.1\%}$} \\
 	\hline
 	 	\end{tabular}
        
 	 $\#$E/N: ratio of edges per node,
	``Beaver'': using Beaver triples
\end{table*}


\section{Proofs related to Sparsity}
\label{sec:matrix_found_sparse}

\begin{definition}[$\Qsf$-type matrix]
\label{q_type}
A $(0,1)$-matrix $\Msf$ of size $m\times n$ is a $\Qsf$-type matrix iff there exists a monotonically non-decreasing 
$f:\Zbb/m\Zbb\rightarrow \Zbb/n\Zbb$ s.t.~$\Msf[i,j]=1$ iff $j=f(i)$.
\end{definition}

\begin{definition}[$\Psf$-type matrix] 
\label{p_type}
A $(0,1)$-matrix $\Msf$ of size $m\times n$ is a $\Psf$-type matrix iff there exists a monotonically non-decreasing 
$f:\Zbb/n\Zbb\rightarrow \Zbb/m\Zbb$ s.t.~$\Msf[i,j]=1$ iff $i=f(j)$.
\end{definition}

\subsection{Proof of Theorem~\ref{the::p_sig_q}}
\label{sec::proof_p_sig_q}
\begin{proof}
Suppose the sparse representation of $\adjmat$ is $\{(i_k, j_k, \lambda_k): k=1, \ldots, t\}$ where
$k \mapsto i_k$ is monotonically non-decreasing.
Then, we have $\adjmat=\adjout'\Lambda \adjin$, where $\adjout'\in \Mbb_{m,t}(\Rcal)$ is a sparse matrix represented by $\{(i_k, k, 1): k=1, \ldots, t\}$, $\Lambda=\mathsf{diag}(\lambda _1, \ldots, \lambda _t)$, $\adjin\in \Mbb_{t,n}(\Rcal)$ is a 
sparse matrix represented by $\{(k, j_k, 1): k=1, \ldots, t\}$.
Now, $\adjout'$ is a $\Psf$-type matrix, but $\adjin$ is not a $\Qsf$-type matrix as $\adjin$ does not satisfies ``$k \mapsto j_k$ is monotonically non-decreasing''.
We permute the lines of $\adjin$ as $\adjin'=\{ (\sigma_3^{-1}(k), j_k, 1): k=1, \ldots, t\}=\{ (k, j_{\sigma_3(k)}, 1): k=1\ldots, t\}$ such that $k \mapsto j_{\sigma _3(k)}$ is monotonically non-decreasing.
After that, we have $\adjin= \{ (k, \sigma_3^{-1}(k), 1): k=1, \ldots, t \} \times \adjin' = \{ (\sigma_3(k), k, 1): k=1, \ldots, t \} \times \adjin' =\sigma_3 \adjin'$, where $\adjin'$ is a $\Qsf$-type matrix.
Hence, $\adjmat=\adjout' \Lambda \sigma _3 \adjin'$.
\end{proof}

\subsection{Proof of Theorem~\ref{the::general_dec_main}} 
\label{sec::proof_general}

\begin{proof}
The proof is straightforward by composing $\adjmat = \adjout' \Lambda \sigma_3 \adjin'$ from Theorem \ref{the::p_sig_q}, $\adjin'=\Sigma \sigma_2 \gammain \delta _n \sigma _1$ from Theorem~\ref{the::q_decom}, and $\adjout'=\sigma_5 \delta_m ^\top \gammaout \sigma_4 \Sigma ^\top$ from Theorem~\ref{the::p_decom}.
\end{proof}

\subsection{Proof of Theorem~\ref{the::smm_main}}
\label{sec::proof_smm}
\begin{proof}
It is straightforward to prove by observing that:\\
1) Permutations on $\sigma _i$ calls permutation group action;\\
2) Multiplying 
$\delta$, $\Sigma$ calls constant matrix multiplication;\\
3) Multiplying $\gammain$, $\gammaout$ calls element-wise multiplication (with cut-off and padding of zero values);\\
4) Multiplying $\Lambda$ calls element-wise multiplication.	
\end{proof}

\subsection{Proof of Theorem~\ref{the::q_decom}}

\label{sec::proof_q}

\begin{theorem}
\label{the::q_decom}
Let $\adjin' \in \Mbb_{t, n}$ be a $\Qsf$-type matrix with $\ncol$ non-zero columns.
Then, there exists a matrix decomposition $\adjin'=\Sigma \sigma_2 \gammain \delta _n \sigma _1$ where $\sigma _1 \in \Sbb_n, \sigma _2 \in \Sbb_t$, and, \\
1) $\Sigma=(\Sigma[i, j])_{i,j=1}^{\numnonzero}$ is the left-down triangle matrix such that $\Sigma[i, j]=1$ if $i \geq j$ or $0$ otherwise,\\
2) $\delta_n=(\delta_n[i,j])_{i,j=1}^{n}$ is the left-down triangle matrix such that $\delta_n[i,j]=1$ for $i=j$ or $-1$ for $j=i-1$, or $0$ otherwise,\\
3) $\gammain =(\gammain[i,j])_{i=1,j=1}^{\numnonzero,n}$ is a matrix such that $\gammain[i,j]=1$ for $1\leq i=j\leq \ncol$ or $0$ otherwise.
\end{theorem}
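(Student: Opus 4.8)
The plan is to exhibit the two permutations $\sigma_1 \in \Sbb_n$ and $\sigma_2 \in \Sbb_t$ explicitly from the combinatorial structure of $\adjin'$ and then verify the identity by evaluating both sides on an arbitrary column vector $\xvec \in \Rcal^n$. First I would invoke Definition~\ref{q_type}: since $\adjin'$ is a $\Qsf$-type matrix, there is a monotonically non-decreasing $f:\{1,\ldots,t\}\to\{1,\ldots,n\}$ with $\adjin'[i,j]=1$ iff $j=f(i)$, so that $(\adjin'\xvec)_i = x_{f(i)}$. Let $c_1 < c_2 < \cdots < c_{\ncol}$ enumerate the $\ncol$ distinct values in the image of $f$ (the non-zero columns), and let $p_1 < p_2 < \cdots < p_{\ncol}$ be the row indices at which each value is first attained. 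Monotonicity of $f$ guarantees that each level set $f^{-1}(c_k)$ is a \emph{contiguous} block of rows $\{p_k,\ldots,p_{k+1}-1\}$ (with $p_{\ncol+1}:=t+1$), and that these blocks partition $\{1,\ldots,t\}$; this contiguity is the structural fact driving the whole argument.

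Next I would define $\sigma_1$ to send column $c_k$ to position $k$ for $k=1,\ldots,\ncol$ (sending the remaining zero-columns to positions $\ncol+1,\ldots,n$ arbitrarily), and define $\sigma_2$ to send position $k$ to the block-start $p_k$ for $k=1,\ldots,\ncol$ (sending positions $\ncol+1,\ldots,t$ to the non-block-start rows arbitrarily). With these choices, $u := \sigma_1\xvec$ satisfies $u_k = x_{c_k}$ for $1\le k\le\ncol$, so telescoping under $\delta_n$ gives $(\delta_n u)_k = x_{c_k}-x_{c_{k-1}}$ for $1\le k\le\ncol$ (with the convention $x_{c_0}=0$). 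Applying $\gammain$ retains exactly these first $\ncol$ differences and zero-pads to length $t$, and $\sigma_2$ relocates the $k$-th difference to row $p_k$, leaving zeros on the interior rows of every block.

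Finally I would apply $\Sigma$, the prefix-sum operator $(\Sigma w)_i = \sum_{j\le i} w_j$. Because the nonzero entries of the post-$\sigma_2$ vector sit exactly at the block starts $p_k$ and carry the telescoping increments $x_{c_k}-x_{c_{k-1}}$, the prefix sum evaluated on any row $i$ of block $k$ collapses to $\sum_{\ell\le k}(x_{c_\ell}-x_{c_{\ell-1}}) = x_{c_k}$. Since $f(i)=c_k$ for every such $i$, this equals $x_{f(i)} = (\adjin'\xvec)_i$. As $\xvec$ is arbitrary, the matrix identity $\adjin' = \Sigma\sigma_2\gammain\delta_n\sigma_1$ follows.

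I expect the main obstacle to be bookkeeping rather than concept: one must confirm that $\sigma_1$ and $\sigma_2$ are genuine permutations (injectivity holds precisely because the $c_k$ are distinct and the $p_k$ are distinct), and that the interaction between the truncating map $\gammain$ and the difference operator $\delta_n$ leaks no contribution from the discarded coordinates $u_{\ncol+1},\ldots,u_n$ into the first $\ncol$ differences — which holds since $(\delta_n u)_k$ for $k\le\ncol$ depends only on $u_0,\ldots,u_{\ncol}$. Once the contiguity of the blocks is in hand, the remaining telescoping and prefix-sum cancellation is routine.
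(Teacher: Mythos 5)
Your proposal is correct and follows essentially the same route as the paper's proof: permute the non-zero columns to the front via $\sigma_1$, take differences with $\delta_n$, truncate/pad with $\gammain$, relocate the increments to the block-start rows via $\sigma_2$, and recover the repeated block values by the prefix sum $\Sigma$, verifying the identity on an arbitrary vector. Your version merely makes explicit (via the monotone map $f$, the distinct image values $c_k$, and the block starts $p_k$) the index bookkeeping that the paper conveys through its displayed staircase matrix, so there is nothing to flag.
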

 
\begin{proof}
Here, we prove that $\adjin' \Xsf=\Sigma \sigma _2 \gammain \delta_n \sigma _1 \Xsf $ holds for any $\Xsf\in \Rcal^{(n)}$.
Firstly, we can use column transformation to transform matrix $\adjin'$ to a new matrix $\tilde{\adjin'}$ of $\Qsf$-type such that all the $n-\ncol$ zero-columns of $\tilde{\adjin'}$ lie in the last columns.
Hence, we have $\adjin'=\tilde{\adjin'}\sigma _1$, and $\tilde{\adjin'}$ is in the form of,
\[
\tilde{\adjin'}=\left(
\begin{array}{ccccccc}
1 &&& & 0 & \cdots & 0\\
\vdots &&& \\
1 &&& & \vdots & & \vdots \\
 & 1 && \\
 & \vdots && & \vdots & & \vdots \\
 & 1 && \\
 & & \ddots & & \vdots & & \vdots \\
 && & 1 \\
 && & \vdots \\
 &&& 1 & 0 & \cdots & 0\\
\end{array}
\right)
\]
Let $\tilde{\Xsf} = \sigma_1 \Xsf$,
then we have $\adjin' \Xsf$ equal to:
\begin{equation*}
	\begin{aligned}
		\tilde{\adjin'}\tilde{\Xsf}&=
\left(
\begin{array}{c}
\tilde{x}_1 \\
\tilde{x}_1 \\
\vdots \\
\tilde{x}_1 \\
\tilde{x}_2 \\
\tilde{x}_2 \\
\vdots \\
\tilde{x}_2 \\
\vdots \\
\vdots \\
\tilde{x}_{\ncol} \\
\tilde{x}_{\ncol} \\
\vdots \\
\tilde{x}_{\ncol} \\
\end{array}
\right) = \Sigma
\left(
\begin{array}{c}
\tilde{x}_1 \\
0 \\
\vdots \\
0 \\
\tilde{x}_2-\tilde{x}_1 \\
0 \\
\vdots \\
0 \\
\vdots \\
\vdots \\
\tilde{x}_{\ncol}-\tilde{x}_{\ncol-1} \\
0 \\
\vdots \\
0 \\
\end{array}
\right)
\end{aligned}
\end{equation*}
\begin{equation*}
\begin{aligned}
& =
\Sigma \sigma _2
\left(
\begin{array}{c}
\tilde{x}_1 \\
\tilde{x}_2-\tilde{x}_1 \\
\vdots \\
\tilde{x}_{\ncol}-\tilde{x}_{\ncol-1} \\
0 \\
\vdots \\
0 \\
\end{array}
\right)
 = \Sigma \sigma _2 \gammain
\left(
\begin{array}{c}
\tilde{x}_1 \\
\tilde{x}_2-\tilde{x}_1 \\
\vdots \\
\tilde{x}_{\ncol}-\tilde{x}_{\ncol-1} \\
\tilde{x}_{\ncol+1}-\tilde{x}_{\ncol} \\
\vdots \\
\tilde{x}_{n}-\tilde{x}_{n-1} \\
\end{array}
\right)
\\
&
= \Sigma \sigma _2 \gammain \delta_n \tilde{\Xsf}
= \Sigma \sigma _2 \gammain \delta_n \sigma _1 \Xsf
\rlap{\qquad \qquad \qquad \qquad \qquad \qquad \qedhere}.
\end{aligned}
\end{equation*}
\end{proof}

\subsection{Proof of Theorem~\ref{the::p_decom}}

\label{sec::proof_p}
\begin{theorem} 
\label{the::p_decom}
Let $\adjout' \in \Mbb_{m,t}(\Rcal)$ be a $\Psf$-type matrix with $\nrow$ non-zero rows.
Then, there exists a matrix decomposition $\adjout'=\sigma_5 \delta_m ^\top \gammaout \sigma_4 \Sigma ^\top$ where $\sigma _5 \in \Sbb_m$, $\sigma_4 \in \Sbb_t$, and, \\
1) $\Sigma=(\Sigma[i, j])_{i,j=1}^{\numnonzero}$ is the left-down triangle matrix such that $\Sigma[i, j]=1$ if $i \geq j$ or $0$ otherwise,\\
2) $\delta_m=(\delta_m[i,j])_{i,j=1}^{m}$ is the left-down triangle matrix such that $\delta_m[i,j]=1$ for $i=j$ or $-1$ for $j=i-1$, or $0$ otherwise,\\
3) $\gammaout =(\gammaout[i,j])_{i=1,j=1}^{m,\numnonzero}$ is a matrix such that $\gammaout[i,j]=1$ for $1\leq i=j\leq {\nrow}$ or $0$ otherwise.
\end{theorem}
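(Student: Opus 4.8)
The plan is to exploit the duality between $\Psf$-type and $\Qsf$-type matrices under transposition, thereby reducing the claim to Theorem~\ref{the::q_decom}, which we may assume. First I would verify that the transpose of a $\Psf$-type matrix is a $\Qsf$-type matrix. If $\adjout' \in \Mbb_{m,\numnonzero}(\Rcal)$ is $\Psf$-type, Definition~\ref{p_type} gives a monotonically non-decreasing $f$ with $\adjout'[i,j]=1$ iff $i=f(j)$. Transposing swaps the roles of rows and columns, so $(\adjout')^\top[j,i]=1$ iff $i=f(j)$, which is exactly the defining condition of Definition~\ref{q_type} (with the \emph{same} monotone map $f$). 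Hence $(\adjout')^\top \in \Mbb_{\numnonzero,m}(\Rcal)$ is $\Qsf$-type, and the $\nrow$ non-zero rows of $\adjout'$ correspond precisely to the $\nrow$ non-zero columns of $(\adjout')^\top$.

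Next I would apply Theorem~\ref{the::q_decom} to $(\adjout')^\top$, playing the role of the $\Qsf$-type matrix there under the parameter substitution ``$n$'' $\mapsto m$ and ``$\ncol$'' $\mapsto \nrow$. This yields $(\adjout')^\top = \Sigma \tau_2 \gamma \delta_m \tau_1$, where $\Sigma$ is the $\numnonzero \times \numnonzero$ summation matrix, $\tau_2 \in \Sbb_{\numnonzero}$, $\delta_m$ is the $m \times m$ difference matrix, $\tau_1 \in \Sbb_m$, and $\gamma \in \Mbb_{\numnonzero,m}(\Rcal)$ satisfies $\gamma[i,j]=1$ exactly for $1\le i=j\le \nrow$.

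The final step is to transpose this identity and match factors. Transposing reverses the order and transposes each factor, giving $\adjout' = \tau_1^\top \delta_m^\top \gamma^\top \tau_2^\top \Sigma^\top$. I then set $\sigma_5 := \tau_1^\top \in \Sbb_m$ and $\sigma_4 := \tau_2^\top \in \Sbb_{\numnonzero}$, both permutations since the transpose of a permutation matrix is its inverse permutation. It remains only to identify $\gamma^\top = \gammaout$: since $\gamma[i,j]=1$ iff $1\le i=j\le \nrow$, its transpose is the $m \times \numnonzero$ matrix with $\gamma^\top[i,j]=1$ iff $1\le i=j\le \nrow$, which is precisely $\gammaout$ by item~4 of the statement. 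Substituting yields $\adjout' = \sigma_5 \delta_m^\top \gammaout \sigma_4 \Sigma^\top$, as required.

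I do not anticipate a deep obstacle, since the argument is essentially the transpose of the $\Qsf$-type proof. The one point demanding care is the bookkeeping: confirming that the monotone map genuinely survives transposition so that $(\adjout')^\top$ is truly $\Qsf$-type, and that the parameter substitution ($n \mapsto m$, $\ncol \mapsto \nrow$) is applied consistently so that the sizes of $\Sigma$, $\delta_m$, $\gammaout$, and the two permutation groups $\Sbb_m$ and $\Sbb_{\numnonzero}$ all come out correctly after the transpose.
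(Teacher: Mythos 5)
Your proof is correct and follows essentially the same route as the paper: transpose $\adjout'$ to obtain a $\Qsf$-type matrix, apply Theorem~\ref{the::q_decom} with $n\mapsto m$ and $\ncol\mapsto\nrow$, transpose the resulting identity back, and set $\sigma_5=\sigma_1^\top$, $\sigma_4=\sigma_2^\top$, $\gammaout=\gammain^\top$. You are in fact slightly more careful than the paper, which simply asserts that $(\adjout')^\top$ is $\Qsf$-type without the verification you supply; the only nit is that $\gammaout$ is item~3 of this theorem's statement, not item~4.
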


\begin{proof}
Note that $(\adjout')^\top \in \Mbb_{t,m}$ is a $\Qsf$-type matrix, we have the matrix decomposition $(\adjout')^\top=\Sigma \sigma_2 \gammain \delta_m \sigma _1$ by Theorem~\ref{the::q_decom}, where $\sigma_1 \in \Sbb_m$ and $\sigma_2\in \Sbb_t$.
Let $\sigma_5=\sigma_1^\top$, $\sigma_4=\sigma_2^\top$, $\gammaout=\gammain^\top$, we have $\adjout'=\sigma_5 \delta_m ^\top \gammaout \sigma_4 \Sigma ^\top$.
\end{proof}

 
\section{Algorithm Realization}
\label{sec::alg}

Guided by Theorem~\ref{the::p_sig_q}, our algorithm of decomposing the adjacency matrix $\Asf$ can be implemented as below.
$\Asf$ is a \texttt{SparseMatrix} consists of all $0$-$1$ elements represented by coordinates of nonzero values,
\ie, [(\texttt{row-index}, \texttt{column-index}), (\texttt{row-index}, \texttt{column-index}), $\ldots$].
In Figure~\ref{fig::code_psigq_de}, we draw the pairs of (row-index, column-index) of $\Asf$, \eg, $[(2, 3), (3, 1), \ldots]$.
Step~1 is to split the pairs in $\Asf$ to construct two sparse matrices (Graph language in Figure~\ref{fig::nen_relation_diff}), called $\Psf'$ and $\Qsf'$ (correspond to $\adjout$ and $\adjin$ in Section~\ref{subsec::sme}).
In Steps~2 and~3, the row-indices of $\Psf'$ are sorted and generate $\sigma_3^{\Psf}$, and the column-indices of $\Qsf'$ are sorted and generate $\sigma_3^{\Qsf}$.
Then, $\sigma_3$ is obtained by sparse-matrix multiplying $\sigma_3^{\Psf}$ and $\sigma_3^{\Qsf}$, thus $\sigma_3=\sigma_3^{\Psf}\cdot\sigma_3^{\Qsf}$.
Now, we can see that the resulting matrices $\Psf$ and $\Qsf$ (correspond to $\adjout'$ and $\adjin'$ in Section~\ref{subsec::initdecom}) have monotonically non-decreasing row/column indices.
Moreover, $\Psf$ contain exactly one $1$ in every column and $\Qsf$ contains exactly one $1$ in every row.
In summary, the pseudocode of \texttt{decompose\_row\_column(A)} below describes the extraction of $\Psf$, $\sigma_3$ and $\Qsf$ as previously displayed in Figure~\ref{graph_psigq}.


\begin{figure}[!t]
	\centering
	\includegraphics[width = 0.47\textwidth]{./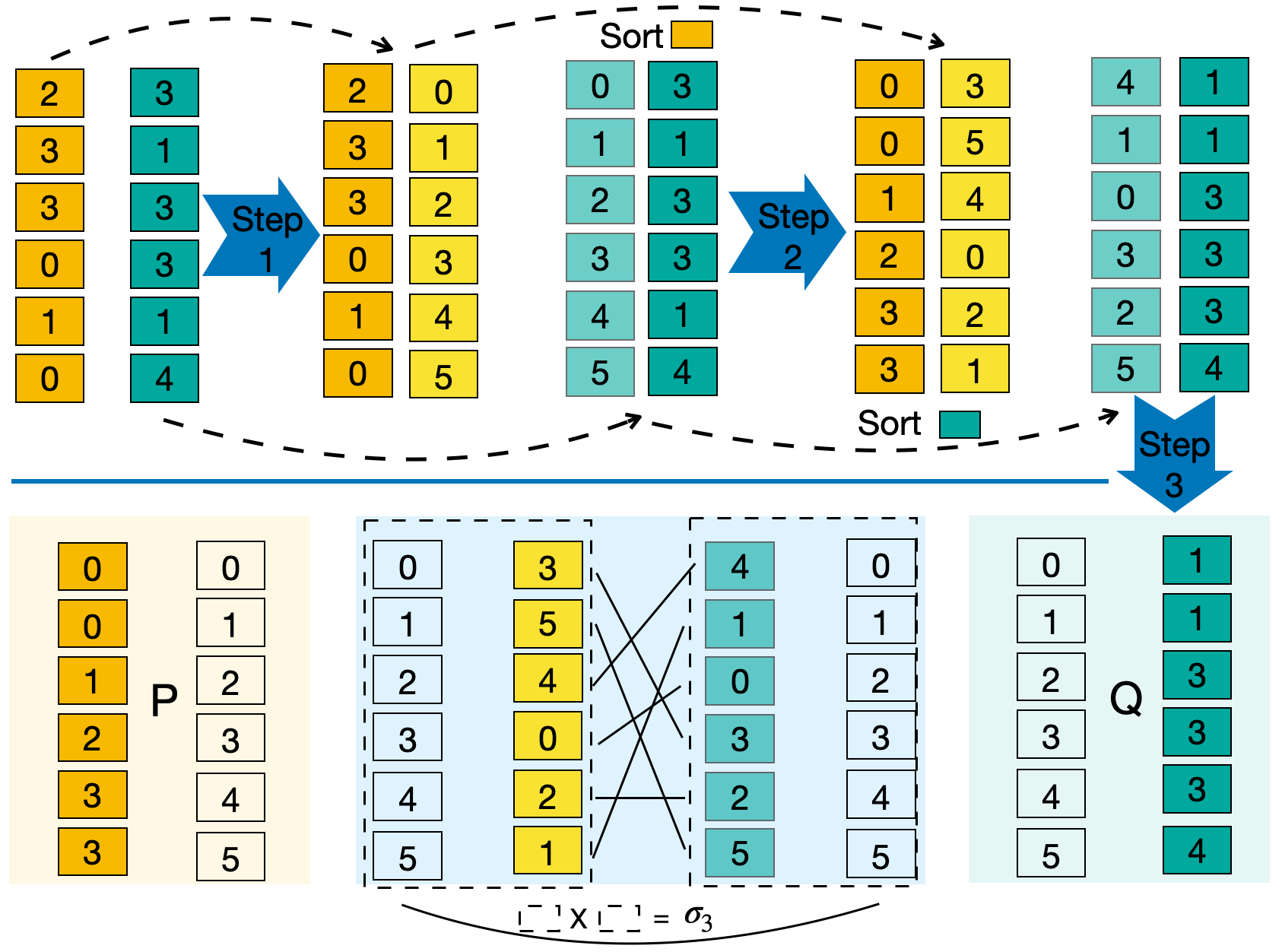}
	\caption{Illustration of $\Psf\sigma_3\Qsf$ Decomposition}
	 \label{fig::code_psigq_de}
\end{figure}

\noindent
\begin{frame}{\texttt{def decompose\_row\_column(A):}}
 \\\texttt{\#A: a list of (row\_id, col\_id) for non-zero element in sparse matrix}
\\\indent\texttt{P=[(A[i, 0], i) for i in range(len(A))]}
\\\indent\texttt{Q=[(i, A[i, 1]) for i in range(len(A))]}
\\\indent\texttt{P=sorted(P, key=lambda x: x[0])}
\\\indent\texttt{Q=sorted(Q, key=lambda x: x[1])}
\\\indent\texttt{P=[(P[i,0], i) for i in range(len(A))]}
\\\indent\texttt{sigma3P=[(i, P[i,1]) for i in range(len(A))]}
\\\indent\texttt{sigma3Q=[(Q[i,0], i) for i in range(len(A))]}
\\\indent\texttt{Q=[(i,Q[i,1]) for i in range(len(A))] }
\\\indent\texttt{sigma3 = sigma3P*sigma3Q}
\\\indent\texttt{return P, sigma3, Q}
\end{frame}

\noindent\textbf{Pseudocode of re-decomposition}.
In Figure~\ref{fig::code_q_compos}, we re-draw $\Qsf$ and describe its decomposition.
Step~1 extracts the unique column indices using $\text{set}$ function in Python, and their quantity is $\ncol$.
Then, the corresponding row indices are extracted by comparing whether the neighboring column indices are identical.
Step~2 constructs $\sigma_1$ and $\sigma_2$ by keeping the first $\ncol$ elements and padding the elements in numerical order to a permutation in $\sigma_1\in\Sbb_n,\sigma_2\in\Sbb_t$.
The code of decomposing $\Qsf$ is outlined below.
To derive $\sigma_4$, $\nrow$, and $\sigma_5$, the $\Psf$-type matrix decomposition follows a similar logic.

\begin{figure}[!t]
	\centering
	\includegraphics[width = 0.42\textwidth]{./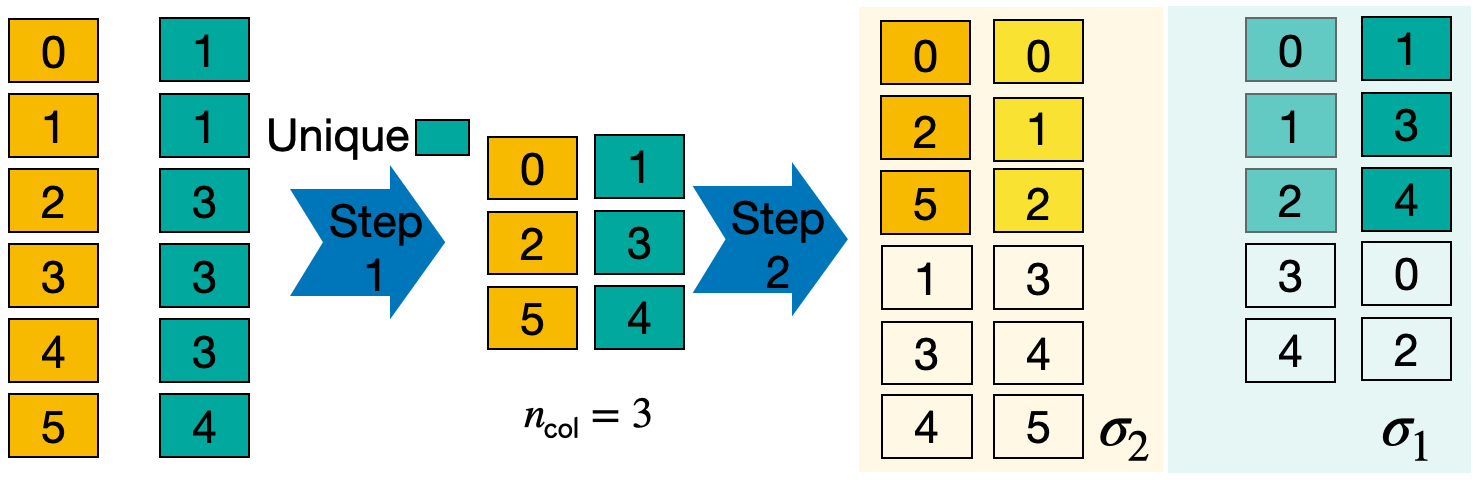}
	\caption{Illustration of $\Qsf$ Decomposition}
	 \label{fig::code_q_compos}
\end{figure}

\begin{frame}{\texttt{def decompose\_Q(Q, e, n):}}
 \\\indent\texttt{unique\_col\_ids = set(Q[:,1])} 
 \\\indent\texttt{step\_row\_ids=[Q[i,0] for Q[i, 1)!=Q[i-1, 1) or i=0]}
 \\\indent\texttt{k2 = len(unique\_col\_ids)}
 \\\indent\texttt{sigma1 = [ (i, unique\_col\_ids[i]	) for i in range(k2)]}
 \\\indent\texttt{sigma2 = [(step\_row\_ids[i], i) for i in range(k2)] }
\\\indent\texttt{sigma1=pad\_perm(sigma1, n)} 
\\\indent\texttt{sigma2=pad\_perm(sigma2, e)} 
\\\indent\texttt{return sigma2, k2, sigma1}
\end{frame}

The \texttt{class PrivateSparseMatrix} below contains the realization of \osmm protocols.
Before secure training, the graph owner locally decomposes $\Asf$ into $\Psf$, $\Qsf$, and then into the corresponding basic operations.
\osmm let $\pp_0$ and $\pp_1$ jointly execute secure multiplications on $\Psf$ and $\Qsf$, and $\ssp$ on $\sigma_3$.

\vspace{2mm}
\noindent
\begin{frame}{\texttt{class PrivateSparseMatrix:}}
\\\indent\texttt{def \_\_init\_\_(self, A ...):}
\\\indent\indent\texttt{self.owner = get\_device()}
\\\indent\indent\texttt{with tf.device(self.owner):}
\\\indent\indent\indent\texttt{P, s3, Q = decompose\_row\_column()}
\\\indent\indent\indent\texttt{s5, k4, s4 = decompose\_P()}
\\\indent\indent\indent\texttt{s2, k2, s1 = decompose\_Q()}
\\\indent\texttt{def sm\_2(self, x: Union[PrivateTensor,} \\\indent\indent\indent\indent\ \indent\indent\indent\indent\texttt{SharedPair]):}
\\\indent\indent\texttt{Qx = Q\_mult(s2, k2, s1, e, x)}
\\\indent\indent\texttt{x3 = s3.act(Qx)}
\\\indent\indent\texttt{Ax = P\_mult(s5, k4, s4, x3, m)}
\\\indent\indent\texttt{return Ax}

\end{frame}

{Class of Secure GCN.}
The implementation of \cgnn follows the plaintext-training repository (\url{https://github.com/dmlc/dgl}) in the transductive setting.
Accordingly, the graph decomposition can be performed once with the fixed graph before secure training.
The \texttt{class SGCN} inherits the conventional \texttt{NN} (the template of neural network).
Secure GCN training is thus composed of secure graph convolution and secure activation layers.
The $\Asf\Xsf$ in graph convolution layers is realized by the $\Pi_\smm$ protocol.
Below, we extract the code of implementing the \texttt{class SGCN} with respect to the plaintext GCN.
Notably, all the inputs \texttt{feature}, \texttt{label}, and \texttt{adj\_matrix} are secret shares in the form of fixed-pointed numbers.
The functions \texttt{GraphConv, ReLU, SoftmaxCE} are the MPC protocols executed by two parties.

\begin{frame}{\texttt{class SGCN(NN):}}
\\\texttt{def \_\_init\_\_(self, feature: PrivateTensor, 
\\\indent\indent\indent\indent\indent\indent label: PrivateTensor, 
\\\indent\indent\indent\indent\indent\indent dense\_dims: List[int],
\\\indent\indent\indent\indent\indent\indent adj\_matrix: Union[...],
\\\indent\indent\indent\indent\indent\indent train\_mask, loss=...):}
\\\indent\texttt{super(SGCN, self).\_\_init\_\_()}
\\\indent\texttt{layer = Input(dim, feature)}
\\\indent\texttt{self.addLayer(layer)}
\\\indent\texttt{input\_layers = [layer]}
\\\indent\texttt{for i in range(1, len(dense\_dims)):}
\\\indent\indent\texttt{layer = GraphConv()}
\\\indent\indent\texttt{self.addLayer(ly=layer)}
\\\indent\indent\texttt{if i < len(dense\_dims) - 1:}
\\\indent\indent\indent\texttt{layer = ReLU()} 
\\\indent\indent\indent\texttt{self.addLayer(ly=layer)}
\\\indent\texttt{layer\_label = Input(dim, label)}
\\\indent\texttt{self.addLayer(layer\_label)}
\\\indent\texttt{if loss == "SoftmaxCE":} 
\\\indent\indent\texttt{layer\_loss = SoftmaxCE()}
\\\indent\indent\texttt{self.addLayer(ly=layer\_loss)}
\\\indent\texttt{else:}
\\\indent\indent\texttt{...\# use other layer/loss}
\end{frame}

\section{Selection-Multiplication's Correctness}
\label{sec:proof_lemma}
\begin{proof}[Proof of Lemma~\ref{lem::sxb}]
We analyze the two cases of $s=0$ and $s=1$ for a complete proof where $s \in \{0, 1\}$.

(i). When $s=0$, the equivalence of the first property turns to be $f(0,x+u)=f(0,x)+f(0,u) \Leftrightarrow 0\cdot(x+u) = 0\cdot x+ 0\cdot u \Leftrightarrow 0=0$.
When $s=1$, we have $f(1,x+u)=f(1,x)+f(1,u) \Leftrightarrow 1\cdot(x+u) = 1\cdot x+ 1\cdot u \Leftrightarrow x+u=x+u$.

(ii). When $s=0$, the 
second property 
becomes $f(0+b,x)=f(0,x)+(-1)^0f(b,x) \Leftrightarrow bx = 0\cdot x+ 1\cdot bx \Leftrightarrow bx = bx$.
When $s=1$, we have $f(1+b,x)=f(1,x)+(-1)^1f(b,x) \Leftrightarrow (1+b)\cdot x = 1\cdot x - bx$.
If $b=0$, we get $x=x$.
If $b=1$, we get $0=0$ since $(1+b)\!\!\!\!\mod 2=0$.
\end{proof}
\section{Security Analysis}
\label{app::fullproof}
We prove the semi-honest security of our protocols under the real/ideal-world simulation paradigm~\cite{sp/17/Lindell17} with a hybrid argument.
As our protocols satisfy the stand-alone model without malicious assumption, we adopt the standard simulation proof technique instead of the UC framework that adds an additional ``environment" representing an interactive distinguisher.

We consider the $2$PC executed by $\pp_0$ and $\pp_1$ in the presence of static semi-honest adversaries $\A$ that control one of the parties at the beginning, follow the protocol specification, and try to learn information about the honest party's private input.
Definition~\ref{def:semidef} states the semi-honest security so that simulated and real execution are computationally indistinguishable ("$\equiv$") for~$\A$.
That is, the simulator $\Scal$ can generate the view of a party in the execution, implying the party learns nothing beyond what they can derive from their input and prescribed output.
For simplicity, we assume $\mathsf{PRF}$ to be secure and exclude its standard proof here.

\begin{restatable}[Semi-honest Security~\cite{sp/17/Lindell17}]{definition}{semidef}
\label{def:semidef} 
	Let $\lambda$ be a security parameter.
	A protocol $\Pi$ securely realizes a functionality $\F = (\F_0, \F_1)$ on input $\l x \r = (\l x\r_0, \l x\r_1)$ against 
	static semi-honest adversaries if there exist PPT simulators $\Scal_0, \Scal_1$ s.t.
 \begin{equation*}
     \begin{aligned}
         \{\Scal_0(1^\lambda, \l x\r_0, \F_0), \F(\l x \r)\} \equiv \{\view_0^\Pi, \output^{\Pi, \lambda}(\l x \r)\},\\
         \{\Scal_1(1^\lambda,\l x\r_1,\F_1), \F(\l x \r)\} \equiv \{\view_1^\Pi,\output^{\Pi,\lambda}(\l x \r)\}.
     \end{aligned}
 \end{equation*}
\end{restatable}

\subsection{Security of \texorpdfstring{$\Pi_\ssp$}{πOP}}
\label{sec::detail_op}

We divide the analyses into `raw' and `shared' cases.
Functionality~\ref{func::ssp} presents the ideal functionality $\F_\ssp$.
It contains two cases in which the input vector/matrix $\Xsf$ is owned by one party or secret-shared among two parties.
The functionality $\F_\ssp$ of both cases outputs additive shares of permutation over~$\Xsf$, \ie, $\l \sigma\xvec\r_0 +\l \sigma\xvec\r_1 =\sigma\xvec$.

\setcounter{algorithm}{0}

\begin{functionality}[!t]
	\caption{$\F_\ssp$: Ideal Functionality of $\Pi_{\ssp}$}\label{func::ssp} 
\begin{algorithmic}[1]
	\item[\textbf{Parameter:} Type of input $\type \in \{\raw, \shared\}$.]
	\REQUIRE $\sigma\in\Sbb,\xvec\in \Zbb^m$ if $\type == \raw$; \\ ~~~~~otherwise $\sigma\in \Sbb,\l\xvec\r_0\in\Zbb^m, \l\xvec\r_1\in\Zbb^m$.
	\ENSURE $\l \sigma\xvec\r_0,\l \sigma\xvec\r_1$.
	\IF{$\type == \shared$} 
	\STATE Reconstruct $\Xsf = \l \xvec \r_0+\l \xvec \r_1$
	\ENDIF
	\STATE Compute and generate random shares of $\sigma\xvec$
	\RETURN $\l \sigma\xvec \r$
\end{algorithmic}
\end{functionality}

\begin{theorem}
\label{the::ssp}
	The protocol $\Pi_\ssp$ securely realizes the ideal functionality $\F_\ssp$ against static semi-honest adversaries.
\end{theorem}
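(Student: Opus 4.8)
The plan is to exhibit, for each of the two possible corruptions and each value of $\type$, a PPT simulator matching Definition~\ref{def:semidef}, and to establish indistinguishability through a short hybrid sequence. First I would pass to a hybrid world in which the $\prf$ outputs ($\pi, \l\pi\uvec\r_0$ held by $\pp_0$ and $\uvec, \l\pi\uvec\r_1$ held by $\pp_1$) are replaced by truly uniform samples of the same type. This single step is justified by the assumed security of $\prf$ and incurs only a negligible distinguishing gap, after which all correlated randomness may be treated as information-theoretically uniform subject only to the relation $\l\pi\uvec\r_0 + \l\pi\uvec\r_1 = \pi\uvec$.

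For a corrupted $\pp_0$, its view consists of its input ($\sigma$, plus $\l\xvec\r_0$ in the $\shared$ case), the offline values $\pi$ and $\l\pi\uvec\r_0$, the single incoming message ($\delta_\xvec = \xvec - \uvec$ in the $\raw$ case, or $\delta_{\l\xvec\r_1} = \l\xvec\r_1 - \uvec$ in the $\shared$ case), and its output share $\l\sigma\xvec\r_0$. Since $\uvec$ is uniform and hidden from $\pp_0$, the incoming message is a one-time-pad encryption and is therefore uniform from $\pp_0$'s viewpoint. The simulator $\Scal_0$ thus samples $\pi$ uniformly from $\Sbb_k$ and $\l\pi\uvec\r_0$ uniformly, and then sets the incoming message to the unique value consistent with the prescribed output, namely (in the $\raw$ case) $\delta_\xvec := \sigma^{-1}\!\left(\l\sigma\xvec\r_0 - \delta_\sigma\,\l\pi\uvec\r_0\right)$ with $\delta_\sigma = \sigma\pi^{-1}$, and analogously in the $\shared$ case. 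I would then check that this reproduces the hybrid distribution exactly: in the real execution the pair (incoming message, $\l\pi\uvec\r_0$) is uniform subject to a single linear constraint fixing $\l\sigma\xvec\r_0$, so conditioning on the given output leaves $\l\pi\uvec\r_0$ uniform and the message determined, which is precisely what $\Scal_0$ produces.

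For a corrupted $\pp_1$, the view consists of its input ($\xvec$ or $\l\xvec\r_1$), the offline values $\uvec$ and $\l\pi\uvec\r_1$, the incoming permutation $\delta_\sigma = \sigma\pi^{-1}$, and the output $\l\sigma\xvec\r_1 = \delta_\sigma\,\l\pi\uvec\r_1$. Here $\delta_\sigma$ is masked by the uniform $\pi^{-1}$, hence uniform over $\Sbb_k$ and independent of $\sigma$; the simulator $\Scal_1$ samples $\uvec$ and $\delta_\sigma$ uniformly and back-solves $\l\pi\uvec\r_1 := \delta_\sigma^{-1}\,\l\sigma\xvec\r_1$ to match the prescribed output. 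The same conditioning argument then shows the simulated and hybridized real views coincide. Since $\pp_1$'s outgoing messages are deterministic functions of its input and randomness, they need not be simulated separately.

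The routine parts are the algebraic consistency checks, which reduce to the correctness identities already verified in the protocol's ``Correctness'' paragraph. The main obstacle I anticipate is the joint-distribution matching rather than the marginal uniformity: one must argue that the simulator's strategy of fixing one output-consistent value after sampling the others yields exactly the real conditional distribution. This is where the protocol structure is essential---each party receives only a \emph{single} masked message and holds exactly one piece of correlated randomness entangled with the output through one linear (respectively, permutation) relation, so the constraint removes exactly one degree of freedom and the remaining randomness stays uniform, which is what makes the back-solving step distribution-preserving.
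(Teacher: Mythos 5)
Your proposal is correct and follows essentially the same route as the paper's proof: a hybrid step invoking $\prf$ security to treat the correlated randomness as truly uniform, followed by the observation that each party's single incoming message is a one-time-pad-style masking ($\delta_\xvec$ or $\delta_{\l\xvec\r_1}$ by $\uvec$, and $\delta_\sigma$ by $\pi^{-1}$) and is therefore uniform in the corrupted party's view. Your treatment is in fact somewhat more careful than the paper's: you explicitly construct the simulators by back-solving the one remaining degree of freedom to match the prescribed output share and argue the joint (view, output) distribution via conditioning, whereas the paper's $\hyb_2$ step asserts indistinguishability from the marginal uniformity of the masked values without spelling out this output-consistency check.
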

 
\begin{proof}
	We define the following $\ideal$ and $\real$ experiments:
	\begin{equation*}
			\begin{aligned}
			\real^{1^\lambda,\A}_{\Pi_\ssp}=&\ \{\{(\view_0^{\Pi,\raw},\l\sigma\Xsf\r_0), (\view_1^{\Pi,\raw},\l\sigma\Xsf\r_1)\} \text{ or }\\
			&\ \ \ \{(\view_0^{\Pi,\shared},\l\sigma\Xsf\r_0), (\view_1^{\Pi,\shared},\l\sigma\Xsf\r_1)\}
			\}\\
			\ideal^{1^\lambda, \A}_{\Scal, \F}=&\ \{\{\Scal(\raw,1^\lambda,\sigma,\Xsf,\F_\ssp),\F_\ssp(\sigma,\Xsf)\}\text{ or }\\
			&\ \ \ \{\Scal(\shared,1^\lambda,\sigma,\l\Xsf\r_0,\l\Xsf\r_1,\F_\ssp),
			\\&\quad\F_\ssp(\sigma,\l\Xsf\r_0,\l\Xsf\r_1)\}\}
		\end{aligned}
	\end{equation*}
	where $\pp_0$'s view is either $\view_0^{\Pi,\raw}$, which is
	$(\sigma, \pi, \l \pi\uvec\r_0, \delta_{\sigma}, \delta_{\Xsf})$ or $\view_0^{\Pi,\shared}$,
	which is $(\sigma, \l \Xsf\r_0, \pi, \l \pi\uvec \r_0, \delta_{\sigma}, \delta_{\l\Xsf\r_1})$,
	and $\pp_1$'s view is either $\view_1^{\Pi,\raw}=(\Xsf, \l \pi\uvec \r_1,\delta_{\sigma},\delta_{\Xsf})$ or $\view_1^{\Pi,\shared}=(\l\Xsf\r_1, \l \pi\uvec \r_1,\delta_{\sigma},\allowbreak\delta_{\l\Xsf\r_1})$.
	$\real^{1^\lambda,\A}_{\Pi_\ssp}$ represents real protocol execution.
	In the $\ideal$ world, the simulators $\Scal=\{\Scal_0,\Scal_1\}$ can indistinguishably simulate the view of each honest party in the protocol given only that party's input.
 
	Now, we argue that $\real^{1^\lambda,\A}_{\Pi_\ssp}\equiv\ideal^{1^\lambda, \A}_{\Scal, \F}$ for any PPT $\A$ using the multi-step hybrid-argument technique.
	 \begin{itemize}
		 \item[$\hyb_0$:] It is identical to the real protocol execution $\real^{1^\lambda,\A}_{\Pi_\ssp}$.
		 \item[$\hyb_{1}$:] It is identical to $\hyb_0$ except that $\delta_\sigma,\delta_{\Xsf}$ are randomly generated for the case of $\raw$ and $\delta_\sigma,\delta_{\l\Xsf\r_1}$ are randomly generated for the case of $\shared$.
		 \\
		 i) In the first case that $\Xsf$'s type is $\raw$, any PPT $\A$ cannot distinguish $\delta_\sigma,\delta_{\Xsf}$ in $\real^{1^\lambda,\A}_{\Pi_\ssp}$ experiment and $\Tilde{\delta}_\sigma,\Tilde{\delta}_{\Xsf}$ in $\hyb_1$ since $\delta_\sigma,\delta_{\Xsf}$ are computed by $\pi,\uvec$, which are generated by $\prf$.
		 If $\A$ can distinguish $\hyb_{1}$ and $\real^{1^\lambda,\A}_{\Pi_\ssp}$ with non-negligible advantage, $\A$ can break the security of $\prf$, which contradicts the assumption.
		 \\
		 ii) For the second case that $\Xsf$'s type is $\shared$, any PPT $\A$ cannot distinguish $\delta_\sigma,\delta_{\l\Xsf\r_1}$ in $\real^{1^\lambda,\A}_{\Pi_\ssp}$ experiment and $\Tilde{\delta}_\sigma,\Tilde{\delta}_{\l\Xsf\r_1}$ in $\hyb_1$ since $\delta_\sigma,\delta_{\l\Xsf\r_1}$ are computed by $\pi,\uvec$, which are generated by $\prf$.
		 If $\A$ can distinguish $\hyb_{1}$ and $\real^{1^\lambda,\A}_{\Pi_\ssp}$ with non-negligible advantage, $\A$ can break the $\prf$ security, contradicting the assumption.
		 \\
		 $\Rightarrow \hyb_1\equiv\hyb_0$.
		 \item[$\hyb_{2}$:] It is identical to $\ideal^{1^\lambda, \A}_{\Scal, \F}$, \ie, all $\view_0,\view_1$ of two parties are simulated by $\Scal_0,\Scal_1$.
		 The randomness of $\pi,\l \pi\uvec \r_0, \delta_{\sigma}, \delta_{\Xsf}$ for $\raw$ and $\pi,\l \pi\uvec \r_0, \delta_{\sigma}, \delta_{\l\Xsf\r_1}$ for $\shared$ ensures no non-negligible $\A$'s advantage of distinguishability to $\Scal_0$'s view.
		 Similarly, the randomness of $\l \pi\uvec \r_1, \delta_{\sigma}, \delta_{\Xsf}$ for $\raw$ and $\l \pi\uvec \r_1, \allowbreak\delta_{\sigma}, \delta_{\l\Xsf\r_1}$ for $\shared$ ensures no non-negligible $\A$'s advantage of distinguishability to $\Scal_1$'s view.
		 Now, $\pp_0$ cannot obtain $\pp_1$ inputs, while $\pp_1$ cannot obtain $\pp_0$ inputs since $\{\Scal_i\}_{i\in\{0,1\}}$ cannot obtain $\{\Scal_{1-i}\}_{i\in\{0,1\}}$'s inputs using the $\{\view_i\}_{i\in\{0,1\}}$.\\
		 $\Rightarrow \hyb_2\equiv\hyb_1$.
	 \end{itemize}
Thus, for both cases, 
we have $\hyb_2\equiv\hyb_1\equiv\hyb_0$, equivalent to $\real^{1^\lambda,\A}_{\Pi_\ssp}\equiv\ideal^{1^\lambda, \A}_{\Scal, \F}$ for any PPT semi-honest $\A$.
\end{proof}

\subsection{Security of 
\texorpdfstring{$\Pi_\SM$}{πOSM}
} 
\label{sec::proof_sxb}


Functionality~\ref{func::osm} gives the ideal functionality of $\Pi_{\SM}$, defining the multiplication $sx\in\Zbb_{2^n}$ between 
$s\in\Zbb_2$ and 
$x\in\Zbb_{2^n}$.

\begin{functionality}[!t]
	\caption{$\F_{\SM}$: Ideal Functionality of $\Pi_{\SM}$}\label{func::osm}
\begin{algorithmic}[1]
	\REQUIRE $s\in\Zbb_2, \l x\r_0\in\Zbb_{2^n}, \l x\r_1\in\Zbb_{2^n}$.
	\ENSURE $\l sx \r_0, \l sx \r_1$.
	\STATE Reconstruct $s=\l s\r_0+\l s\r_1$
	\STATE Compute and generate random shares of $sx$
	\RETURN $\l sx \r$
\end{algorithmic}
\end{functionality}

\begin{theorem}
\label{the::osm}
	The protocol $\Pi_\SM$ securely realizes the ideal functionality $\F_\SM$ against static semi-honest adversaries.
\end{theorem}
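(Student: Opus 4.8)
The plan is to reuse the real/ideal simulation template already applied to $\Pi_\ssp$ in Theorem~\ref{the::ssp}, now instantiated for the two views of $\Pi_\SM$ and reduced to the pseudorandomness of $\prf$. First I would fix the views: a corrupted $\pp_0$ holds $\view_0^\Pi=(s,\l x\r_0,b,\l u\r_0,\l bu\r_0,\delta_{\l x\r_1})$, whose only incoming message is $\delta_{\l x\r_1}=\l x\r_1-\l u\r_1$, and a corrupted $\pp_1$ holds $\view_1^\Pi=(\l x\r_1,\l u\r_1,\l bu\r_1,\delta_s)$, whose only incoming message is $\delta_s=s-b$. The two masking facts that carry the argument are: (i) $b\in\Zbb_2$ acts as a one-time pad over $\Zbb_2$, so $\delta_s$ (computed modulo $2$) is uniform and independent of $s$; and (ii) $\l u\r_1\in\Zbb_{2^\bitlen}$ acts as a one-time pad over $\Zbb_{2^\bitlen}$, so $\delta_{\l x\r_1}$ is uniform and independent of $x$ given $\pp_0$'s local share $\l x\r_0$.

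Next I would define the simulators. $\Scal_0$, on input $(s,\l x\r_0)$ and the prescribed output $\l sx\r_0$, samples $\pp_0$'s offline randomness $(b,\l u\r_0,\l bu\r_0)$ and the single incoming message $\delta_{\l x\r_1}$, with enough freedom to make the locally recomputed share $s\delta_x+\delta_s\l u\r_0+(-1)^{\delta_s}\l bu\r_0$ equal $\l sx\r_0$: it programs $\delta_{\l x\r_1}$ when $s=1$ (so $\delta_x$, hence the share, depends on it) and instead programs one offline term (say $\l bu\r_0$) when $s=0$ (where the share is independent of $\delta_{\l x\r_1}$). Either way the reprogrammed quantity is uniform in the real world---$\delta_{\l x\r_1}$ because it is blinded by $\l u\r_1$, and $\l bu\r_0$ because it is a $\prf$ output---so the simulated distribution matches. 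Symmetrically, $\Scal_1$, on input $\l x\r_1$ and output $\l sx\r_1$, samples $(\l u\r_1,\l bu\r_1)$ and a uniform $\delta_s\in\Zbb_2$, then solves the single relation $\delta_s\l u\r_1+(-1)^{\delta_s}\l bu\r_1=\l sx\r_1$ for $\l bu\r_1$, which is uniform in the real execution since $\l bu\r_1=bu-\l bu\r_0$ is blinded by the $\prf$-generated $\l bu\r_0$.

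I would then establish indistinguishability by a short hybrid chain mirroring Theorem~\ref{the::ssp}. Let $\hyb_0$ be the real execution $\real^{1^\lambda,\A}_{\Pi_\SM}$. In $\hyb_1$ I replace the transmitted values ($\delta_s$ and $\delta_{\l x\r_1}$, together with the correlated randomness derived from $\prf$) by freshly sampled uniform values; any distinguisher between $\hyb_0$ and $\hyb_1$ with non-negligible advantage yields a $\prf$ distinguisher, contradicting the assumed $\prf$ security, so $\hyb_1\equiv\hyb_0$. In $\hyb_2=\ideal^{1^\lambda,\A}_{\Scal,\F}$ the views are produced by $\Scal_0,\Scal_1$ as above and paired with the random shares output by $\F_\SM$; by construction the joint distribution of view and output is unchanged, so $\hyb_2\equiv\hyb_1$, yielding $\real^{1^\lambda,\A}_{\Pi_\SM}\equiv\ideal^{1^\lambda,\A}_{\Scal,\F}$ for every PPT semi-honest $\A$.

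The main obstacle I anticipate is not the one-time-pad masking itself but the joint consistency of the simulated view with the prescribed output share, which forces each simulator to \emph{program} one free variable rather than merely sample it uniformly. This is where the structural identity behind $\Pi_\SM$ matters: because the output share depends on the selector through the piecewise map of Lemma~\ref{lem::sxb}, $\Scal_0$ must split its reasoning into $s=0$ and $s=1$, programming different quantities in each case, and I must verify that both cases reproduce exactly the real (uniform) distribution of the reprogrammed value. Checking that this case split, and the analogous reprogramming of $\l bu\r_1$ for $\Scal_1$, induce the correct joint law of $(\view,\output)$ is the delicate step; the additive-sharing correctness established via Lemma~\ref{lem::sxb} guarantees that such a consistent assignment always exists.
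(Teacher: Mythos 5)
Your proposal is correct and follows essentially the same route as the paper's proof: a real/ideal simulation argument with a short hybrid chain that replaces the masked messages $\delta_s$ and $\delta_{\l x\r_1}$ (and the correlated randomness) by uniform values via a reduction to $\prf$ security, then hands the views to simulators $\Scal_0,\Scal_1$. Your explicit treatment of how each simulator must \emph{program} one free variable ($\delta_{\l x\r_1}$ or $\l bu\r_0$ for $\pp_0$, $\l bu\r_1$ for $\pp_1$) to stay consistent with the prescribed output share is a point the paper's proof leaves implicit, and is a welcome refinement rather than a deviation.
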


\begin{proof}
We define the $\ideal$ and $\real$ experiments:
	\begin{equation*}
		\begin{aligned}
			\real^{1^\lambda,\A}_{\Pi_\SM}=&\ \{(\view_0^{\Pi},\l sx\r_0), (\view_1^{\Pi},\l sx\r_1)\} \\
			\ideal^{1^\lambda, \A}_{\Scal, \F}=&\ \{\Scal(1^\lambda,s,\l x\r_0,\l x\r_1,\F_\SM),\\&\ \ \ \ \F_\SM(s,\l x\r_0,\l x\r_1)\}
		\end{aligned}
	\end{equation*}
	where 
	$\view_0^{\Pi}=(s, \l x\r_0, b, \l u \r_0, \l bu \r_0,\delta_s, \allowbreak\delta_{\l x\r_0},\delta_{\l x\r_1},\delta_x)$, and 
	$\view_1^{\Pi}=(\l x\r_1, \l bu \r_1, \l u \r_1, \delta_{\l x\r_1}, \delta_s)$.
	$\real^{1^\lambda,\A}_{\Pi_\SM}$ represents real protocol execution.
	The simulators $\Scal=\{\Scal_0,\Scal_1\}$ are indistinguishably simulating the view of each honest party in the protocol given only that party's input.
 
	Now, we prove that $\real^{1^\lambda,\A}_{\Pi_\SM}\equiv\ideal^{1^\lambda, \A}_{\Scal, \F}$ for any PPT $\A$ with a series of hybrid-arguments.
	The hybrid games can be sequentially formulated as follows.
\begin{itemize}
		 \item[$\hyb_0$:] It is identical to the real protocol execution $\real^{1^\lambda,\A}_{\Pi_\SM}$.
		 \item[$\hyb_{1}$:] It is identical to $\hyb_0$ except that $\delta_{\l x\r_0},\delta_{\l x\r_1},\delta_{x}$ are randomly generated 
		 by $\Scal_0$.
		 Since $\delta_{\l x\r_0},\delta_{\l x\r_1}$ in $\real^{1^\lambda,\A}_{\Pi_\SM}$ experiment are computed by $\l u \r_0,\l u \r_1$, which are outputted by $\prf$.
		 Thus, any PPT $\A$ cannot distinguish $\delta_{\l x\r_0},\delta_{\l x\r_1}$ in $\real^{1^\lambda,\A}_{\Pi_\SM}$ experiment and $\Tilde{\delta}_{\l x\r_0},\Tilde{\delta}_{\l x\r_1}$ in $\hyb_1$, guaranteed by $\prf$'s security.
		 The value of $\delta_x$, added by $\delta_{\l x\r_0}$ and $\delta_{\l x\r_1}$, is also distinguishable to $\Tilde{\delta}_x$ simulated by $\Scal_0$.
		 Overall, if $\A$ can distinguish $\Tilde{\delta}_{\l x\r_0},\Tilde{\delta}_{\l x\r_1},\Tilde{\delta}_x$ with $\delta_{\l x\r_0},\delta_{\l x\r_1},\delta_x$ in $\real^{1^\lambda,\A}_{\Pi_\SM}$ with non-negligible advantage, $\A$ can break the security of $\prf$.
		 \\
		 $\Rightarrow \hyb_1\equiv\hyb_0$.
		 \item[$\hyb_{2}$:] It is identical to $\hyb_1$ except that $\delta_{s}$ are randomly generated.
		 Since $\delta_s$ in $\hyb_1$ experiment are computed by $h$, which are generated by $\prf$.
		 Thus, any PPT $\A$ cannot distinguish $\delta_s$ and $\Tilde{\delta}_{s}$, given the security of $\prf$.
		If $\A$ has the non-negligible advantage to guess the real $s$, then the $\A$ can distinguish $\Tilde{\delta}_{s}$ and $\delta_s$ with non-negligible probability, which breaks $\prf$.
		\\
		 $\Rightarrow \hyb_2\equiv\hyb_1$.
		 \item[$\hyb_{3}$:] It is identical to $\ideal^{1^\lambda, \A}_{\Scal, \F}$, \ie, all the $\view_0,\view_1$ 
		 are simulated by $\Scal_0,\Scal_1$.
		 The randomness of $b, \l u \r_0, \l bu \r_0,\delta_s, \allowbreak\delta_{\l x\r_0},\delta_{\l x\r_1},\allowbreak\delta_x$ 
		 guarantees no non-negligible $\A$'s advantage of distinguishability to $\Scal_0$'s view.
		 Similarly, the randomness of $\l bu \r_1, \l u \r_1, \allowbreak\delta_{\l x\r_1}, \delta_s$ guarantees no non-negligible $\A$'s advantage of distinguishability to $\Scal_1$'s view.
		 Now, $\pp_0$ cannot obtain $\l x\r_1 $, while $\pp_1$ cannot obtain $s,\l x\r_0$ since $\{\Scal_i\}_{i\in\{0,1\}}$ cannot obtain $\{\Scal_{1-i}\}_{i\in\{0,1\}}$'s inputs using $\{\view_i\}_{i\in\{0,1\}}$.\\
		 $\Rightarrow \hyb_3\equiv\hyb_2$.
	 \end{itemize}
So,  $\real^{1^\lambda,\A}_{\Pi_\SM}\equiv\ideal^{1^\lambda, \A}_{\Scal, \F}$ for any PPT semi-honest $\A$.	
\end{proof}

\subsection{Security of 
\texorpdfstring{$\prosmm$}{π(SM)2}
}
Correctness has been checked using theoretical foundation for sparse-matrix in Appendix~\ref{sec:matrix_found_sparse}.
Functionality~\ref{func::mmult} defines arbitrary-sparse matrix multiplication without 
decomposition.

\begin{functionality}[!t]
	\caption{$\F_{\text{\osmm}}$: Ideal Functionality of $\prosmm$}\label{func::mmult}
\begin{algorithmic}[1]
	\REQUIRE $\adjmat\in \Mbb_{m,n}(\Rcal), \feamat\in \Mbb_{n,d}(\Rcal)$.
	\ENSURE $\l \adjmat\feamat \r_0, \l \adjmat\feamat \r_1$.
	\STATE Compute and generate random shares of $\adjmat\feamat$
	\RETURN $\l \adjmat\feamat \r$
\end{algorithmic}
\end{functionality}

\begin{theorem}[Security of $\prosmm$]
\label{secthe::smm}
Let $\Asf$ be a sparse 
matrix and $\Xsf$ be any vector/matrix.
The protocol $\prosmm$ realizes the functionality $\F_{\text{\osmm}}$ 
against static semi-honest adversaries.
\end{theorem}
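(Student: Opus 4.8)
The plan is to prove Theorem~\ref{secthe::smm} by \emph{modular (sequential) composition}, exploiting the fact that $\prosmm$ (Protocol~\ref{fig:osmm}) is a straight-line program whose only interaction consists of eight calls to the sub-protocols $\Pi_\ssp$, $\Pi_\SM$, and $\promult$ (five $\Pi_\ssp$, two $\Pi_\SM$, one $\promult$), interleaved with purely \emph{local} linear maps by the public matrices $\delta_n, \Sigma, \Sigma^\top, \delta_m^\top$. First I would recast $\prosmm$ in the $(\F_\ssp, \F_\SM, \F_{\mathsf{Mult}})$-hybrid model, replacing every sub-protocol invocation by one call to its ideal functionality. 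By the sequential composition theorem for static semi-honest adversaries~\cite{sp/17/Lindell17}, together with the already-established component security (Theorem~\ref{the::ssp} for $\Pi_\ssp$, Theorem~\ref{the::osm} for $\Pi_\SM$, and the folklore security of the Beaver-triple protocol $\promult$), it suffices to exhibit a simulator for the hybrid-model execution; the real-world simulator is then assembled by splicing in the component simulators $\Scal_\ssp, \Scal_\SM, \Scal_{\mathsf{Mult}}$ at each call.

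Next I would build the hybrid-model simulator $\Scal_b$ for a corrupted $\pp_b$ ($b \in \{0,1\}$), given its input ($\adjmat$ if $b=0$, else $\feamat$) and its output share $\l \adjmat\feamat \r_b$. The key structural observation is that \emph{every} functionality in the sequence re-randomizes its output, i.e.\ it returns a fresh random additive sharing of the correct intermediate value (cf.\ Functionalities~\ref{func::ssp} and~\ref{func::osm}). Hence each intermediate output share that $\pp_b$ observes is uniform over $\Zbb_{2^L}$ and independent of everything seen before, while the interleaved public-matrix multiplications are deterministic functions of shares already in the view and introduce nothing new. Thus $\Scal_b$ samples the output share of each of the first seven calls uniformly at random, applies the public local maps exactly as prescribed, and forces the output of the \emph{final} call ($\F_\ssp$ on $\sigma_5$, Line~\ref{osmm::sigma5}) to equal the given $\l \adjmat\feamat \r_b$, thereby threading consistency to the prescribed output.

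Then I would establish indistinguishability through a chain of hybrids $\hyb_0, \hyb_1, \ldots$ that replaces one real component execution by its ideal-plus-simulator counterpart at a time, each step justified by the corresponding component theorem above. Because the genuine re-randomized functionality outputs and the uniformly sampled ones are identically distributed from $\pp_b$'s local view, one obtains $\real^{1^\lambda,\A}_{\prosmm} \equiv \ideal^{1^\lambda, \A}_{\Scal, \F}$ for every PPT semi-honest $\A$. Correctness---that the threaded value is genuinely a sharing of $\adjmat\feamat$---is inherited directly from Theorem~\ref{the::smm_main}, which certifies that applying the ordered sequence $\sigma_5 \delta_m^\top \gammaout \sigma_4 \Sigma^\top \Lambda \sigma_3 \Sigma \sigma_2 \gammain \delta_n \sigma_1$ to $\feamat$ yields $\adjmat\feamat$.

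The main obstacle I anticipate is \emph{not} the composition bookkeeping but the dimension metadata that the protocol inevitably exposes: the sizes of $\sigma_2, \sigma_3, \sigma_4, \Lambda$ are $\numnonzero = \numedge$, so the intermediate matrices fed to the calls (together with the zero-padding and cut-off around the $\Pi_\SM$ invocations) carry $\numedge$-dependent shapes observed by both parties. Strictly, the ideal functionality $\F_{\text{\osmm}}$ (Functionality~\ref{func::mmult}) must therefore be read as also leaking $\numedge$ (and $\numnode$) to the simulator, matching the ``tolerable leakage'' argued for $\prosmm$; I would make this leakage explicit in the functionality so that $\Scal_b$ can instantiate the correct intermediate dimensions. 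A secondary point needing a line of care is checking that multiplying a uniform secret sharing by a fixed public matrix preserves uniformity of $\pp_b$'s share, which holds because such maps act publicly and identically on both parties' shares.
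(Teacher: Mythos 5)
Your proposal follows essentially the same route as the paper: the paper's proof is a three-line appeal to the sequential composition theorem over the $5\ \Pi_\ssp$, $2\ \Pi_\SM$, and $1\ \promult$ subroutines, exactly the modular-composition skeleton you describe. Your version is strictly more detailed — the explicit hybrid-model simulator, the uniformity of re-randomized intermediate shares, and especially the observation that $\F_{\text{\osmm}}$ must be read as leaking $\numedge$ to the simulator are all points the paper leaves implicit (the leakage is only discussed informally in \S\ref{subsec:prosmm}, not in the security proof itself) — so your write-up is a sound and more rigorous rendering of the same argument.
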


\begin{proof}
The $\prosmm$ protocol sequentially call the independent subroutines of $5\ {\Pi}_{\ssp}, 2\ \Pi_\SM$, and $1\ \promult$ protocols that have been proved to be semi-honest secure.
The sequential composition theorem~\cite{joc/Canetti00} guarantees that security is closed under composition.
So, $\prosmm$ is semi-honest secure.
\end{proof}
 
\subsection{Security of 
\texorpdfstring{$\cgnn$}{\textcgnn}
}

\begin{theorem}
\label{secthe::sec_swan}
\cgnn securely realizes the functionality of GCN (Figure~\ref{graph_imple}) against static semi-honest adversaries.
\end{theorem}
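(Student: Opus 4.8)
The plan is to prove security by modular composition, exactly mirroring the argument already used for $\prosmm$ in Theorem~\ref{secthe::smm}. First I would make the functionality of Figure~\ref{graph_imple} precise as an ideal $\F_{\mathsf{GCN}}$: it receives $\adjmat$ from $\pp_0$ and $(\feamat, \Ysf')$ from $\pp_1$ together with the secret-shared weights $\weimat$, and returns fresh secret shares of either the inference output $\mathsf{GCN}(\adjmat,\feamat;\weimat)$ (a single forward pass) or of the updated weights after one training epoch (forward and backward passes). Because \cgnn exploits sparsity, $\F_{\mathsf{GCN}}$ must be parameterized by the admissible leakage $(\numnode, \numedge)$ discussed in Section~\ref{subsec:prosmm}; the simulators are handed these quantities, and the statement asserts that nothing else is revealed.

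Next I would observe that a run of \cgnn is precisely an ordered composition of sub-protocols, each realizing its own ideal functionality against static semi-honest adversaries: the sparse products $\Asf\Xsf$ and $\Zsf^{\top}\Asf^{\top}$ invoke $\prosmm$ (secure by Theorem~\ref{secthe::smm}, itself built from $\Pi_\ssp$ and $\Pi_\SM$ via Theorems~\ref{the::ssp} and~\ref{the::osm}), the dense weight multiplications invoke $\promult$~\cite{crypto/Beaver91a}, and the non-linear layers (ReLU, softmax, cross-entropy, and the Adam update) invoke the secure protocols re-implemented from~\cite{sp/MohasselZ17,ccs/RatheeRKCGRS20,uss/WatsonWP22,acsac/0021ZCPTLY23}, whose semi-honest security I would take as given building blocks. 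The structural fact to verify is that every value exchanged between consecutive sub-protocols is a uniform $2$-out-of-$2$ additive secret share that is never reconstructed in the clear, so that the output of each functionality is a legitimate secret-shared input to the next.

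With this established, the conclusion follows from the sequential (and parallel) composition theorem for semi-honest security~\cite{joc/Canetti00}. Replacing each sub-protocol call by a call to its ideal functionality yields a chain of hybrids $\hyb_0, \hyb_1, \ldots$ that are pairwise computationally indistinguishable; the simulator $\Scal=\{\Scal_0,\Scal_1\}$ is then assembled by concatenating the sub-simulators guaranteed by the component theorems, feeding each one the freshly simulated shares produced at the previous step. Since every intermediate share is uniformly random from the view of a single party, each hybrid step contributes only negligible distinguishing advantage, giving $\real^{1^\lambda,\A}_{\cgnn}\equiv\ideal^{1^\lambda,\A}_{\Scal,\F_{\mathsf{GCN}}}$ for any PPT semi-honest $\A$.

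The main obstacle is not the composition itself, which is mechanical, but the bookkeeping needed to justify that \cgnn genuinely has the modular structure the theorem exploits. Concretely, I would need to check that the batched invocations (the $d$ simultaneous $\Pi_\ssp$ instances per permutation layer and the $nd$/$md$ simultaneous $\Pi_\SM$ instances realizing $\gammain$/$\gammaout$) compose as parallel repetitions of their single-instance functionalities, and that the non-linear \emph{approximations} used for ReLU and softmax never require opening a cleartext intermediate, so they bear on accuracy rather than on the security argument. Once these modularity conditions are confirmed, security of \cgnn is immediate.
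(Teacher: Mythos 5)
Your proposal is correct and follows essentially the same route as the paper: the paper's proof likewise observes that \cgnn is a sequential composition of semi-honest-secure sub-protocols (graph convolution via $\prosmm$, activation layers, etc.) and concludes by the sequential composition theorem of~\cite{joc/Canetti00}. Your version is simply a more explicit elaboration of that argument (spelling out $\F_{\mathsf{GCN}}$, the leakage parameters, and the hybrid chain), which the paper leaves implicit.
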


\begin{proof}
    $\cgnn$ integrates the semi-honest protocols for all elementary operations like graph convolution and activation layers.
To obtain the secure inference or training protocol, we can sequentially compose the relevant protocols.
Correctness and 
security of private inference or training follow the integration of underlying sub-protocols.
By the sequential composition theorem~\cite{joc/Canetti00}, 
$\cgnn$ is semi-honest secure.
\end{proof}

\end{document}